\newcommand{\ua}{u^\alpha}
\newcommand{\ub}{u^\beta}
\newcommand{\Tab}{T^{\alpha\beta}}
\title[The Riemann problem of relativistic Euler system with Synge energy]{The Riemann problem of  relativistic \\ Euler system    with Synge energy}
\author[Ruggeri]{Tommaso Ruggeri}
\address[T.  Ruggeri]{\newline Department of Mathematics and Alma Mater Research Center on Applied Mathematics AM$^2$,
University of Bologna, Bologna, Italy}
\email{tommaso.ruggeri@unibo.it}
\author[Xiao]{Qinghua Xiao}
\address[Q.H. Xiao]{\newline  Wuhan Institute of Physics and Mathematics, Chinese Academy of Sciences, Wuhan 430071, China}
\email{xiaoqh@wipm.ac.cn}
\author[Zhao]{HuiJiang Zhao}
\address[H.-J. Zhao]{\newline School of Mathematics and Statistics, Wuhan University, Wuhan 430072, China}
\email{hhjjzhao@hotmail.com}
\newtheorem{theorem}{Theorem}[section]
\newtheorem{lemma}{Lemma}[section]
\newtheorem{corollary}{Corollary}[section]
\newtheorem{proposition}{Proposition}[section]
\newtheorem{remark}{Remark}[section]
\newtheorem{conjecture}{Conjecture}[section]
\newcommand{\bbr}{\mathbb R}
\numberwithin{equation}{section}
\begin{document}

\date{\today}


\begin{abstract}
In this paper, we study the Riemann problem of relativistic Euler system for rarefied monatomic  and diatomic gases when the constitutive equation for the energy is the  Synge equation that is the only one compatible with the relativistic kinetic theory. The Synge equation is involved with modified Bessel functions of the second kind and this makes the relativistic
Euler system quite complex. Based on delicate estimates of the modified Bessel functions of
the second kind, we provide a detailed investigation of basic hyperbolic properties and the structure
of elementary waves, especially for the structure of shock
waves and in this way, the mathematical theory of the Riemann problem for these relativistic Euler system, which
is analogous to the corresponding theory of the classical ones,  is rigorously provided.

\vspace{.3cm}	
\keywords{Riemann problem, Relativistic Euler Fluid, Synge Energy, Relativistic Kinetic Theory}
\end{abstract}
\maketitle
\centerline{\date}
\tableofcontents
\section{Introduction}
\setcounter{equation}{0}One of the main problems in hyperbolic systems is the {\it Riemann problem}.
This problem was proposed by Riemann considering a gas that is initially separated into two regions by a thin diaphragm.
The gases in the two regions are in different equilibrium thermodynamic states, respectively.
The question raised by Riemann is what happens when the diaphragm is put away.
In literature, by extension of this problem, the Riemann problem deals with every solution of a system of conservation laws in one-space dimension along the $x$ axis when the initial data composed of two different constant states $({\bf u}_L, {\bf u}_R)$ are connected with a jump at $x=0$.

The Riemann problem for hyperbolic conservation systems was completely solved mainly by P. Lax~\cite{Lax73}.
It was shown that the solution of the Riemann problem for hyperbolic systems of conservation laws is a combination of the rarefaction waves, contact waves, and shock waves (see e.g. \cite{Dafermos} and references therein).

A huge literature of the Riemann problem exists, in particular, many numerical results have been obtained by using the Riemann solvers (see e.g., \cite{Toro}).

For the classical Euler system, there have been enormous works (see \cite{Bressan-2000,Dafermos,Glimm-CPAM-1965,Smoller-1994} for instance). For brevity, we only list some of them: the global existence, as well as the sharp decay rate, was obtained for the entropy solutions with small amplitude in the celebrated work of Glimm and Lax \cite{Glimm-Lax-MAMS-1970}; the ``large data" global existence theorem for weak solutions was initiated by Nishida \cite{Nishida-PJA-1968}. It is well known that the Boltzmann equation is related to the systems of fluid dynamics for rarefied gas. This fact is revealed in the works such as \cite{Nicolaenko-Thurber-JM-1975,Caflisch-Nicolaenko-CMP-1982} for the shock profile solutions of the classical Boltzmann equation, \cite{Yu-CPAM-2005,Huang-Wang-Yang-CMP-2010,Huang-Wang-Wang-Yang-SIAM-2013} for the hydrodynamic limits from classical Boltzmann equation to Euler system with waves and \cite{Liu-Yang-Yu-Zhao-ARMA-2006,Ukai-Yang-Yu-CMP-2003,Ukai-Yang-Yu-CMP-2004,Xin-Yang-Yu-ARMA-2012} for the nonlinear stability of waves and boundary layers of the classical Boltzmann equation.

\smallskip

The aim of this paper is to consider the problematic of Riemann problem in the relativistic framework.
Let $V^\alpha$ and $T^{\alpha\beta}$ be the particle-particle flux and energy-momentum tensor, respectively \cite{Anile, Cercignani-Kremer, Groot-Leeuwen-Weert-1980,Taub-1967,Synge}:
\begin{align}\label{EuVT}
V^\alpha := \rho \ua, \quad \Tab := p h^{\alpha\beta} +\frac{e}{c^2} \ua \ub.
\end{align}
Then, the field equations for relativistic single fluid are the conservation of particle numbers and energy-momentum tensor in Minkowski space:
\begin{align}\label{massmomenergy}
& \partial_\alpha V^\alpha = 0,  \qquad
\partial_\alpha T^{\alpha\beta}= 0,
\end{align}
where  $\rho = n m$ is the density, $n$ is the particle number, $m$ is the mass in  rest frame, $u^\alpha \equiv (u^0 = \Gamma c, u^i= \Gamma v^i)$  is the four-velocity vector, $\Gamma=1/  \sqrt{1-{v^2}/{c^2}}$ is the Lorentz factor, $v^i$ is the velocity,  $h^{\alpha \beta} =  \ua \ub /{c^2} - g^{\alpha\beta}$ is the projector tensor, $g^{\alpha\beta}$  is the metric tensor with signature $(+,-,-,-)$, $p$ is the pressure,
\begin{equation}
e= \rho(c^2+\varepsilon) \label{energia-e}
\end{equation}
 is the  energy,  that is the sum of internal energy ($\varepsilon$  is the internal energy density) and the energy in the rest frame, $c$ is the light velocity;  $\partial_\alpha = \partial / \partial x_\alpha$, $x^\alpha \equiv (x^0 = c t, x^i)$ are the space-time coordinates and the greek indices run from $0$ to $4$ while the Latin indices from $1$ to $3$ and, as usual, contract indices indicate summation.

For two dimensional space-time case, the system \eqref{massmomenergy} with \eqref{EuVT} is expressed as
\begin{align} \label{main1}
\begin{aligned}
&  \partial_t\left(\frac{\rho c}{\sqrt{c^2-v^2}}\right) +   \partial_x\left(\frac{\rho cv}{\sqrt{c^2-v^2}}\right)=0, \\
&  \partial_t\left(\frac{(e+p)v}{c^2-v^2}\right)+\partial_x \left(\frac{(e+p)v^2}{c^2-v^2}+p\right)=0,\\
&  \partial_t\left(\frac{(e+p)v^2}{c^2-v^2}+e\right)+ \partial_x \left(\frac{(e+p)c^2v}{c^2-v^2}\right)=0.
 \end{aligned}
\end{align}
We need the constitutive equation
\begin{equation}\label{pre}
p \equiv p(\rho,e)
\end{equation}
 to close the system \eqref{main1}. This is usually obtained, in parametric form, through the thermal and caloric equation of state
\begin{equation}\label{clter}
p\equiv p(\rho,T), \qquad e \equiv e(\rho,T),
\end{equation}
where $T$ is the temperature.

To the system \eqref{main1} with \eqref{pre}  we prescribe the   Riemann initial data
\begin{equation}\label{ini-data}
\mathbf{u}_0(x)=\left\{\begin{array}{cc}\mathbf{u}_L=(\rho_L,~v_L,~e_L),\quad &x<0,\\
\mathbf{u}_R=(\rho_R,~v_R,~e_R),\quad &x>0, \end{array}\right.
\end{equation}
where $\mathbf{u}_L$ and $\mathbf{u}_R$
are two different constant states: $\mathbf{u}_L\neq \mathbf{u}_R$.


In 1948, Taub \cite{Taub-PR-1948} derived the equations (\ref{main1}) for a relativistic fluid and the Rankine-Hugoniot equations of the shock waves assuming as constitutive functions the  pressure  and the internal energy   of polyatomic polytropic classical case:
\begin{equation}\label{state-rarepoly}
p=\frac{k_B}{m} \rho T, \quad \varepsilon=\frac{D}{2}\frac{k_B}{m} T, \quad \rightarrow \quad \varepsilon = \frac{p}{\rho (k-1)}, \quad \rightarrow \quad p=(k-1)(e - \rho c^2),
\end{equation}
where $D = 2/(k-1)$ is related to the degree of freedom and $k=c_p/c_V>1$ is the ratio of specific heats and $k_B$ is the Boltzmann constant.
 Smoller and Temple \cite{Smoller-Temple-CMP-1993} considered  as constitutive  equation  $p=\sigma^2e$ ($\sigma$ is a constant) that substantially corresponds to  the ultra-relativistic regime as we will see later. In this case the authors took into account only the second and third equation of the system \eqref{main1} because the first equation is independent, and established the global existence of entropy solutions to the Cauchy problem with arbitrary initial data of finite total variation (see also Wissman  \cite{Wissman-CMP-2011}). Chen \cite{Chen-CPDE-1995} extended this result to the case of a constitutive equation corresponding to an isentropic classical gas for which $p=\sigma^2\rho^{k}$ and discussed the Riemann problem of the relativistic Euler system (\ref{main1}). The same author in \cite{Chen-ARMA-1997} considered as constitutive equations   \eqref{state-rarepoly} that corresponds to a polyatomic classical gas.
On the other hand, for smooth solutions to  the  ultra-relativistic Euler system in  $(3+1)$-dimensional space-time, Makino-Ukai \cite{Makino-Ukai-JMKU-1995,Makino-Ukai-KMJ-1995} established the local existence of solutions with data
away from vacuum  applying Friedrichs-Lax-Kato's theory, and Lefloch-Ukai \cite{Lefloch-Ukai-KRM-2009} further extended it to the case with vacuum; the singularity formation of smooth solutions was studied by Pan-Smoller \cite{Pan-Smoller-CMP-2006}. For more works about the relativistic Euler system, we refer the interested readers to \cite{Chen-Li-JDE-2004,Chen-Li-ZAMP-2004,Li-Wang-CAM-2006,Ruan-Zhu-NA-2005} and the references therein.

\smallskip

The previous constitutive equations in \cite{Chen-CPDE-1995, Chen-ARMA-1997, Makino-Ukai-JMKU-1995, Makino-Ukai-KMJ-1995,Smoller-Temple-CMP-1993} are too much simplified, either only verified in the ultra-relativistic limit or verified in the classical limit. To have more realistic equations in the relativistic regime, at least for rarefied gas, we need to justify this at the mesoscopic scale using the kinetic theory.
If we take into account the relativistic kinetic framework, we have   the Boltzmann-Chernikov equation:
\begin{equation}\label{B-C}
p^\alpha \partial_\alpha f =  Q,
\end{equation}
where $f\equiv f(x^\alpha,p^\beta)$ is the distribution function, $p^\alpha $ is the four-momentum with the property $p^\alpha p_\alpha=m^2 c^2$, and $Q$ is the collisional term. Taking the first $2$-blocks of the tensorial moments, we have:
\begin{align}\label{moments3}
\begin{split}
&  V^\alpha = m c \int_{\Re^3}  f p^\alpha
\, d \vec{P} \,  \, , \quad T^{\alpha \beta} =
c \int_{\Re^3}   f   p^\alpha
p^\beta \,  \, d \vec{P},
\end{split}
\end{align}
with
\begin{equation*}
d \vec{P} =  \frac{dp^1 \, dp^2 \,
    dp^3}{p^0}.
\end{equation*}
 In the case of non-degenerate gases, the constitutive equations \eqref{clter} can be calculated via kinetic theory with the J\"uttner equilibrium distribution function
 \begin{equation*}
 f_J= \frac{n \gamma}{ K_2(\gamma)} \frac{1}{4 \pi m^3
     c^3} e^{- \frac{\gamma}{mc^2}   u_\beta
     p^\beta},
 \end{equation*}
 as follows:
\begin{align}
&   p =  \frac{m nc^2}{\gamma} = \frac{k_B}{m}\rho T  \, ,\label{press} \\
& e= \frac{n m c^2}{  K_2(\gamma)}  \left[ K_3 \left(\gamma
\right) - \frac{1}{\gamma}  K_2
\left(\gamma   \right) \right], \label{caloric}
\end{align}
where
$K_j(\gamma), (j=0, 1, 2, \ldots,)$ are the modified second order Bessel functions, and
 $\gamma$ is a dimensionless variable defined as
\begin{equation}\label{temp}
\gamma=\frac{mc^2}{k_BT}.
\end{equation}
We recall that a fluid can be considered in a relativistic context if $\gamma$ is very small. This means that the bodies are so hot that the mean kinetic energy of particles becomes comparable with their rest energy or even surpasses that energy or the mass is extremely small. Therefore it is
of considerable interest in several areas of astrophysics and nuclear physics. The two limits $\gamma \rightarrow 0$ and $\gamma \rightarrow \infty$ correspond respectively to the ultra-relativistic limit and classical limit.

The expression of energy \eqref{caloric} is called the Synge  energy  \cite{Synge}. In the classical limit ($\gamma\rightarrow\infty $), by taking into account the expansion of the Bessel functions:
\begin{align*}
\begin{split}
& K_3(\gamma)= \sqrt{\frac{\pi}{2}} \gamma^{-1/2} e^{-\gamma} \left[ 1 +
\frac{35}{8} \, \frac{1}{\gamma} + o \left( \frac{1}{\gamma^2} \right) \right] \, , \notag\\
& K_2(\gamma)= \sqrt{\frac{\pi}{2}} \gamma^{-1/2} e^{-\gamma} \left[ 1 +
\frac{15}{8}\, \frac{1}{\gamma} + o \left( \frac{1}{\gamma^2} \right) \right] \, , \notag
\end{split}
\end{align*}
 the Synge  energy converges to
\begin{equation*}
\frac{e}{\rho}=  c^2 +  \varepsilon, \quad \text{with }\quad \varepsilon=\frac{3}{2}\frac{k_B}{m} T.
\end{equation*}
The  expression  of internal energy $\varepsilon$ shows that both classical and relativistic  kinetic theories are valid only for rarefied monatomic gases. In fact, the usual expression in classical theory of internal energy  (for polyatomic polytropic gas)  is \eqref{state-rarepoly},
where $D= 3 + f^i$ is related to the degrees of freedom of a molecule given by the sum of the space dimension $3$ for the translational motion and the contribution from the internal degrees of freedom $f^i \geq 0$ due to the internal motion (rotation and vibration). For monatomic gases, $D=3$.
In the ultra-relativistic limit  $\gamma \rightarrow 0$, the Synge energy equation \eqref{caloric} converges to $e=3p$.

In this context, we mention the  work of Speck and Strain  \cite{Speck-Strain-CMP-2011} where the local existence of smooth solutions to the Relativistic Euler system derived from relativistic Boltzmann equation was presented with the energy currents method introduced by Christodoulou \cite{Christodoulou-2007}.

Recently, a big effort was made to construct a Rational Extended Thermodynamics (RET) theory, in the classical framework, that goes beyond the monatomic gas case. In fact,  Ruggeri and Sugiyama with coworkers gave a series of papers in these years on this subject and the results are summarized in their recent book \cite{book}.
 Pennisi and Ruggeri generalized this idea to the relativistic framework for a gas with internal structure both in the case of dissipative gas \cite{Pennisi_Ruggeri} and the most simple case of Euler fluid \cite{PR2}.
 They started from the classical ideas for polyatomic gases introduced first by Borgnakke and Larsen \cite{Borg}
 and  proposed a generalized Boltzmann-Chernikov equation that has the same form of  \eqref{B-C}   but has  the extended distribution function $f \equiv f(x^\alpha,p^\beta,\mathcal{I})$,   depending on an extra variable $\mathcal{I}$ that takes
  into account the energy due to the internal degrees of freedom of a molecule.
  The authors considered instead of  \eqref{moments3}, the following moments:
 \begin{align}\label{14n}
 \begin{split}
 &   V^\alpha = m c \int_{\Re^3} \int_0^{+\infty} f p^\alpha
 \phi(\mathcal{I}) \, d \vec{P} \, d \, \mathcal{I}, \\
 & T^{\alpha \beta} =
 \frac{1}{mc} \int_{\Re^3} \int_0^{+\infty} f \left( mc^2 + \mathcal{I} \right) p^\alpha
 p^\beta \, \phi(\mathcal{I}) \, d \vec{P} \, d \,
 \mathcal{I}.
 \end{split}
 \end{align}
 The meaning of $\eqref{14n}_{2}$ is that the energy and the momentum in relativity are
 components of the same tensor and we expect that, besides the energy at rest, there is a contribution
 from the degrees of freedom of the  gas  due to  the internal structure, as in the case of a classical polyatomic gas.
 $\phi(\mathcal{I})$ is the state density of the internal mode, that is, $\phi(\mathcal{I}) \, d  \mathcal{I}$ represents the number of the internal states of a molecule having the internal energy between $\mathcal{I}$ and $\mathcal{I}+d \mathcal{I}$.

In \cite{Pennisi_Ruggeri},  using the Maximum Entropy Principle (MEP),  the  authors found the   equilibrium distribution function  that generalizes the J\"uttner one:
\begin{equation}\label{5.2n}
{f_E= \frac{n \gamma}{A(\gamma) K_2(\gamma)} \frac{1}{4 \pi m^3
c^3} e^{- \frac{\gamma}{mc^2} \left[ \left( 1 + \frac{\mathcal{I}}{m c^2} \right) u_\beta
p^\beta \right]}},
\end{equation}
with $A(\gamma)$ given by
\begin{equation*}
A(\gamma)=  \frac{\gamma}{K_2(\gamma)} \int_0^{+\infty} \frac{K_2( \gamma*)}{\gamma*} \,
\phi(\mathcal{I})  \, d \, \mathcal{I},
\end{equation*}
where
\begin{equation*}
\gamma^*= \gamma + \frac{\mathcal{I}}{k_B T}.
\end{equation*}
The pressure and the energy for polyatomic gases, compatible with the distribution function \eqref{5.2n} are \cite{Pennisi_Ruggeri}:
\begin{align}\label{3n}
\begin{split}
&      {p =  \frac{n m c^2}{\gamma} = \frac{k_B}{m} \rho T}  \, , \\
&   {e= \frac{n m c^2}{A(\gamma) K_2(\gamma)} \int_0^{+\infty} \left[ K_3  (\gamma^*)   - \frac{1}{\gamma^*}  K_2
 (\gamma^*) \right]  \phi(\mathcal{I}) \, d \,\mathcal{I}}.
\end{split}
\end{align}
We remark that the pressure has the same expression for  a monatomic and for a polyatomic gas, while   \eqref{3n}$_2$ is the generalization of the Synge energy to the case of polyatomic gases.
The macroscopic internal energy
 in the classical limit, when $\gamma\rightarrow \infty$,
 converges to the one of a classical polyatomic gas  \eqref{state-rarepoly},
 provided that the measure
 \begin{equation*}
 \phi(\mathcal{I}) = \mathcal{I}^a,
 \end{equation*}
 where the constant
 \begin{equation}\label{aaa}
 a= \frac{D-5}{2}.
 \end{equation}

In the ultra-relativistic limit it was proved in \cite{PR2} that the generalized Synge equation \eqref{3n} for a gas with internal structure coincides with the one postulated by Smoller and Temple and other authors but with a precise value of the constant that is related to the degree of freedom
\begin{equation*}
p = \sigma^2 e , \quad \text{with} \quad \sigma^2 =\left\{\begin{array}{cll} \frac{1}{3}, &\quad  & \forall \,\, -1<a \leq 2,\\ \\
\frac{1}{a+1}, & \quad & \forall \,\, a>2, \end{array}\right.
\end{equation*}
with $a$ given by \eqref{aaa}.

For $a \rightarrow -1$, the polyatomic equations converge to the monatomic ones \cite{carrisi}.
The polyatomic gas theory is very complex, but when $a =0$ we will prove that the integral in \eqref{3n} can be written in an analytical way and this case corresponds to the diatomic gas.

\bigskip
As also noted in \cite{Speck-Strain-CMP-2011}, due to complexity of constitutive equations  (\ref{press}), (\ref{caloric}) (monatomic gas) or (\ref{3n}) with $a=0$ (diatomic gas) , basic issues such as the maps' invertibility between fluid dynamic variables which are expressed as functions of any two of them, and hyperbolicity of the relativistic Euler system are difficult to verify. We will show that despite the relativistic Euler system (\ref{main1}) with state relations (\ref{press}), (\ref{caloric}) (monatomic gas) or (\ref{3n}) with $a=0$ (diatomic gas)  is very complicated, similar results as the Riemann problem of classical Euler system can be obtained.

\smallskip

In order to formulate the main result of the Riemann problem, as in \cite{Smoller-1994}, we define
\begin{eqnarray*}
    &\mathcal{S}_i(\mathbf{u}_L)=\left\{(p,v,S):(p,v,S)
~\text{on~i-shock~waves~from}~\mathbf{u}_L\right\},
\\
&\mathcal{R}_i(\mathbf{u}_L)=\left\{(p,v,S):(p,v,S)
~\text{on~i-rarefaction~waves~from}~\mathbf{u}_L\right\},\quad i=1,3,
\end{eqnarray*}
where $\mathbf{u}_L$ is the left point on the $i-th$ wave $\mathcal{S}_i(\mathbf{u}_L)$ or $\mathcal{R}_i(\mathbf{u}_L)$. And we further define
\begin{eqnarray*}
&\mathcal{S}^p_i(\mathbf{u}_L)=\left\{(p,v):(p,v)\in ~\text{Projection~of~}
\mathcal{S}_i(\mathbf{u}_L) \right\}
,\\
&\mathcal{R}^p_i(\mathbf{u}_L)=\left\{(p,v):(p,v)\in ~\text{Projection~of~}
\mathcal{R}_i(\mathbf{u}_L) \right\} ,\\
&\mathcal{T}_i^p(\mathbf{u}_L)=\mathcal{S}^p_i(\mathbf{u}_L)\cup \mathcal{R}_i^p(\mathbf{u}_L),\quad i=1,3.
\end{eqnarray*}
The principal aim of this paper is to prove the following theorem:
\begin{theorem}\label{main theorem}
For the relativistic Euler system (\ref{main1}), let its constitutive equations be given as in (\ref{press}), (\ref{caloric}) (monatomic gas) or (\ref{3n}) with $a=0$ (diatomic gas)  and its Riemann initial data be given in (\ref{ini-data}). Then a vacuum occurs in the solution of the Riemann problem if
$$\bar{r}_L<\bar{s}_R,$$
where $\bar{r}_L$ and $\bar{s}_R$ are the 1-Riemann invariant and 3-Riemann invariant corresponding to the left state $\mathbf{u}_L$ and right state $\mathbf{u}_R$, respectively.
In the opposite case, namely $\bar{r}_L\geq \bar{s}_R$, the Riemann problem admit a unique solution. As in Figure \ref{curves}, the  $p-v$ plane is divided into four parts by the curves $\mathcal{T}_1^p(\mathbf{u}_L)$ and $\mathcal{T}_3^p(\mathbf{u}_L)$.
\end{theorem}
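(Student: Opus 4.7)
The plan is to follow the classical Lax--Smoller programme for strictly hyperbolic systems, but with all the thermodynamic identities rewritten in terms of the dimensionless parameter $\gamma=mc^2/(k_BT)$ so that the delicate Bessel-function estimates prepared earlier in the paper can be plugged in at the critical monotonicity checks. Before attacking the Riemann problem itself, I would first verify that \eqref{main1} with the Synge closure (or its $a=0$ diatomic analogue) is strictly hyperbolic with three simple characteristic fields: compute the eigenvalues $\lambda_1<\lambda_2<\lambda_3$ of the quasilinear form, show $\lambda_2=v$ is linearly degenerate (the contact wave, carrying jumps in $\rho$ with $p$ and $v$ continuous), and show that $\lambda_{1,3}$ are genuinely nonlinear. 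Genuine nonlinearity reduces to the strict convexity of an appropriate function of $p$ along isentropes, and this is where the first nontrivial Bessel-function estimate enters: one must show that the relativistic sound speed $c_s(\gamma)$ is monotone in $\gamma$ and stays strictly below $c$, uniformly in the monatomic and $a=0$ diatomic cases.

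Next I would construct the forward rarefaction and shock curves $\mathcal{R}_i(\mathbf u_L)$ and $\mathcal{S}_i(\mathbf u_L)$ for $i=1,3$. For the rarefactions one integrates the ODE defined by the right eigenvectors; along the way the 1- and 3-Riemann invariants $\bar r,\bar s$ must be written explicitly (up to a quadrature involving $K_2,K_3$), and one checks that the projections $\mathcal{R}_i^p(\mathbf u_L)$ are strictly monotone curves in the $p$--$v$ plane, emanating from $(p_L,v_L)$ with prescribed direction. For the shocks one uses the Rankine--Hugoniot relations derived from \eqref{main1} to write $v$ as a function of $p$ along the Hugoniot locus through $\mathbf u_L$, then imposes the Lax entropy inequalities to cut out the admissible half-curve. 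One needs to verify that each $\mathcal{S}_i^p(\mathbf u_L)$ is also a strictly monotone curve, that it joins $\mathcal{R}_i^p(\mathbf u_L)$ with second-order tangency at $\mathbf u_L$, and that the composite curve $\mathcal T_i^p(\mathbf u_L)=\mathcal{S}_i^p(\mathbf u_L)\cup\mathcal{R}_i^p(\mathbf u_L)$ is $C^2$ and globally monotone --- $\mathcal T_1^p$ decreasing and $\mathcal T_3^p$ increasing (in $v$ as a function of $p$), or the appropriate opposite convention.

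With the two monotone wave curves in hand the four-region decomposition of the $p$--$v$ plane is automatic: given any right state $\mathbf u_R$, project the backward 3-wave curve $\mathcal T_3^p(\mathbf u_R)$ and seek its intersection with $\mathcal T_1^p(\mathbf u_L)$. Strict monotonicity in opposite senses gives at most one intersection, so uniqueness of the intermediate state (and hence of the whole wave pattern) reduces to an existence criterion: the 1-curve must reach sufficiently far to meet the 3-curve. Computing the asymptotic behaviour of $\mathcal T_1^p(\mathbf u_L)$ as $p\to 0^+$ along the rarefaction branch, the limiting value of $v$ is precisely $\bar r_L$; similarly the backward 3-rarefaction from $\mathbf u_R$ tends to $\bar s_R$. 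Hence intersection exists iff $\bar r_L\geq\bar s_R$, and otherwise the two rarefaction branches terminate at distinct vacuum states, producing the vacuum solution claimed in the theorem.

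The main obstacle, as expected, is \emph{global} monotonicity and convexity of the shock branch, because everything in the Hugoniot relation is coupled through the Synge equation \eqref{caloric} (or its diatomic analogue) and the derivatives $v'(p)$, $\rho'(p)$ along the Hugoniot involve logarithmic derivatives of $K_2,K_3$ that must be signed uniformly in $\gamma\in(0,\infty)$. My strategy is to express $dv/dp$ along the Hugoniot via an implicit differentiation of the relativistic Rankine--Hugoniot system, reduce the sign of $dv/dp$ (and of the second derivative needed for second-order tangency with the rarefaction curve) to an inequality among Bessel quotients of the form $K_3(\gamma)/K_2(\gamma)$, $K_2(\gamma)/K_1(\gamma)$, and then invoke the monotonicity/log-convexity estimates for these quotients established in the earlier sections of the paper. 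The contact discontinuity plays no role in the projection argument since $p$ and $v$ are constant across it, so the four-region picture and the uniqueness statement follow as soon as the two monotonicity properties of $\mathcal T_1^p$ and $\mathcal T_3^p$ are in place.
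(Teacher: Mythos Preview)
Your programme is essentially the paper's own: Sections~3 and~4 establish strict hyperbolicity, genuine nonlinearity of $\lambda_{1,3}$, the Riemann invariants, the Lax conditions, and the crucial global monotonicity $dv/dp<0$ along $\mathcal T_1^p$ and $dv/dp>0$ along $\mathcal T_3^p$ (Propositions~\ref{mono-velo}, \ref{mono-rare} and their diatomic analogues) via the Bessel-quotient estimates of Appendices~2--3, after which Section~5 assembles the wave curves and reads off the vacuum criterion exactly as you describe. Two small corrections to your sketch: genuine nonlinearity is not reduced to monotonicity of $c_s(\gamma)$ but to the sign of $(e+p)e_{pp}-2e_p(e_p-1)$ (Proposition~\ref{genuine}, proved in Appendix~4), and the limiting value as $p\to0^+$ along the 1-rarefaction is $\tfrac12\ln\frac{c+v}{c-v}\to\bar r_L$, i.e.\ $v\to c\tanh(\bar r_L)$ rather than $v\to\bar r_L$ --- though since $\tanh$ is increasing, your intersection criterion $\bar r_L\ge\bar s_R$ is unaffected.
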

\begin{figure}[ht]
	\centering
	\includegraphics[width=0.8\linewidth]{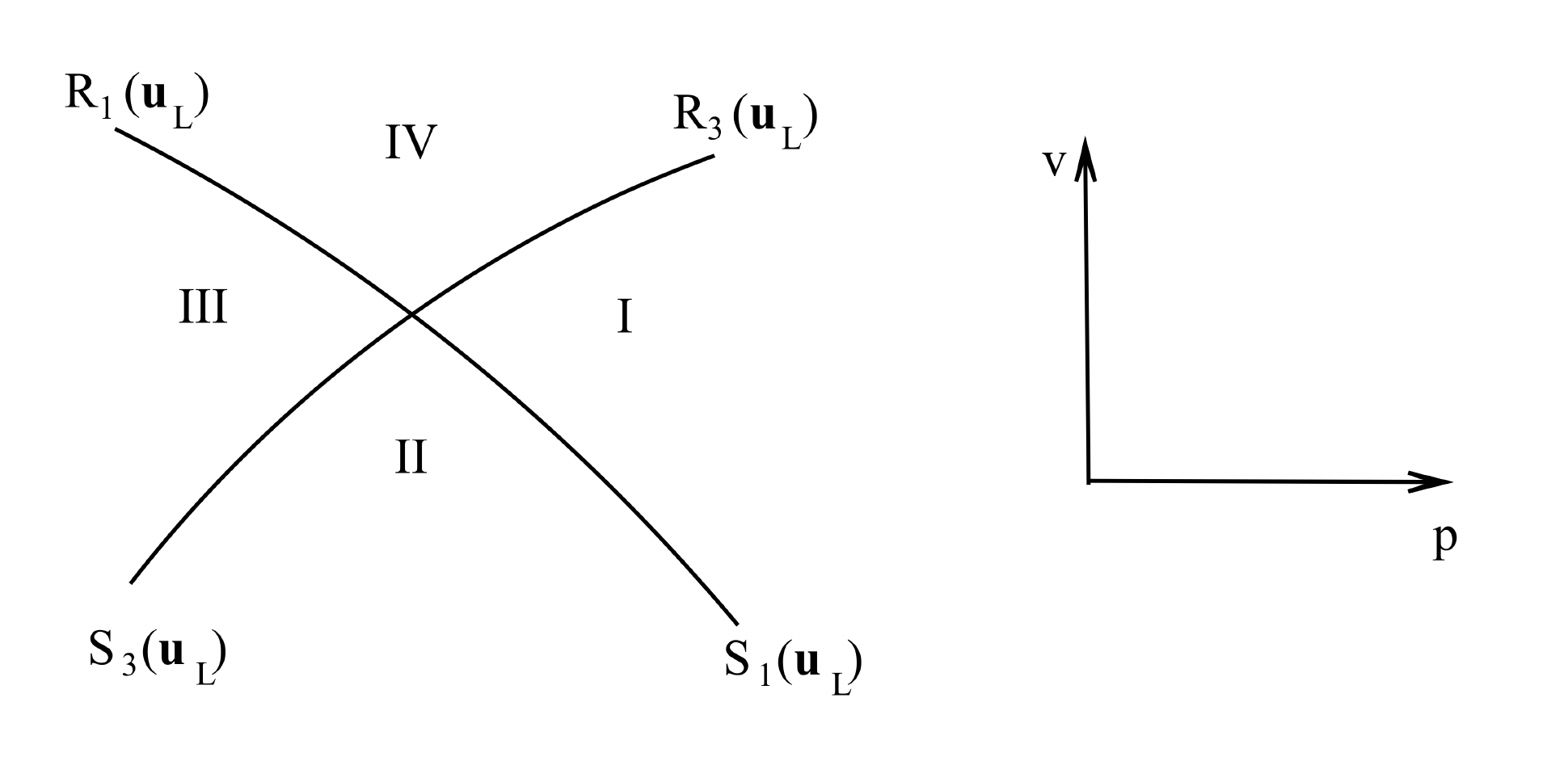}
	\caption{Projected wave curves on the $p-v$ plane.}
	\label{curves}
\end{figure}
Before presenting the problematic of the Riemann problem in these cases, it is convenient to recall first the concept of \emph{entropy principle}, \emph{main field}, \emph{symmetrization} and \emph{entropy growth across the shock} for a general hyperbolic system of balance laws which are essential to the following analysis.

\section{Entropy principle, symmetric form and growth of entropy}\label{Ruggeri-Strumia}
The relativistic system \eqref{massmomenergy} belongs to a general
system of $N$ balance laws for the field $\mathbf{u}(x^\beta) \in \bbr^N$:
\begin{equation}\label{B-1}
\partial_\alpha {\bf F}^\alpha({\bf u})={\bf f}({\bf u}),
\end{equation}
where $\mathbf{F}^\alpha$, $(\alpha=0,1,2,3)$ and $\mathbf{f}$ are column vectors in $\bbr^N$ representing densities-fluxes and production terms, respectively. Now, any theory of continuum needs
to be compatible with the entropy principle which
  requires that system \eqref{B-1} has a natural entropy-entropy flux pair $h^\alpha$  satisfying a supplementary balance law:
\begin{equation}\label{B-2}
\partial_\alpha h^\alpha({\bf u})=\Sigma({\bf u}),
\end{equation}
where $\Sigma $ is the  entropy production term, which is nonnegative according to the second law of thermodynamics. We also assume that $h=h^\alpha \xi_\alpha $ is a convex function of the field $\mathbf{u}=\mathbf{F^\alpha}\xi_\alpha$, where $\xi_\alpha $ is a constant  time-like congruence.
\subsection{Main field and symmetric form}
In \cite{RS81}, Ruggeri and Strumia observed that \eqref{B-1} and \eqref{B-2} form a overdetermined quasilinear hyperbolic system. Thus, in order for any smooth solution of \eqref{B-1} to satisfy the entropy law \eqref{B-2}, the equation \eqref{B-2} must be obtained as a linear combination of the equations of system \eqref{B-1}: there exists a vector $\mathbf{u}'({\bf u}) \in \bbr^N$ such that
\begin{equation}\label{B-3}
\partial_\alpha h^\alpha({\bf u})-\Sigma({\bf u}) \equiv {\bf u}'({\bf u})\cdot(\partial_\alpha {\bf F}^\alpha({\bf u})-{\bf f}({\bf u})).
\end{equation}
Since \eqref{B-3} is an identity, by comparing the differential terms and production  terms, one has:
\begin{equation}\label{B-3-1}
\textup{d}h^\alpha={\bf u}'\cdot \textup{d}{\bf F}^\alpha, \qquad \Sigma= {\bf u}'\cdot {\bf f}\ge0.
\end{equation}
Next, introduce potentials  $h'^\alpha$ defined as follows:
\begin{equation*}
h'^\alpha= {\bf u}'\cdot{\bf F}^\alpha -h^\alpha.
\end{equation*}
Then, it follows from \eqref{B-3-1} that
\[d h'^\alpha=d {\bf u}'\cdot {\bf F}^\alpha.\]
Now, if one chooses ${\bf u}'$ as a new field, one has
\begin{equation}\label{2.6}
{\bf F}^\alpha =\frac{\partial h'^\alpha}{\partial {\bf u}'}.
\end{equation}
Then
\begin{equation} \label{B-3-2}
\partial_\alpha \mathbf{F}^\alpha = \partial_\alpha\left(\frac{\partial h'^\alpha}{\partial {\bf u}'}\right)=\left(\frac{\partial^2 h'^\alpha}{\partial {\bf u}'\partial {\bf u}'}\right)\partial_{\alpha} {\bf u}'.
\end{equation}
Combine \eqref{B-1} and \eqref{B-3-2} to rewrite the original system \eqref{B-1} in the form
\begin{equation}\label{symm}
\left(\frac{\partial^2 h'^\alpha}{\partial {\bf u}'\partial {\bf u}'}\right)\partial_{\alpha} {\bf u}'={\bf f}({\bf u}').
\end{equation}
Since $h'=h'^\alpha \xi_\alpha$ is the Legendre transform of $h= h^\alpha \xi_\alpha $, it follows from \eqref{B-3-1}$_1$ that
\[
dh = d(h^\alpha \xi_\alpha) =  \mathbf{u}' \cdot d \mathbf{u}, \quad \leftrightarrow  \quad \mathbf{u}'=\frac{\partial h} {\partial\mathbf{u}},
\]
i.e. $\mathbf{u}'$ is the dual field of  (multiplying \eqref{2.6} by $\xi_\alpha$)
\[
\mathbf{u}=\frac{\partial h'}{\partial {\mathbf{u}'}}.
\]
We observe that the map $\mathbf{u}'(\mathbf{u}) $ is globall invertible (see \cite{RS81}).
 Then one concludes that the original  system  \eqref{B-1} is expressed as  the form   \eqref{symm} if we choose the field $\mathbf{u}'$. This  is a very  special symmetric  system  according with the Friedrichs definition. In fact all matrices are symmetric and
 \begin{equation*}
 \left(\frac{\partial^2 h'^\alpha}{\partial {\bf u}'\partial {\bf u}'}\right)\xi_\alpha =  \frac{\partial^2 h'}{\partial {\bf u}'\partial {\bf u}'}
 \end{equation*}
 is positive definite.
This result given  in \cite{RS81}  generalizes  Boillat's  symmetrization result  \cite{Boillat} in covariant formalism. This symmetrization holds only for the new field  ${\bf u}'$. This is why this field  ${\bf u}'$ was called  the \emph{main field} by Ruggeri and Strumia \cite{RS81}. The system \eqref{symm} is also frequently called as \emph{Godunov system}, since Godunov was the first one who symmetrizes the Euler system for fluids and physical systems arising from  a variational principle \cite{Godunov}. The interested reader can read a brief history on the symmetrization procedure for a system compatible with an entropy principle in Chapter 2 of \cite{book}.

\subsection{Entropy growth across a shock wave}
Let $\Omega$ be a connected open set of $V^4$ and $ \mathcal{S}$ an hyper-surface cutting $\Omega$ into
two open subsets $\Omega_1$, $\Omega_2$. Let $\phi(x_\alpha) =0 $ be an equation of  $\Omega$ referred to any coordinate frame: we shall identify $\Omega$ with a shock
hyper-surface for the field $\mathbf{u}$.
Then it is known that the Rankine-Hugoniot conditions must hold :
\begin{equation*}
\phi_\alpha \left[[ \mathbf{F}^\alpha(\mathbf{u})\right]] =0,
\end{equation*}
where the square bracket indicates the jump in $\Omega$:
\begin{equation*}
\left[[ \mathbf{F}^\alpha(\mathbf{u})\right]]  = \mathbf{F}^\alpha(\mathbf{u}_L) - \mathbf{F}^\alpha(\mathbf{u}_R).
\end{equation*}
Formally the Rankine-Hugoniot equations are obtained from the field
equations \eqref{B-1} through the correspondence rule
\begin{equation*}
\partial_{\alpha} \quad  \rightarrow \quad \phi_\alpha \left[[ \, \, \, \,\right]], \qquad
\mathbf{f}  \quad  \rightarrow \quad 0.
\end{equation*}
But the previous rule does not hold when it is applied on the supplementary equation \eqref{B-2} since
\begin{equation}\label{etina}
\eta= \phi_\alpha \left[[ h^\alpha\right]],
\end{equation}
is generally non vanishing.
We can decompose $\phi_\alpha = - s\, \xi_\alpha + \zeta_\alpha$ with $\xi_\alpha$ and $\zeta_\alpha$ respectively constant time-like and space-like congruences and $s$ is a shock velocity. Therefore \eqref{etina} becomes
\begin{equation*}
\eta= -s  \left[[ h \right]] +\zeta_\alpha \left[[ h^\alpha\right]].
\end{equation*}
Ruggeri and Strumia proved that if $h$ is convex, then $\eta$ is an increasing function of $s$  and the positive branch (admissible shocks) requires that the shock velocity $s$ is greater than the  corresponding characteristic velocity evaluated in the unperturbed equilibrium state:
\begin{equation*}
\eta >0 \qquad \text{iff} \quad s>\lambda, \quad \text{with} \quad   \lambda=\lambda(\mathbf{u}_R).
\end{equation*}
\subsection{Consequences for  the relativistic Euler system}
In the case of relativistic  Euler fluid \eqref{massmomenergy}, \eqref{EuVT},  the entropy  law is  \eqref{B-2} with
\begin{equation}\label{halfa}
h^\alpha = - \rho S u^\alpha, \qquad \Sigma =0,
\end{equation}
where $S$ is the entropy density satisfying the Gibbs equation:
\begin{equation}\label{Gibbs}
T dS = d \varepsilon -\frac{p}{\rho^2} d\rho.
\end{equation}
The system is symmetric hyperbolic in the
main field \cite{RS81} :
\begin{equation}\label{main_rel}
\mathbf{u}' \equiv \frac{1}{T}   \left(\left(g+c^2\right),-{u_\beta}\right),
\end{equation}
where $g$ is the chemical potential
\begin{equation*}
g=\varepsilon +\frac{p}{\rho} - T S.
\end{equation*}

In the same paper \cite{RS81} (see also \cite{Cime}), it was proved that the convexity of entropy is equivalent that the maximum characteristic velocity in the rest frame satisfies the sub-luminal condition and the specific heat at constant  pressure $c_p$ is positive:
 \begin{equation}\label{condQ}
p_e = \left. \frac{\partial p}{\partial e}\right|_S < 1, \qquad c_p= \frac{k_B}{m} + c_V >0,
\end{equation}
where $c_V = d\varepsilon /d T$ is the specific heat at constant volume.
The two conditions in \eqref{condQ} are equivalent to the hyperbolicity and sub-luminal conditions:
 \begin{equation}\label{convcondi}
0 <p_e  < 1.
\end{equation}
We will prove in the following that for relativistic Euler system with Synge energy, the inequalities  \eqref{convcondi} can be verified. We can conclude that the system of relativistic Euler is symmetric hyperbolic in the main field given by \eqref{main_rel}, the entropy is convex and it grows across a shock wave.
\section{Relativistic rarefied monatomic gas}

In this section, we analyze basic properties of the relativistic Euler system (\ref{main1}) with constitutive equations (\ref{press}), (\ref{caloric}) and (\ref{num-den}). Due to the complexity of the relativistic Euler system and the modified Bessel functions, the analysis of basic properties of (\ref{main1}) such as the strict hyperbolicity and genuine nonlinearity  is far from trivial.

First of all, from the property of the Bessel functions Appendix 1 \eqref{transform}, we can rewrite the constitutive equation \eqref{caloric} as the following form
\begin{equation}\label{ener-den}
e=c^2\rho\frac{K_1(\gamma)}{K_2(\gamma)}+3p = p\left(\gamma\frac{K_1(\gamma)}{K_2(\gamma)}+3\right).
\end{equation}
And it is also convenient to write the expression of $n$ as a function of $\gamma$ and entropy density $S$ (see e.g. \cite{Groot-Leeuwen-Weert-1980, Speck-Strain-CMP-2011}):
\begin{equation}\label{num-den}
n=4\pi e^4m^3c^3h^{-3}e^{\frac{-S}{k_B}}\frac{K_2(\gamma)}{\gamma}e^{\gamma\frac{K_1(\gamma)}{K_2(\gamma)}}.
\end{equation}

\subsection{Characteristic velocities}
The relativistic Euler system \eqref{massmomenergy} or  \eqref{main1} is a particular case of a general system of conservation laws:
\begin{equation}\label{main2}
 \partial_t \mathbf{u} +\partial_x  \mathbf{F}(\mathbf{u})=0,
\end{equation}
with
\begin{align}\label{urrel}
\begin{aligned}
\mathbf{u}&\equiv \frac{1}{c}\left(V^0,T^{01}, c \, T^{00}\right) =
\left(
\frac{\rho c}{\sqrt{c^2-v^2}},
\frac{(e+p)v}{c^2-v^2},
\frac{e c^2+pv^2}{c^2-v^2}\right),\\
\mathbf{F}(\mathbf{u})&\equiv\left(V^1,T^{11},c \, T^{10}\right)
=\left(\frac{\rho c v}{\sqrt{c^2-v^2}},
\frac{(e+p)v^2}{c^2-v^2}+p,
\frac{(e+p) c^2 v}{c^2-v^2}\right).
\end{aligned}
\end{align}
	Eigenvalues of (\ref{main2})  are
	\begin{align}\label{lamdas}
	\begin{aligned}
	&\lambda_1=\frac{(e_{p}(p,S)-1)c^2v-\sqrt{e_{p}(p,S)}c(c^2-v^2)}{e_{p}(p,S)c^2-v^2},\\
	&\lambda_2=v,\\
	&\lambda_3=\frac{(e_{p}(p,S)-1)c^2v+\sqrt{e_{p}(p,S)}c(c^2-v^2)}{e_{p}(p,S)c^2-v^2},
	\end{aligned}
	\end{align}
		with $e_p =1/p_e$.
Eigenvectors $r_i, (i=1, 2, 3)$  corresponding to $\lambda_i$ are
	\begin{align} \label{eigen3}
	\begin{aligned}
	&r_1=\widetilde{r}\left(1,-\frac{\sqrt{e_{p}}(c^2-v^2)}{(e+p)c},0\right),\\
	&r_2=(0,0,1),\\
	&r_3=\widetilde{r}\left(1,\frac{\sqrt{e_{p}}(c^2-v^2)}{(e+p)c},0\right),
	\end{aligned}
	\end{align}
	where $\widetilde{r}$ is an arbitrary scalar function to be determined later.
	Denote $\bar{\lambda} =\lambda/c$ as
	the characteristic velocities \eqref{lamdas} in the unity of light velocity and let $\hat{\lambda}=\bar{\lambda}|_{v=0}$, i.e. the characteristic velocity in the unity of light velocity evaluated in the rest frame. Then, from \eqref{lamdas} and reference \cite{PR2}, we have
	\begin{align}\label{case44}
	\begin{split}
	& \hat{\lambda}_2 =0, \quad  \hat{\lambda}_{1,3}  =\mp \frac{1}{\sqrt{e_p}} =\mp\sqrt{p_e} = \mp \sqrt{ \frac{r+1-r ^\prime \gamma}{(r+1)(r- r^\prime \gamma)}} =\mp \sqrt{\frac{c_p}{c_V}\frac{p}{p+e}},
	  \\
	& \text{with }\quad     r= \frac{e}{p} \quad \mbox{and}  \quad r ^\prime = \frac{dr(\gamma)}{d\gamma}.
	\end{split}
	\end{align}

\subsection{Strict hyperbolicity}
This part is devoted to the proof of the strict hyperbolicity for the system (\ref{main2}). In fact, we have the following proposition:
\begin{proposition} The system \eqref{main2} with constitutive equations \eqref{press}, \eqref{ener-den} and (\ref{num-den})   is strictly hyperbolic.
\end{proposition}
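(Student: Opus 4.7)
The plan is to verify that the three eigenvalues $\lambda_1,\lambda_2,\lambda_3$ displayed in \eqref{lamdas} are real and pairwise distinct throughout the physical regime $\gamma>0$, $|v|<c$. Reality of the square root forces us to prove $e_p(p,S)>0$, and separation of the acoustic eigenvalues from $\lambda_2=v$ (and from each other) will follow once we secure the full sub-luminal chain $0<p_e<1$ of \eqref{convcondi}, i.e.\ $e_p>1$. The entire content of the proposition therefore reduces to a quantitative analysis of $p_e$ as a function of $\gamma$, based on the Bessel-function structure of the Synge energy.

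The first step is to express $p_e$ in closed form in the single variable $\gamma$. Reading off $r(\gamma)=e/p=\gamma K_1(\gamma)/K_2(\gamma)+3$ from \eqref{ener-den} and substituting into the identity
\begin{equation*}
p_e=\frac{r+1-r'\gamma}{(r+1)(r-r'\gamma)}
\end{equation*}
of \eqref{case44} converts the question into an inequality for a rational expression in $K_0,K_1,K_2$; here the standard recurrences $K_1'=-K_0-K_1/\gamma$, $K_2'=-K_1-2K_2/\gamma$ and $K_2-K_0=2K_1/\gamma$ allow one to eliminate the derivatives and rewrite $r'(\gamma)$ cleanly. Using the Bessel-function estimates collected in the appendix (in particular the identity \eqref{transform} and the ratio monotonicities quoted there), I would then prove the two positivity statements $r-r'\gamma>0$ and $r+1-r'\gamma>0$, giving $p_e>0$ at once. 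The strict upper bound $p_e<1$ amounts, after clearing denominators, to the inequality $(r+1)(r-r'\gamma)-(r+1-r'\gamma)>0$, which simplifies to $r(r-r'\gamma-1)+r'\gamma>0$ and is again established from the same Bessel estimates.

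Once $0<p_e<1$, hence $e_p>1$, is in hand, the remainder of the proof is purely algebraic. The denominator $e_pc^2-v^2>0$ throughout the physical domain because $\sqrt{e_p}\,c>c>|v|$, so all three eigenvalues in \eqref{lamdas} are real and finite. A direct rearrangement of \eqref{lamdas} gives
\begin{equation*}
\lambda_3-\lambda_2=\frac{(c^2-v^2)(\sqrt{e_p}\,c-v)}{e_pc^2-v^2}>0,\qquad \lambda_2-\lambda_1=\frac{(c^2-v^2)(\sqrt{e_p}\,c+v)}{e_pc^2-v^2}>0,
\end{equation*}
and $\lambda_3-\lambda_1=2\sqrt{e_p}\,c(c^2-v^2)/(e_pc^2-v^2)>0$, so $\lambda_1<\lambda_2<\lambda_3$ with strict separations everywhere on the state space.

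The main obstacle is the Bessel-function analysis underlying the two-sided bound $0<p_e<1$: the derivative $r'(\gamma)$ intertwines $K_0,K_1,K_2$ in a genuinely nonlinear way, and the required inequalities are not immediate from monotonicity of a single ratio $K_\nu/K_\mu$; they rely on the combined identities and sharp bounds developed in the appendix. Once those analytic estimates are granted, strict hyperbolicity drops out of the eigenvalue formulas without further effort, and the identical argument covers the diatomic case $a=0$ treated later, because $r(\gamma)$ acquires an explicit closed form there as well.
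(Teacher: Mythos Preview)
Your approach is essentially the paper's, and the core Bessel inequality is literally the same. The paper computes $c_V$ directly and shows $c_V>0$ via inequality \eqref{imp-ine1} from Appendix~3; positivity of $p_e$ then drops out of the thermodynamic identity $p_e=\frac{c_p}{c_V}\frac{p}{p+e}$ in \eqref{case44}. Your condition $r-r'\gamma>0$ is, after the substitution $r=\gamma K_1/K_2+3$, exactly $-\bigl[\gamma^2(K_1/K_2)^2+3\gamma K_1/K_2-\gamma^2-3\bigr]>0$, i.e.\ the same statement as $c_V>0$, so the analytic content is identical. The one genuine difference is packaging: the paper proves only $c_V>0$ here and defers the sub-luminal bound $e_p>1$ to Theorem~\ref{sub-luminal} (which rests on the second Bessel inequality \eqref{imp-ine2}), whereas you fold both into the hyperbolicity argument. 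Your choice is arguably the cleaner one, since---as you correctly observe---the eigenvalue separation $\lambda_1<\lambda_2<\lambda_3$ on the \emph{full} physical domain $|v|<c$ really does need $\sqrt{e_p}\,c>c$, not just $e_p>0$; with only $0<e_p<1$ the denominator $e_pc^2-v^2$ would vanish at $|v|=\sqrt{e_p}\,c<c$.

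One small algebra slip: clearing denominators in $p_e<1$ gives
\[
(r+1)(r-r'\gamma)-(r+1-r'\gamma)=r^2-rr'\gamma-1,
\]
not $r(r-r'\gamma-1)+r'\gamma$ as you wrote. The correct reduced inequality is $r(r-r'\gamma)>1$, which is what the paper establishes (in equivalent form) via \eqref{imp-ine2}.
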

\begin{proof}
First we prove that
 \begin {equation}\label {cV}
c_V=\frac{ d\varepsilon}{d T} >0.
 \end{equation}
In fact, from   \eqref{energia-e} and \eqref{temp}, we  have
\begin{align}\label{cV1}
\begin{aligned}
\frac{d\gamma}{dT}=-\frac{\gamma}{T},\qquad
\varepsilon= \rho c^2\left[\frac{K_1 \left(\gamma
\right)}{K_2 \left(\gamma
\right)}+\frac{3}{\gamma}-1\right].
\end{aligned}
\end{align}
Then we use (\ref{cV1}) and Appendix 3 (\ref{imp-ine1}) to have
\begin{align*}
\begin{aligned}
c_V=&\frac{\partial \varepsilon}{\partial T}=\rho c^2\frac{\partial}{\partial \gamma}\left[\frac{K_1 \left(\gamma
\right)}{K_2 \left(\gamma
\right)}+\frac{3}{\gamma}-1\right]\frac{\partial\gamma}{\partial T},\\
=&-\frac{k_B\rho }{  m}  \left[\gamma^2\left(\frac{K_1(\gamma)}{K_2(\gamma)}\right)^2+3\gamma\frac{K_1(\gamma)}{K_2(\gamma)}- \gamma^2-3\right]>0.
\end{aligned}
\end{align*}
 Then the strict hyperbolicity of the system \eqref{main2} with constitutive equations \eqref{press}, \eqref{ener-den} and (\ref{num-den}) follow from \eqref{condQ} and \eqref{case44}.
\end{proof}

%
%
\subsection{Sub-luminal characteristic velocities}
We want to prove in this subsection that characteristic velocities of the system (\ref{main2}) are sub-luminal.
\begin{theorem}\label{sub-luminal}  For the system (\ref{main2}) with relations (\ref{press}), (\ref{ener-den}) and (\ref{num-den}), we have
	\begin{equation}\label{max-lmd}
\bar{\lambda}_{max}=	\bar{\lambda}_3(\gamma,v)<1,\qquad \hat{\lambda}_{max} =   \bar{	\lambda}_3(\gamma,0)<\frac{1}{\sqrt{3}}.
	\end{equation}

\end{theorem}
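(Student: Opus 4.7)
The plan is to reduce the moving-frame bound $\bar{\lambda}_3(\gamma,v)<1$ to the rest-frame bound $\hat{\lambda}_3(\gamma)<1/\sqrt{3}$ via the relativistic composition of velocities, and then to reduce the rest-frame bound to a single inequality on the Bessel ratio $G(\gamma):=K_1(\gamma)/K_2(\gamma)$ that can be handled by the estimates collected in the appendix.

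First I would factor the expression (\ref{lamdas}) for $\lambda_3$. Setting $s=\sqrt{e_p}=1/\hat{\lambda}_3>0$ and $\bar v=v/c$, both numerator and denominator share the factor $s-\bar v$, so
\begin{equation*}
\bar{\lambda}_3 \;=\; \frac{(s^2-1)\bar v+s(1-\bar v^{\,2})}{(s-\bar v)(s+\bar v)} \;=\; \frac{(s-\bar v)(s\bar v+1)}{(s-\bar v)(s+\bar v)} \;=\; \frac{\bar v+\hat{\lambda}_3}{1+\bar v\,\hat{\lambda}_3},
\end{equation*}
which is nothing but the relativistic composition of $v/c$ with $\hat{\lambda}_3$. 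Since $|\bar v|<1$, this identity gives $\bar{\lambda}_3<1\Longleftrightarrow\hat{\lambda}_3<1$, so the first inequality in (\ref{max-lmd}) is subordinate to, and strictly weaker than, the rest-frame claim $\hat{\lambda}_3<1/\sqrt{3}$.

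To handle the latter I would rewrite it, using (\ref{case44}), as $p_e<1/3$, equivalently $3(r+1-r'\gamma)<(r+1)(r-r'\gamma)$, with $r(\gamma)=e/p$ and $r'=dr/d\gamma$. From (\ref{ener-den}) one has $r=\gamma G+3$; differentiating and applying the Bessel recursions $K_{n+1}=K_{n-1}+(2n/\gamma)K_n$ and $K_n'=-\tfrac12(K_{n-1}+K_{n+1})$ yields the closed form $r'=4G+\gamma G^2-\gamma$. Substituting these expressions and expanding out collapses $p_e<1/3$ to the single Bessel-ratio inequality
\begin{equation*}
1+\gamma\,G(\gamma)\bigl(1-G(\gamma)^2\bigr)-4\,G(\gamma)^2 \;>\; 0 \qquad \text{for all }\gamma>0.
\end{equation*}

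The main obstacle is this last inequality. At the ultra-relativistic end $\gamma\to 0$, the estimate $G\sim\gamma/2$ makes the left-hand side tend to $1$, so the bound is harmless there. In the classical limit $\gamma\to\infty$, however, the expansion $G=1-3/(2\gamma)+15/(8\gamma^2)+O(\gamma^{-3})$ forces the left-hand side to behave like $3/(2\gamma)$, so the inequality is asymptotically tight and must be extracted from a genuine first-order estimate of $G$, not from crude monotonicity. I would derive it from the Bessel-function inequalities of the appendix---in particular from a refinement of (\ref{imp-ine1}), the same estimate that yielded $c_V>0$ in the preceding proposition---supplemented if necessary by a Tur\'an-type bound such as $K_1^2<K_0 K_2$ to control the interplay between $\gamma G$ and $G^2$. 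Once the inequality is secured, $\hat{\lambda}_3<1/\sqrt{3}$ follows, and combining with the velocity-composition step gives $\bar{\lambda}_3<1$, completing (\ref{max-lmd}).
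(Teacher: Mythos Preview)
Your argument is correct and lands on the same Bessel inequality as the paper, though the route differs in two places. For the moving-frame bound you use the relativistic velocity-addition identity $\bar\lambda_3=(\bar v+\hat\lambda_3)/(1+\bar v\,\hat\lambda_3)$, whereas the paper simply factors $\bar\lambda_3-1=\dfrac{(\sqrt{e_p}-1)(v-\sqrt{e_p}\,c)(c-v)}{e_p c^2-v^2}$ and reads off the sign from $e_p>1$; your version is more conceptual but both are one-line. For the rest-frame bound the paper computes $e_p$ directly as $3+\dfrac{\gamma^2G^3+4\gamma G^2-\gamma^2G-\gamma}{\gamma G^2+3G-\gamma-4/\gamma}$ and shows the fraction is positive, while you work through $r,r'$ and arrive at $1+\gamma G(1-G^2)-4G^2>0$; these are the same computation in slightly different packaging.

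Two small points. First, when you pass from $p_e<1/3$ to $3(r+1-r'\gamma)<(r+1)(r-r'\gamma)$ you are implicitly using $(r+1)(r-r'\gamma)>0$; this holds because $r-r'\gamma=-(\gamma^2G^2+3\gamma G-\gamma^2-3)$, which is positive precisely by (\ref{imp-ine1}). Second, your final inequality is not a ``refinement'' of (\ref{imp-ine1}) that still needs to be engineered: multiplying it by $-1$ gives $\gamma G^3+4G^2-\gamma G-1<0$, which is \emph{exactly} the appendix estimate (\ref{imp-ine2}). So no Tur\'an-type supplement is required---you already have the tool in hand.
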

\begin{proof}
			From (\ref{press}), \eqref{ener-den} and (\ref{num-den}), we obtain
	\begin{equation}\label{p0}
	\begin{split}
	&p=4\pi e^4m^4c^5h^{-3}e^{\frac{-S}{k_B}}\frac{K_2(\gamma)}{\gamma^2}e^{\gamma\frac{K_1(\gamma)}{K_2(\gamma)}},\\
	&e=p\left(\gamma\frac{K_1(\gamma)}{K_2(\gamma)}+3\right).
	\end{split}
	\end{equation}
	We use Appendix 1 (\ref{transform}) and (\ref{deriva}) to have
	\begin{equation}\label{p1}
	\begin{split}
	&\frac{d}{d\gamma}\left(\frac{K_2(\gamma)}{\gamma^2}\right)=-\frac{4K_2(\gamma)+\gamma K_1(\gamma)}{\gamma^3},\\
	 &\frac{d}{d\gamma}\left(\gamma\frac{K_1(\gamma)}{K_2(\gamma)}\right)=\gamma\left(\frac{K_1(\gamma)}{K_2(\gamma)}\right)^2+4\frac{K_1(\gamma)}{K_2(\gamma)}-\gamma.
	\end{split}
	\end{equation}
	Applying (\ref{p1}) in (\ref{p0}) yields
	
	$$\frac{\partial_\gamma p}{p}=\gamma\left(\frac{K_1(\gamma)}{K_2(\gamma)}\right)^2+3\frac{K_1(\gamma)}{K_2(\gamma)}-\gamma-\frac{4}{\gamma}, \qquad\mbox{and}$$
	\begin{equation}\label{pz}
	\begin{split}
	e_{p}&=\frac{e}{p}+\frac{p}{\partial_\gamma p}\frac{d}{d\gamma}\left(\gamma\frac{K_1(\gamma)}{K_2(\gamma)}\right)\\
	 &=3+\gamma\frac{K_1(\gamma)}{K_2(\gamma)}+\frac{\gamma\left(\frac{K_1(\gamma)}{K_2(\gamma)}\right)^2+4\frac{K_1(\gamma)}{K_2(\gamma)}-\gamma}{\gamma\left(\frac{K_1(\gamma)}{K_2(\gamma)}\right)^2+3\frac{K_1(\gamma)}{K_2(\gamma)}-\gamma-\frac{4}{\gamma}}\\
	 &=3+\frac{\gamma^2\left(\frac{K_1(\gamma)}{K_2(\gamma)}\right)^3+4\gamma\left(\frac{K_1(\gamma)}{K_2(\gamma)}\right)^2-\gamma^2\frac{K_1(\gamma)}{K_2(\gamma)}-\gamma}{\gamma\left(\frac{K_1(\gamma)}{K_2(\gamma)}\right)^2+3\frac{K_1(\gamma)}{K_2(\gamma)}-\gamma-\frac{4}{\gamma}}>3,
	\end{split}
	\end{equation}
	by  Appendix 3 (\ref{imp-ine1}) and (\ref{imp-ine2}). Then, from \eqref{lamdas} and \eqref{pz}, we have
$$\bar{\lambda}_{max} -1=\frac{(\sqrt{e_{p}}-1)(v-\sqrt{e_p}c)(c-v)}{e_{p}c^2-v^2}<0.$$
 Then we obtain the first inequality in (\ref{max-lmd}).
Moreover, we use (\ref{pz}) to have
 $$\hat{\lambda}_3=\frac{1}{\sqrt{e_p}}=\sqrt{p_e}<\frac{1}{\sqrt{3}}.$$
\end{proof}	
\begin{figure}[ht]
	\centering
	\includegraphics[width=0.6\linewidth,angle=90]{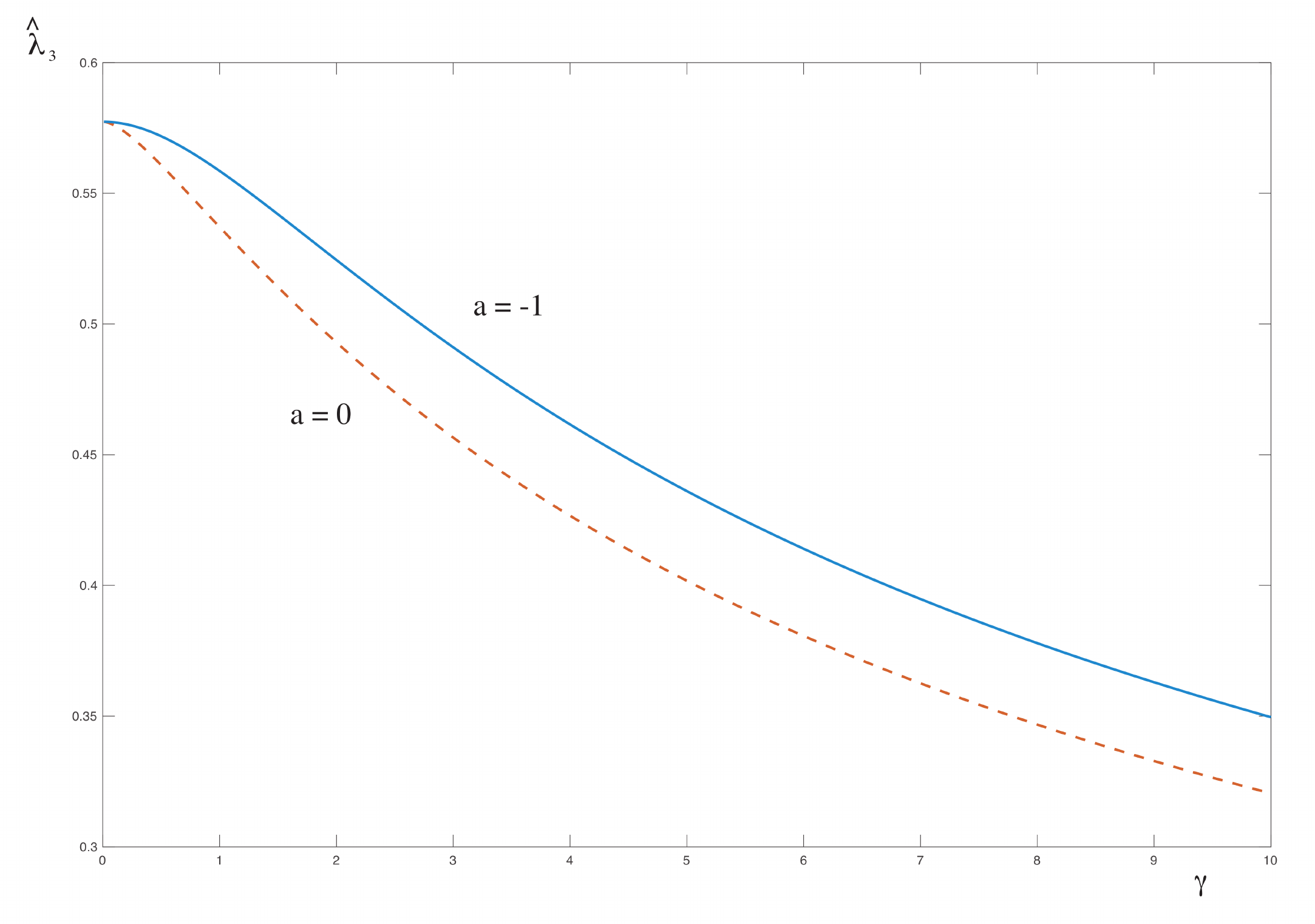}
		\caption{The behavior of $\hat{	\lambda}_3$ versus $\gamma$ for $a=-1$ (Monatomic gas) and for $a=0$ (Diatomic gas).}
	\label{fig:lambdawuhan}
\end{figure}
From \eqref{case44}, we can plot  $\hat{\lambda}_3$ versus $\gamma$   and we see also  from   Figure \ref{fig:lambdawuhan} (with $a=-1$) that in the ultra-relativistic limit ($\gamma \rightarrow 0$), this velocity tends to $1/\sqrt{3}$ and decays monotonically with respect to $\gamma$ .

Therefore we proved that the inequality \eqref{convcondi} can be verified and from the general results of \cite{RS81} addressed in Section \ref{Ruggeri-Strumia}, the following theorem holds:
\begin{theorem}
The system \eqref{main2} satisfies the entropy law \eqref{B-2} with \eqref{halfa} in space-time form:
\begin{equation}\label{entra-cons}
\partial_t \left(\rho S \Gamma \right) + \partial_x \left(\rho S \Gamma v \right)=0.
\end{equation}
The system is symmetric hyperbolic with respect to the main field \eqref{main_rel}:
\begin{align}
\begin{aligned}
\mathbf{u}'&\equiv \frac{1}{T}\left(\frac{e+p}{\rho}-T S, \Gamma v, \frac{\Gamma}{c}\right).\notag
\end{aligned}
\end{align}
The entropy density
\begin{equation*}
h= -  \rho S \Gamma
\end{equation*}
is a convex function with respect to the field  $\mathbf{u}$ given by \eqref{urrel}, and the entropy grows across the shock.
\end{theorem}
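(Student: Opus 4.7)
The plan is to establish the four assertions of the theorem as a direct synthesis of the Ruggeri--Strumia framework from Section \ref{Ruggeri-Strumia} with the thermodynamic and hyperbolicity estimates already verified in the preceding subsections.

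First I would derive the covariant entropy identity $\partial_\alpha(\rho S u^\alpha)=0$ for smooth solutions. Starting from the Gibbs relation \eqref{Gibbs}, I would contract the energy-momentum conservation $\partial_\alpha T^{\alpha\beta}=0$ with $u_\beta/c^2$. Using the decomposition $\Tab=p h^{\alpha\beta}+(e/c^2)\ua\ub$, together with $u_\beta u^\beta=c^2$ and the orthogonality $u_\beta\partial_\alpha u^\beta=0$, this projection yields $u^\alpha\partial_\alpha e + (e+p)\partial_\alpha u^\alpha=0$. Combining this with the particle conservation $\partial_\alpha(\rho u^\alpha)=0$ to eliminate $\partial_\alpha u^\alpha$, substituting $e=\rho(c^2+\varepsilon)$, and invoking \eqref{Gibbs} leaves $\rho u^\alpha\partial_\alpha S=0$, hence $\partial_\alpha(\rho S u^\alpha)=0$. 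Writing out the $(1+1)$-dimensional components with $u^0=\Gamma c$, $u^1=\Gamma v$ produces exactly \eqref{entra-cons}.

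Second I would identify the main field. According to \eqref{B-3-1}$_1$ and the discussion following \eqref{2.6}, the main field is $\mathbf{u}'=\partial h/\partial \mathbf{u}$ where $h=-\rho S \Gamma$ and $\mathbf{u}$ is as in \eqref{urrel}. Using the thermodynamic identity $de=T\,d(\rho S)+g\,d\rho$ with $g=\varepsilon+p/\rho-TS$ and differentiating through the explicit dependence of $(\rho\Gamma,(e+p)v\Gamma^2/c^2,(e\,c^2+pv^2)\Gamma^2/c^2)$ on $(\rho,v,e)$, one recovers after a chain-rule computation the components of $\mathbf{u}'$ stated in the theorem; equivalently, this is the direct $(1+1)$-dimensional reduction of the covariant expression \eqref{main_rel}, and it suffices to check the specialization.

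Third, the convexity of $h$ with respect to $\mathbf{u}$ reduces by the Ruggeri--Strumia criterion \eqref{condQ}--\eqref{convcondi} to the pair of inequalities $0<p_e<1$ and $c_p>0$. The positivity $c_V>0$ was proved in the strict hyperbolicity proposition via the Bessel estimate in Appendix~3, and therefore $c_p=k_B/m+c_V>0$; the upper bound $p_e<1$ is precisely Theorem \ref{sub-luminal}, while $p_e>0$ is contained in the strict hyperbolicity already established. Once convexity is in hand, the symmetric hyperbolic form \eqref{symm} in the field $\mathbf{u}'$ and the entropy growth $\eta>0$ across an admissible shock follow immediately from the general results recalled in Sections \ref{Ruggeri-Strumia}.

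I do not foresee a genuine analytical obstacle: the technical core, namely the Bessel-function inequalities needed to guarantee $c_V>0$ and $0<p_e<1$, has already been discharged in the previous subsections. The only care required is bookkeeping, to ensure that the abstract Ruggeri--Strumia correspondence between $\mathbf{u}$, $\mathbf{u}'$ and $h$ is applied consistently with the sign and normalization conventions used in \eqref{halfa} and \eqref{urrel}, so that the stated form of $\mathbf{u}'$ is matched component by component.
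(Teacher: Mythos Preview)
Your proposal is correct and follows essentially the same route as the paper: both arguments amount to observing that the previously established inequalities $c_V>0$ (hence $c_p>0$) and $0<p_e<1$ verify the convexity criterion \eqref{convcondi}, after which the entropy law, the main-field symmetrization, the convexity of $h$, and the entropy growth across shocks are all direct consequences of the Ruggeri--Strumia theory recalled in Section~\ref{Ruggeri-Strumia}. The paper's proof is in fact just the single sentence preceding the theorem statement, whereas you supply the explicit derivation of $\partial_\alpha(\rho S u^\alpha)=0$ from Gibbs and the specialization of \eqref{main_rel} to $(1{+}1)$ dimensions; these additions are welcome elaborations but not a different strategy.
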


\subsection{Genuine nonlinearity}\label{sec-genui}
In this subsection, we show that $\lambda_2=v$ is linearly degenerate, while $\lambda_1$ and $\lambda_3$ are genuinely nonlinear.

Denote $\nabla\lambda_2$ as the gradient of $\lambda_2$ with respect to $(p, v, S)$. Then $\lambda_2=v$ is linearly degenerate since
\begin{equation}\label{eigvec2}
\nabla\lambda_2\cdot r_2=(0,1,0)\cdot(0,0,1)^T=0.
\end{equation}

We now turn to show that $\lambda_1$ is genuinely nonlinear. Since the genuine nonlinearity of $\lambda_3$  can be done similarly, we omit the details.
The eigenvalue $\lambda_1$ satisfies
\begin{align*}
\begin{aligned}
&\nabla\lambda_1=\left(\frac{\partial\lambda_1}{\partial p},\frac{\partial\lambda_1}{\partial v},\frac{\partial\lambda_1}{\partial S}\right) = \nonumber\\
& \left(\frac{e_{pp}c(c^2-v^2)}{2\sqrt{e_{p}}(e_{p}c^2-v^2)^2}(v+\sqrt{e_{p}}c)^2,\frac{(e_{p}-1)c^2(v+\sqrt{e_{p}}c)^2}{(e_{p}c^2-v^2)^2}, \right. \\
& \qquad \left. \frac{e_{pS}c(c^2-v^2)}{2\sqrt{e_{p}}(e_{p}c^2-v^2)^2}(v+\sqrt{e_{p}}c)^2\right).
\end{aligned}
\end{align*}
Then taking into account of \eqref{eigen3}$_1$,  we have
\begin{equation}\label{genui}
\begin{split}
&\nabla\lambda_1\cdot r_1 =\\
&\frac{\widetilde{r}e_{pp}c(c^2-v^2)}{2\sqrt{e_{p}}(e_{p}c^2-v^2)^2}(\sqrt{e_{p}}c+v)^2-\frac{\widetilde{r}\sqrt{e_{p}}(e_{p}-1)c(c^2-v^2)}
{(e+p)(e_{p}c^2-v^2)^2}(\sqrt{e_{p}}c+v)^2 =\\
& \qquad \frac{\widetilde{r}(\sqrt{e_{p}}c+v)^2c(c^2-v^2)}{2(e+p)\sqrt{e_{p}}(e_{p}c^2-v^2)^2}\left[(e+p)e_{pp}-2e_{p}(e_{p}-1)\right].
\end{split}
\end{equation}

\begin{proposition}\label{genuine} For any $\gamma\in(0,\infty)$, it holds that
	\begin{equation}\label{negativity}
	(e+p)e_{pp}-2e_{p}(e_{p}-1)<0.
	\end{equation}
	Then the eigenvalue  $\lambda_1$ is genuinely nonlinear and we can choose the function $\widetilde{r}$ properly such that:
	\begin{equation}\label{genuine0}
	\begin{split}
	\nabla\lambda_1\cdot r_1=1.
	\end{split}
	\end{equation}
	
\end{proposition}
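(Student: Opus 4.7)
The plan is to reduce the inequality $(e+p)e_{pp}-2e_p(e_p-1)<0$ to a single-variable statement in $\gamma$, exploiting the structural fact, already visible in \eqref{pz}, that $e_p$ is a function of $\gamma$ \emph{alone}. Once this scalar inequality is in hand, the formula \eqref{genui} immediately gives $\nabla\lambda_1\cdot r_1\neq 0$, and the normalization \eqref{genuine0} follows by taking $\widetilde r$ equal to the reciprocal of the (nonzero) scalar factor multiplying the bracket in \eqref{genui}.

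First I would set $G(\gamma):=K_1(\gamma)/K_2(\gamma)$ and use the standard Bessel identities $K_1'(\gamma)=-K_2(\gamma)+K_1(\gamma)/\gamma$ and $K_2'(\gamma)=-K_1(\gamma)-2K_2(\gamma)/\gamma$ to derive the workhorse formula
\begin{equation*}
G'(\gamma)=G^2+\frac{3G}{\gamma}-1.
\end{equation*}
Because $e_p$ depends only on $\gamma$, chain rule yields $e_{pp}=\dfrac{de_p}{d\gamma}\cdot\left.\dfrac{\partial\gamma}{\partial p}\right|_S$, and the factor $\left.\partial\gamma/\partial p\right|_S$ is obtained from $p$ expressed in $(\gamma,S)$ via \eqref{p0}, using the formula $\partial_\gamma p/p=\gamma G^2+3G-\gamma-4/\gamma$ already established in the proof of Theorem \ref{sub-luminal}. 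Differentiating \eqref{pz} with the formula for $G'$ then delivers $de_p/d\gamma$ as an explicit rational expression in $\gamma$ and $G$, so $(e+p)e_{pp}-2e_p(e_p-1)$ reduces to a single rational function of $\gamma$ and $G$.

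Next I would clear denominators. The denominator $\gamma G^2+3G-\gamma-4/\gamma$ has a definite sign, since multiplying by $\gamma$ gives $\gamma^2G^2+3\gamma G-\gamma^2-4=(\gamma^2G^2+3\gamma G-\gamma^2-3)-1$, and the parenthesized quantity is strictly negative by the computation of $c_V$ (using Appendix 3 inequality \eqref{imp-ine1}). Hence the denominator is strictly negative, so multiplying through is safe and preserves the strict inequality up to a sign I track. The claim becomes a polynomial inequality of the form $P(\gamma,G)<0$ in $\gamma$ and $G$.

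Finally, I would verify $P(\gamma,G)<0$ using the Bessel estimates in Appendix 3, in particular \eqref{imp-ine1} and \eqref{imp-ine2}, together with the elementary bound $0<G(\gamma)<1$ coming from $K_1<K_2$ for $\gamma>0$. This last step is where I expect the main obstacle to lie: the reduction down to the polynomial inequality is mechanical, but the naive monotonicity bounds on $G$ are not strong enough to pin down the sign at both asymptotic ends (the ultra-relativistic regime $\gamma\to0^+$ where $G\to0$, and the classical regime $\gamma\to\infty$ where $G\to 1$); one genuinely needs the sharp two-sided estimates provided by Appendix 3 to control the cancellations in $P$. Once \eqref{negativity} is established, choosing
\begin{equation*}
\widetilde r=\frac{2(e+p)\sqrt{e_p}\,(e_pc^2-v^2)^2}{c(c^2-v^2)(\sqrt{e_p}c+v)^2\bigl[(e+p)e_{pp}-2e_p(e_p-1)\bigr]}
\end{equation*}
in \eqref{eigen3}$_1$ immediately yields \eqref{genuine0}, and the genuine nonlinearity of $\lambda_1$ (and, by the symmetric argument sketched in the paper, of $\lambda_3$) is proved.
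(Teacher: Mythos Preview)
Your reduction is exactly the paper's approach: use the chain rule with $\partial_\gamma p/p=\gamma G^2+3G-\gamma-4/\gamma$ to express $e_{pp}$, observe that the denominator has a definite (negative) sign by \eqref{imp-ine1}, and collapse $(e+p)e_{pp}-2e_p(e_p-1)$ to a scalar inequality in $\gamma$ and $G$. The paper carries this out and lands on the equivalent condition $\mathcal I_1(\gamma)>0$ (a quartic in $G=K_1/K_2$), which it then rewrites as $\mathcal I_2(\gamma)>0$ in terms of $K_0/K_1$. Your choice of $\widetilde r$ for the normalization \eqref{genuine0} is also correct.

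The gap is in your last paragraph. The ingredients you name --- \eqref{imp-ine1}, \eqref{imp-ine2}, and $0<G<1$ --- are not enough to pin down the sign of the resulting expression; the cancellations are too delicate. The paper's actual verification does not use Appendix~3 at this stage but rather the sharper two-sided bounds \eqref{rough} and \eqref{acurate} on $K_0/K_1$ from Appendix~2, and even with those it needs a three-range case split: for small $\gamma$ the constant terms in $\mathcal I_2$ dominate; for intermediate $\gamma$ one factors out $\bigl(K_0/K_1-1+1/(2\gamma)\bigr)$ (which is nonnegative by \eqref{rough}) and controls the remainder; for $\gamma>4$ one invokes the refined expansion \eqref{acurate}. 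So while your outline is sound, the step you defer as ``where the main obstacle lies'' is in fact the entire content of the proof, and it does not follow from the estimates you cite.
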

Since the proof of Proposition \ref{genuine} is quite long, we put it in Appendix 4.

%
%
\subsection{Riemann invariants}
In this part, we solve Riemann invariants for each eigenvalue $\lambda_i, (i=1,2,3)$.

Noting the eigenvector of $\lambda_2$ is $(0,0,1)$ in (\ref{eigvec2}), it is trivial to find the two Riemann invariants of $\lambda_2$ are
$$v \quad\mbox{and}\quad p.$$
We now solve the Riemann invariants for $\lambda_1$. According to the definition, the corresponding Riemann invariant $w$ satisfies
\begin{align}\label{Riemann}
\begin{aligned}
0=&(w_p,w_v,w_{S})\cdot r_1\\
=&\tilde{r}_1(w_p,w_v,w_{S})\cdot\left(1,-\frac{\sqrt{e_{p}}(c^2-v^2)}{(e+p)c},0\right)\\
=&\tilde{r}_1\Big[w_p-\frac{\sqrt{e_{p}}(c^2-v^2)}{(e+p)c}w_v\Big].
\end{aligned}
\end{align}
It is straightforward to see that $S$ is one of the Riemann invariants satisfying (\ref{Riemann}). We further solve (\ref{Riemann}) to find that the other Riemann invariant should be constant along the curve determined by
\begin{equation}\label{1sp}
\frac{\sqrt{e_{p}}dp}{(e+p)c}=-\frac{dv}{c^2-v^2}.
\end{equation}
Solving this differential equation to get the other Riemann invariant $\frac12\ln\Big(\frac{c+v}{c-v}\Big)+\int\frac{\sqrt{e_{p}}dp}{(e+p)c}$. Thus the two Riemann invariants of $\lambda_1$ are
$$S\quad\mbox{and}\quad \bar{r}=\frac12\ln\Big(\frac{c+v}{c-v}\Big)+\int_0^p\frac{\sqrt{e_{p}}dp}{(e+p)c}.$$

Similarly, the two Riemann invariants of $\lambda_3$ are
$$S\quad\mbox{and}\quad \bar{s}=\frac12\ln\Big(\frac{c+v}{c-v}\Big)-\int_0^p\frac{\sqrt{e_{p}}dp}{(e+p)c}.$$

%
%


%
%
\subsection{Structure of the shock curves}
In this part, we study the structure of the shock curves. It is divided into four parts: the rigorous derivation of the Hugoniot curve; verification of the Lax entropy conditions; monotonicity of the entropy along the shock curves; monotonicity of the velocity along the shock curves.
%
%
\subsubsection{Hugoniot curve} We first rigorously derive the Hugoniot curve of the shock curves.
\begin{proposition} Denote $(n_L, e_L, p_L, v_L)$ as the  proper number density, proper energy density, pressure and the velocity of the fluid in the left of the shock curve and  $(n, e, p,v)$ as the corresponding variables at right. Then it holds that
\begin{equation}\label{curve}
\frac{e+p}{n^2}(e+p_L)=\frac{e_L+p_L}{n_L^2}(e_L+p).
\end{equation}
\end{proposition}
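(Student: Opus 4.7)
The plan is to derive the Taub-type adiabat \eqref{curve} from the Rankine--Hugoniot jump conditions associated with the three conservation laws of \eqref{main1}. The first step is to pass to the rest frame of the shock via a Lorentz boost of velocity $s$: writing $w=(v-s)/(1-sv/c^{2})$, $w_L=(v_L-s)/(1-sv_L/c^{2})$, and $\Gamma_w=1/\sqrt{1-w^{2}/c^{2}}$, the three jump conditions reduce to the scalar identities $[n\Gamma_w w]=0$, $[(e+p)\Gamma_w^{2}w^{2}/c^{2}+p]=0$, and $[(e+p)\Gamma_w^{2}w]=0$. The first defines the proper number flux $j:=n\Gamma_w w$, which is conserved across the shock, and the algebraic identity $\Gamma_w^{2}-1=\Gamma_w^{2}w^{2}/c^{2}$ then gives the key relation $\Gamma_w^{2}=1+j^{2}/(n^{2}c^{2})$.

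Next, I would translate the remaining two jump conditions into purely thermodynamic form. Using $\Gamma_w^{2}w^{2}/c^{2}=j^{2}/(n^{2}c^{2})$, the momentum flux becomes $T^{11}=p+(e+p)j^{2}/(n^{2}c^{2})$, so its conservation reads
\[
p-p_L=\frac{j^{2}}{c^{2}}\!\left(\frac{e_L+p_L}{n_L^{2}}-\frac{e+p}{n^{2}}\right).
\]
Dividing the energy-flux jump $[(e+p)\Gamma_w^{2}w]=0$ by the nonzero flux $j$ and using $\Gamma_w w = j/n$ yields the cleaner secondary identity $(e+p)\Gamma_w/n=(e_L+p_L)\Gamma_{w_L}/n_L$.

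The final step is algebraic elimination. I would square this energy identity, substitute $\Gamma_w^{2}=1+j^{2}/(n^{2}c^{2})$ (and its analogue with $n_L$) on both sides, and factor the right-hand side as a difference of squares; this eliminates $\Gamma_w$ entirely and produces a factor of $(j^{2}/c^{2})\bigl[(e_L+p_L)/n_L^{2}-(e+p)/n^{2}\bigr]$ that is exactly $p-p_L$ by momentum conservation. After this substitution one is left with
\[
\frac{(e+p)^{2}}{n^{2}}-\frac{(e_L+p_L)^{2}}{n_L^{2}}=(p-p_L)\!\left(\frac{e+p}{n^{2}}+\frac{e_L+p_L}{n_L^{2}}\right),
\]
after which a straightforward regrouping of the $p$ and $p_L$ terms on each side delivers $(e+p)(e+p_L)/n^{2}=(e_L+p_L)(e_L+p)/n_L^{2}$, which is \eqref{curve}.

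The bulk of the argument is mechanical algebra; the only subtle point is the division by $j$, which requires that the discontinuity is a genuine shock (for a contact discontinuity $j=0$, in which case $p=p_L$ and $v=v_L$ follow from the other two jumps and \eqref{curve} holds trivially). Some care is also needed at the Lorentz-transformation step to confirm that $w$ and $w_L$ disappear entirely from the final adiabat, leaving only the thermodynamic quantities $e$, $p$, $n$ on each side, which is precisely what makes \eqref{curve} a purely thermodynamic analogue of the classical Hugoniot relation.
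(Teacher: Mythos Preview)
Your proof is correct and complete, but it follows a genuinely different route from the paper's. The paper stays in the original frame throughout: it writes the three Rankine--Hugoniot conditions with the shock speed $s$ explicit, eliminates $s$ by taking suitable linear combinations (specifically $(\text{energy RH})-v\cdot(\text{momentum RH})$ and $(\text{energy RH})-v_L\cdot(\text{momentum RH})$), simplifies to the pair
\[
n(e_L+p_L)(c^{2}-vv_L)=n_L(e+p_L)\sqrt{c^{2}-v^{2}}\sqrt{c^{2}-v_L^{2}},\qquad
n(p+e_L)\sqrt{c^{2}-v^{2}}\sqrt{c^{2}-v_L^{2}}=n_L(e+p)(c^{2}-vv_L),
\]
and then multiplies these two identities together so that the kinematic factors cancel, leaving \eqref{curve}. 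Your argument instead boosts to the shock rest frame, introduces the conserved proper number flux $j=n\Gamma_w w$, and eliminates $j$ via the momentum jump after squaring the enthalpy identity---this is essentially Taub's original derivation, which the paper explicitly cites and says it is \emph{not} reproducing. Your route is more physically transparent (the role of the specific enthalpy $(e+p)/n$ and the flux $j$ is manifest, and it makes clear why only Lorentz-scalar thermodynamic quantities survive), while the paper's route avoids the Lorentz-boost step entirely and is a self-contained algebraic manipulation in the lab frame. Both are short; neither has any real advantage in rigor.
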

\begin{remark}
It should be pointed out that (\ref{curve}) has been proved in
\cite{Taub-1967} and \cite{Chen-ARMA-1997}, and we give a different proof here.
\end{remark}
\begin{proof} Let $s$ be the shock speed. According to the Rankine-Hugoniot conditions, it holds that
\begin{align} \label{RH}
\begin{aligned}
& s\Big(\frac{c\rho}{\sqrt{c^2-v^2}}-\frac{c\rho_L}{\sqrt{c^2-v^2_L}}\Big)=\frac{c\rho v}{\sqrt{c^2-v^2}}-\frac{c\rho_Lv_L}{\sqrt{c^2-v^2_L}}, \\
& s\Big[\frac{(e+p)v}{c^2-v^2}-\frac{(e_L+p_L)v_L}{c^2-v^2_L}\Big]=\frac{(e+p)v^2}{c^2-v^2}+p-\frac{(e_L+p_L)v^2_L}{c^2-v^2_L}-p_L,\\
& s\Big[\frac{(e+p)v^2}{c^2-v^2}+e-\frac{(e_L+p_L)v^2_L}{c^2-v^2}-e_L\Big]= \\ & \frac{(e+p)c^2v}{c^2-v^2}-\frac{(e_L+p_L)c^2v_L}{c^2-v^2_L}.
 \end{aligned}
\end{align}
From (\ref{RH}), it is straightforward to get
\begin{align} 
\begin{aligned}
& n\Big[\frac{p-p_L}{\sqrt{c^2-v^2}}-\frac{(e_L+p_L)v_L(v_L-v)}{\sqrt{c^2-v^2}(c^2-v^2_L)}\Big]=\\
&n_L\Big[\frac{p-p_L}{\sqrt{c^2-v^2_L}}-\frac{(e+p)(v_L-v)v}{\sqrt{c^2-v^2_L}(c^2-v^2)}\Big], \nonumber\\
& n\Big[\frac{(p+e_L)v}{\sqrt{c^2-v^2}}-\frac{(e_L+p_L)v_L(c^2-vv_L)}{\sqrt{c^2-v^2}(c^2-v^2_L)}\Big]=\\
&n_L\Big[-\frac{(e+p_L)v_L}{\sqrt{c^2-v^2_L}}+\frac{v(e+p)(c^2-v_Lv)}{\sqrt{c^2-v^2_L}(c^2-v^2)}\Big].
 \end{aligned}
\end{align}
Namely,
\begin{align}\label{n1}
\begin{aligned}
&n[(p-p_L)(c^2-v^2_L)+(e_L+p_L)v_L(v-v_L)]\sqrt{c^2-v^2}=\\
&n_L[(p-p_L)(c^2-v^2)+(e+p)v(v-v_L)]\sqrt{c^2-v^2_L},
\end{aligned}
\end{align}
\begin{align}\label{n2}
\begin{aligned}
&n[(p+e_L)v(c^2-v^2_L)-(e_L+p_L)v_L(c^2-vv_L)]\sqrt{c^2-v^2}=\\
&n_L[-(e+p_L) v_L(c^2-v^2)+(e+p)v(c^2-vv_L)]\sqrt{c^2-v^2_L}.
\end{aligned}
\end{align}
Applying $(\ref{n2})-v(\ref{n1})$ and $(\ref{n2})-v_L(\ref{n1})$, we have
\begin{align} \label{RH1}
\begin{aligned}
& n(e_L+p_L)(v-v_L)(c^2-vv_L)\sqrt{c^2-v^2}=\\
&n_L(e+p_L)(v-v_L)(c^2-v^2)\sqrt{c^2-v^2_L}, \\
& n(p+e_L)(v-v_L)(c^2-v^2_L){\sqrt{c^2-v^2}}=\\
&n_L(e+p)(v-v_L)(c^2-vv_L)\sqrt{c^2-v^2_L}.
 \end{aligned}
\end{align}
Note that
$$(v-v_L)(c^2-vv_L)\sqrt{c^2-v^2_L}\sqrt{c^2-v^2}<0.$$
(\ref{RH1}) can be further simplified as
\begin{align} \label{RH2}
\begin{aligned}
& n(e_L+p_L)(c^2-vv_L)=
n_L(e+p_L)\sqrt{c^2-v^2}\sqrt{c^2-v^2_L}, \\
& n(p+e_L)\sqrt{c^2-v^2_L}{\sqrt{c^2-v^2}}=
n_L(e+p)(c^2-vv_L).
 \end{aligned}
\end{align}
We multiply $(\ref{RH2})_1$ by $(\ref{RH2})_2$ and divide the resulting equation by $(c^2-vv_L)\sqrt{c^2-v^2_L}\sqrt{c^2-v^2}$ to have
$$n^2(e_L+p_L)(p+e_L)=n^2_L(e+p_L)(e+p).$$
Then (\ref{curve}) follows.
\end{proof}

%
%
\subsubsection{Lax entropy conditions} In this subsection, we show that similar to the non-relativistic Euler system \cite{Smoller-1994,Weinberg-1972}, the Lax entropy conditions are satisfied globally for the shock curves.

\begin{proposition} The Lax entropy conditions hold wholly along the shock curves for the relativistic Euler system (\ref{main1}) with constitutive equations (\ref{press}), (\ref{ener-den}) and (\ref{num-den}). Namely, for a shock curve $\mathbf{u}=\mathbf{u}(\epsilon), \epsilon\leq0$ with shock speed $s=s(\epsilon)$, one has
$$\lambda(\epsilon)<s(\epsilon)<\lambda(0),\hspace{1cm}\epsilon<0.$$
\end{proposition}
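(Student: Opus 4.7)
The plan is to combine two ingredients: the Ruggeri--Strumia entropy-growth theorem recalled in Section~\ref{Ruggeri-Strumia}, applicable here thanks to the convexity of the entropy established just above, together with a direct monotonicity analysis of the shock speed along the Hugoniot locus (\ref{curve}) that exploits the genuine nonlinearity proved in Proposition~\ref{genuine}. First I would parametrize the 1-shock curve from $\mathbf{u}_L$ (the 3-shock case being symmetric) by $\epsilon\leq 0$ with $\epsilon=0\leftrightarrow\mathbf{u}_L$, taking $\epsilon$ decreasing in the direction of increasing pressure (the compressive branch). Using (\ref{curve}) together with the simplified Rankine--Hugoniot relations (\ref{RH2}), one derives the explicit shock-speed formula
\[
s^{2}\;=\;c^{2}\,\frac{(p-p_{L})(e+p_{L})}{(e-e_{L})(e_{L}+p)},
\]
together with the companion relation for $v-v_{L}$, which together render $s(\epsilon)$ a smooth function of a single parameter along the curve.

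For the right inequality $s(\epsilon)>\lambda(\epsilon)$ I would directly invoke the Ruggeri--Strumia theorem: since the system is symmetric hyperbolic with respect to the main field (\ref{main_rel}) and the entropy density $h=-\rho S\Gamma$ is convex (both properties established just above), the entropy jump $\eta$ across the shock is strictly monotone in $s$ and vanishes precisely at $s=\lambda(\mathbf{u}_{R})=\lambda(\epsilon)$. Since we consider the admissible branch (entropy grows, $\eta>0$), this immediately forces $s(\epsilon)>\lambda(\epsilon)$ for every $\epsilon<0$ without further computation.

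The left inequality $s(\epsilon)<\lambda(0)$ is the harder one. Locally near $\epsilon=0$ it is immediate: $s(0)=\lambda(0)$ because the shock degenerates to an infinitesimal one, and differentiating the Rankine--Hugoniot relations shows that $s'(0)$ is proportional to $\nabla\lambda_{1}\cdot r_{1}$, which is nonzero by the genuine nonlinearity $(e+p)e_{pp}-2e_{p}(e_{p}-1)<0$ of Proposition~\ref{genuine}; choosing the parametrization as above then gives $s(\epsilon)<\lambda(0)$ for small $|\epsilon|$. To globalize, I would differentiate the above formula for $s^{2}$ along the Hugoniot parameter and show that the derivative has a fixed sign for all $p\geq p_{L}$, using the Synge-specific inequality $e_{p}>3$ from (\ref{pz}) and the monotonicity of $p\mapsto e(p,S)$ at fixed entropy which trace back to the Bessel-function estimates in Appendix~3.

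The main obstacle is precisely this global monotonicity step. The local expansion near $\epsilon=0$ is routine, but propagating $s(\epsilon)<\lambda(0)$ along the \emph{whole} shock curve requires uniform quantitative control of $s$ as a function of the Hugoniot parameter, and this cannot be done by inspection because the Synge energy enters implicitly through $e(p,S)$. The delicate estimates on the Bessel quotient $K_{1}(\gamma)/K_{2}(\gamma)$ and its derivatives collected in the Appendix are exactly what is needed to close this monotonicity argument, and they are the reason why the Synge constitutive law makes the analysis substantially more involved than the polytropic case treated by Chen~\cite{Chen-ARMA-1997}.
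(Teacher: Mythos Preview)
Your plan has a genuine gap and diverges substantially from the paper's argument. Invoking Ruggeri--Strumia for $s(\epsilon)>\lambda(\epsilon)$ is circular as you have set it up: the result in Section~\ref{Ruggeri-Strumia} is an \emph{equivalence} $\eta>0\Leftrightarrow s>\lambda$, so to extract $s>\lambda(\epsilon)$ you must already know that $\eta>0$ along the compressive branch $\epsilon<0$. You justify this by saying ``we consider the admissible branch (entropy grows)'', but the proposition does not \emph{define} the curve via the entropy condition; it asks you to verify Lax's inequality along the given parametrized shock curve. In the paper's logical order the implication runs the other way: the entropy-growth Corollary that immediately follows this proposition deduces $\eta>0$ \emph{from} the Lax inequalities just established (cf.\ \eqref{S13}), and remarks that here the two are equivalent. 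For the other inequality $s(\epsilon)<\lambda(0)$ you yourself flag the global monotonicity of $s$ along the Hugoniot locus as the ``main obstacle'' and do not prove it; the Bessel-quotient estimates you point to are indeed used elsewhere in the paper, but never for this purpose.

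The paper's proof bypasses both entropy growth and any monotonicity of $s$. It runs the classical Smoller contradiction argument directly: suppose $\epsilon_0<0$ is the first parameter at which $s(\epsilon_0)=\lambda_1(\epsilon_0)$, differentiate the Rankine--Hugoniot relation $s[[\mathbf{u}]]=[[\mathbf{F}(\mathbf{u})]]$ in $\epsilon$, and contract with the left eigenvector $\ell_1(\epsilon_0)$ to get $s'(\epsilon_0)\,\ell_1\!\cdot[[\mathbf{u}]]=0$. If $s'(\epsilon_0)=0$ then $\mathbf{u}'(\epsilon_0)=r_1(\epsilon_0)$, whence $(s-\lambda_1)'(\epsilon_0)=-\nabla\lambda_1\cdot r_1=-1$ by genuine nonlinearity (Proposition~\ref{genuine}), contradicting the minimality of $\epsilon_0$. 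The whole argument therefore reduces to showing $\ell_1\cdot[[\mathbf{u}]]\neq 0$, and this is done by an explicit computation in the Lorentz frame $v_L=0$: writing out $\ell_1$ and $[[\mathbf{u}]]$ and using the algebraic consequences \eqref{shock-dis}--\eqref{shock} of the jump conditions, one checks that $\ell_1\cdot[[\mathbf{u}]]<0$. The only Synge-specific inputs are the signs $p>p_L$, $e>e_L$ on the $1$-shock and the positivity of $e_p$; no global control of $s$ as a function of the Hugoniot parameter is required.
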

\begin{proof}
For brevity, we only prove the inequality $\lambda(\epsilon)<s(\epsilon)$ for the 1-shock curves since the remaining parts can be proven in a similar way. Our proof will be done by contradiction.

Assume $\epsilon_0$ be the first point such that $\lambda_1(\epsilon)=s(\epsilon), \epsilon<0$. Corresponding to our system given in the form (\ref{main2}), the jump condition is $s[[\mathbf{u}]]=[[\mathbf{F}(\mathbf{u})]]$. We differentiate it with respect to $\epsilon$, and multiply the resulted system by the left eigenvector $\ell_1(\epsilon_0)$ at $\epsilon=\epsilon_0$ to have
\begin{equation}\label{jump1}
\begin{split}
&s'[[\mathbf{u}]]+s \mathbf{u}'= dF \mathbf{u}',\\
&s'\ell_1\cdot[[\mathbf{u}]]=(\lambda_1-s)\ell_1\cdot \mathbf{u}'=0.
\end{split}
\end{equation}
$(\ref{jump1})_2$ implies $s'(\epsilon_0)=0$ or $\ell_1\cdot[[\mathbf{u}]]=0$. Now suppose $s'(\epsilon_0)=0$. Then it follows from $(\ref{jump1})_1$ that $\mathbf{u}'(\epsilon_0)=r_1(\epsilon_0)$, and
\begin{equation}\label{jump2}
(s-\lambda_1)'(\epsilon_0)=-\nabla\lambda_1(\epsilon_0)\cdot r_1(\epsilon_0)=-1.
\end{equation}
Since $(s-\lambda_1)'(0)<0$, (\ref{jump2}) implies that there exists some point $\overline{\epsilon}\in (-\epsilon_0, 0)$ such that $\lambda_1(\overline{\epsilon})=s(\overline{\epsilon})$. This contradicts with our choice of $\epsilon_0$. Therefore, our proof can be completed if we can show that
\begin{equation}\label{jump23}
\ell_1\cdot[[\mathbf{u}]]\neq0.
\end{equation}
Now we turn to proof of (\ref{jump23}). For the system (\ref{main2}) with respect to $(n, v, p)$, we can choose the left eigenvector $\ell_1= (\ell_{11}, \ell_{12}, \ell_{13})$ corresponding to $\lambda_1$ in (\ref{eigen3}):
\begin{align*}
\begin{aligned}
\ell_{11}&=\frac{(e+p)(e-pe_{p}(p, S))\sqrt{c^2-v^2}}{\rho},\\
\ell_{12}&=\left[(e+p-pe_{p})v+p\sqrt{e_{p}(p, S)}c\right]c,\\
\ell_{13}&=-\left[(e+p-pe_{p})c+p\sqrt{e_{p}(p, S)}v\right]
\end{aligned}
\end{align*}
 We further choose a coordinate system in which $v_L=0$. Note that
\begin{equation}\label{10shock}
\begin{split}
&\ell_1\cdot[[\mathbf{u}]]=
\frac{(e+p)(e-pe_{p}(p, S))\sqrt{c^2-v^2}}{\rho}\Big(\frac{\rho c}{\sqrt{c^2-v^2}}-\rho_L\Big) +\\
& \Big[(e+p-pe_{p}(p, S))v+p\sqrt{e_{p}(p, S)}c\Big]\frac{(e+p)cv}{c^2-v^2}-\\
& \Big[(e+p-pe_{p}(p, S))c+p\sqrt{e_{p}(p, S)}v\Big]\Big[\frac{(e+p)v^2}{c^2-v^2}+e-e_L\Big] = \\
&(e+p)(e-pe_{p}(p, S))\Big(c-\frac{\rho_L}{\rho}\sqrt{c^2-v^2}\Big)-
\\
&(e+p-pe_{p}(p, S))(e-e_L)c
+p\sqrt{e_{p}(p, S)}[(e+p)-(e-e_L)]v.
\end{split}
\end{equation}
On the other hand, we let $v_L=0$ in $(\ref{RH})_2$ and  $(\ref{RH})_3$ to have
\begin{equation*}
\begin{split}
&\frac{(e+p)sv}{c^2-v^2}=\frac{(e+p)v^2}{c^2-v^2}+p-p_L,\\
& s\Big[\frac{(e+p)v^2}{c^2-v^2}+e-e_L\Big]=\frac{(e+p)c^2v}{c^2-v^2}.
\end{split}
\end{equation*}
Furthermore, we can get
\begin{equation}\label{shock-dis}
s(e-e_L)=\frac{(e+p)v(c^2-v^2)}{c^2-v^2}-(p-p_L)v=(e+p_L)v.
\end{equation}
And we let $v_L=0$ in $(\ref{RH2})$ to have
\begin{equation}\label{shock}
\begin{split}
\frac{n_L\sqrt{c^2-v^2}}{nc}=\frac{e_L+p_L}{e+p_{L}},\qquad\mbox{and}\qquad v^2=\frac{(p-p_L)(e-e_L)}{(p+e_L)(p_L+e)}c^2.
\end{split}
\end{equation}
Note that $s<0$ and $v<v_L=0$. We can use (\ref{shock-dis}) and $(\ref{shock})$ to have
 $$p>p_L,\qquad e>e_L \qquad \mbox{ on the 1-shock curves}.$$
 Then we combine (\ref{10shock}) and (\ref{shock}) to obtain
\begin{equation*}
\begin{split}
&\ell_1\cdot[[\mathbf{u}]]=\\
&\frac{(e-e_L)c}{e+p_L}\Big[(e-pe_{p}(p, S))(e+p)-(e+p-pe_{p}(p, S))(e+p_L)\Big]\\
&+p(p+e_L)\sqrt{e_{p}(p, S)}v<\\
&\frac{(e-e_L)c}{e+p_L}\Big[-(p-p_L)pe_{p}(p, S)-(e+p)p_L\Big]<0.
\end{split}
\end{equation*}
\end{proof}

%
%
\subsubsection{Entropy Growth across the shock waves}
Taking into account the results of Section \ref{Ruggeri-Strumia}, and the proof that the characteristic velocities are sub-luminal,  we conclude that:
\begin{corollary}
	For the system of relativistic Euler fluid with Synge energy, the entropy grows across the shock waves.
\end{corollary}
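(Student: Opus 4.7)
The plan is to assemble the corollary directly from two ingredients already in hand: the abstract Ruggeri--Strumia criterion of Section \ref{Ruggeri-Strumia} (convexity of $h$ together with the Lax condition $s>\lambda(\mathbf{u}_R)$ implies $\eta>0$), and the explicit verifications carried out in Section 3 for the Synge constitutive equations \eqref{press}--\eqref{caloric}, \eqref{num-den}. I will not redo any of the Bessel-function estimates; the work is purely in checking that the hypotheses of the abstract theorem are satisfied.

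First, I would invoke the Ruggeri--Strumia framework to recall that, along an admissible shock, the entropy jump takes the form
\begin{equation*}
\eta = -s\,[[h]] + \zeta_\alpha\,[[h^\alpha]], \qquad h^\alpha = -\rho S u^\alpha,
\end{equation*}
and that if the generating density $h = h^\alpha\xi_\alpha$ is convex in $\mathbf{u}$, then $\eta>0$ iff $s>\lambda(\mathbf{u}_R)$. Thus it suffices to (i) verify convexity of $h$, and (ii) invoke the Lax condition $s(\epsilon)>\lambda(\epsilon)$ already established globally along the shock curves in the previous subsection.

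For (i), I would appeal to the characterization \eqref{convcondi}: convexity of $h$ is equivalent to $0<p_e<1$. The positivity $p_e>0$ and the strict upper bound $p_e<1$ are immediate from the proof of Theorem \ref{sub-luminal}: there it was shown via \eqref{pz} that $e_p>3$ for every $\gamma\in(0,\infty)$, and hence $p_e = 1/e_p \in (0,1/3)\subset(0,1)$. The auxiliary inequality $c_p>0$ that accompanies \eqref{condQ} is also at our disposal, since $c_V>0$ was established in \eqref{cV}, so that $c_p = k_B/m + c_V>0$. Thus all hypotheses of the Ruggeri--Strumia convexity theorem are met, and $h$ is genuinely convex on the physical state space considered here.

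With convexity of $h$ in force and the Lax shock inequality $\lambda(\epsilon)<s(\epsilon)$ verified for every $\epsilon<0$ along both the $1$- and $3$-shock curves, the abstract result of Section \ref{Ruggeri-Strumia} applies pointwise along each shock branch and delivers $\eta>0$, i.e. a strictly positive jump of entropy across every admissible shock. Since there is no obstacle beyond collecting the already proved facts, the only thing worth stating carefully is the logical ordering: strict hyperbolicity $\Rightarrow$ $c_V>0$ and $p_e\in(0,1)$ $\Rightarrow$ convexity of $h$ $\Rightarrow$ monotonicity of $\eta$ in $s$ $\Rightarrow$ (together with Lax) entropy growth. There is essentially no ``hard step'' here; the whole point of the corollary is that the delicate Bessel-function estimates of Section 3 have already done all the analytic work needed to activate the general entropy-growth theorem of Ruggeri and Strumia.
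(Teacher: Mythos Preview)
Your proposal is correct and follows essentially the same route as the paper: invoke the abstract Ruggeri--Strumia result of Section~\ref{Ruggeri-Strumia} (convexity of $h$ plus the Lax inequality $s>\lambda$ forces $\eta>0$), and then observe that the hypotheses are already in place---convexity via $0<p_e<1/3$ from Theorem~\ref{sub-luminal} (together with $c_p>0$ from \eqref{cV}), and the Lax condition from the preceding subsection. The paper's own justification is terse (``Taking into account the results of Section~\ref{Ruggeri-Strumia}, and the proof that the characteristic velocities are sub-luminal, we conclude\ldots''), and the explicit computation of $\eta=-s\rho_L(S-S_L)$ that follows the corollary is illustrative rather than part of the proof.
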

Taking into account the Gibbs equation \eqref{Gibbs},
 the constitutive equation \eqref{press} and the expression of $e$ given in \eqref{energia-e}, we have
 \begin{equation*}
 \frac{m}{k_B} d S = \left(r'(\gamma) - \frac{r(\gamma)}{\gamma}\right) d\gamma - \frac{d\rho}{\rho},
 \end{equation*}
 with
 \begin{equation*}
 r=\frac{e}{p} = \gamma \frac{K_3(\gamma)}{K_2(\gamma)} -1.
 \end{equation*}
Then
 \begin{equation}\label{entropias}
 \frac{m}{k_B} S = \gamma\frac{ K_3(\gamma)}{K_2(\gamma)}  -\ln \gamma + \ln K_2(\gamma) - \ln \rho + \text{const.}
 \end{equation}
Every classical solution of the differential system \eqref{massmomenergy}, thanks to the Gibbs equation \eqref{Gibbs},  also satisfies the supplementary entropy law \eqref{B-2} with \eqref{halfa}, which is expressed as the form
(\ref{entra-cons}) in two dimensional  space-time.
Without loss of generality, we let $v_L=0$. Then the entropy production along the shock \eqref{etina} becomes now
 \begin{equation*}
\eta = (-s+v) \rho \Gamma S + s \rho_L S_L>0.
\end{equation*}
And taking into account the first RH condition of equations \eqref{main1}, we have
 \begin{equation}\label{S13}
\eta = -s\rho_L ( S-S_L)>0\qquad \forall s>\lambda.
\end{equation}
 This is exact what we want to prove. Notice that for the 1-shock curves, $s<0$ and $S>S_L$; while for the 3-shock curves, $s>0$ and $S<S_L$.
\begin{figure}[ht]
\centering
\includegraphics[width=0.9\linewidth]{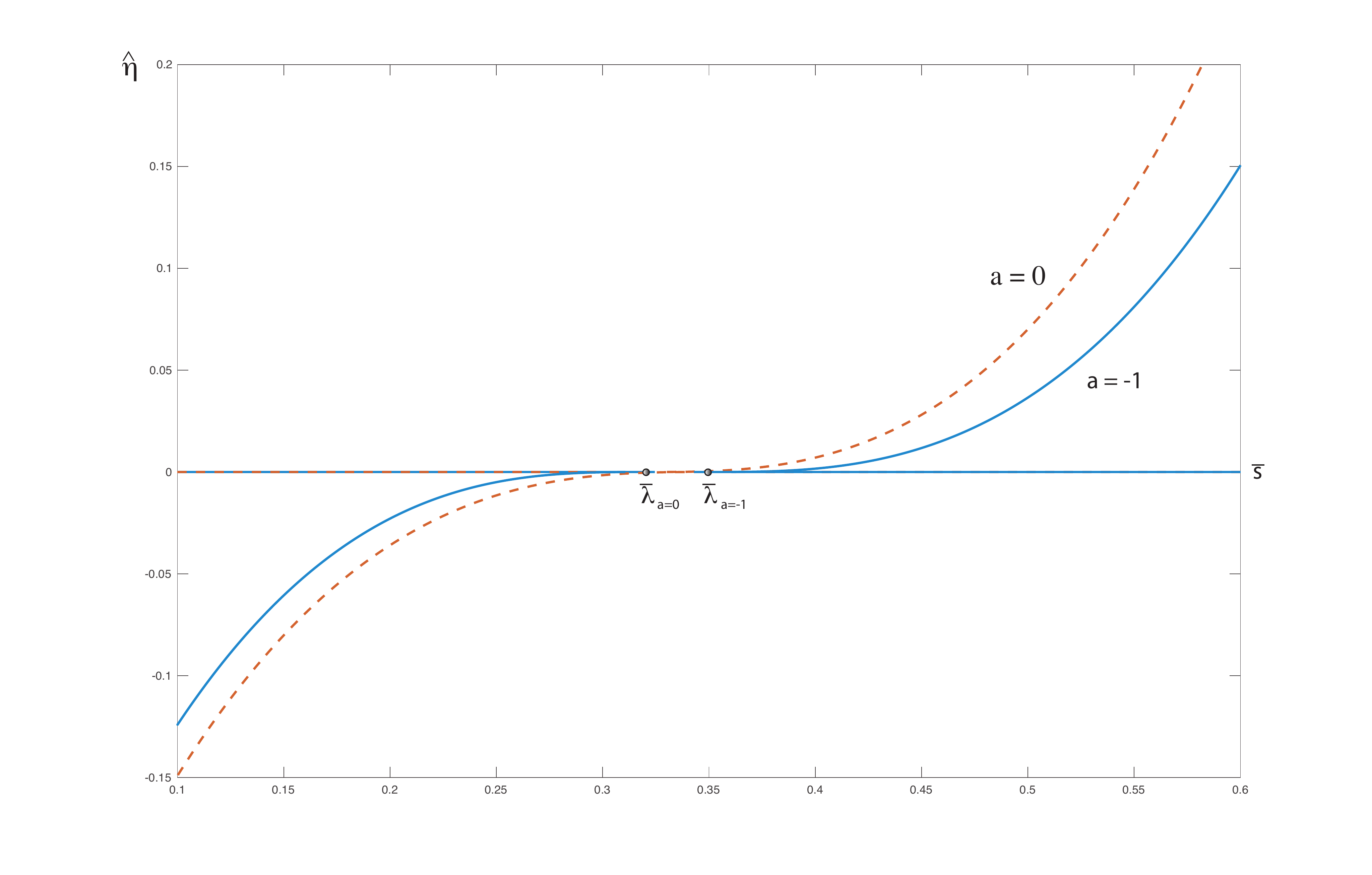}
\caption{Growth of entropy across the shock ($a=-1$: Monatomic gas, $a=0$: Diatomic gas).}
\label{fig:etarel}
\end{figure}
Taking into account \eqref{entropias}, we have definitively
\begin{equation*}
\hat{\eta} = \frac{m}{c k_B\rho_L}\eta =-\bar{s}\left\{
r(\gamma) - r(\gamma_L) + \ln \left(\frac{K_2(\gamma)}{K_2(\gamma_L)} \frac{\gamma_L}{\gamma}\frac{\rho_L}{\rho}\right)\right\}>0.
\end{equation*}
By numerical solution of the RH equations, we can plot $\hat{\eta}$ as a function of $\bar{s} = s/c$. The figure is in perfect agreement with the theoretical results and we can see that $\hat{\eta}$ grows and is positive when $s > \lambda$. In this case, as in the classical Euler, the entropy growth condition  is  equivalent to the Lax conditions.

%
%
\subsubsection{Monotonicity of the velocity} In this subsection, we discuss the monotonicity of the velocity along the shock curves.
\begin{proposition}\label{mono-velo} For the relativistic Euler system (\ref{main1}) with constitutive equations (\ref{press}), (\ref{ener-den}) and (\ref{num-den}), $\frac{dv}{dp}<0$ for the 1-shock curves, and $\frac{dv}{dp}>0$ for the 3-shock curves.
\end{proposition}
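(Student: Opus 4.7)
The plan is to boost to the frame in which $v_L=0$; this is legitimate because the wave structure is Lorentz-covariant. In that frame the Rankine-Hugoniot equations have already produced the identity (\ref{shock})$_2$:
\[
v^2 \;=\; c^2\,\frac{(p-p_L)(e-e_L)}{(p+e_L)(p_L+e)},
\]
with $v<0$ on the $1$-shock and $v>0$ on the $3$-shock, and with $p>p_L$, $e>e_L$ along both families (already established above). Since $p$ is strictly monotone along the shock, one parameterizes the curve by $p$, writes $e=e(p)$, and differentiates the identity in $p$. After collecting terms, a short manipulation gives
\[
2v\,\frac{dv}{dp} \;=\; \frac{c^2(p_L+e_L)\bigl[(e-e_L)(p_L+e) \,+\, (p-p_L)(p+e_L)\,e'(p)\bigr]}{\bigl[(p+e_L)(p_L+e)\bigr]^2}.
\]
The sign of $dv/dp$ is thereby reduced to the sign of the bracketed factor, the sign of $v$ being fixed on each family.

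The first summand $(e-e_L)(p_L+e)$ is positive since $e>e_L$, so it suffices to show that $e'(p)\geq 0$ along the shock. I would obtain this by the chain rule $e'(p)=e_p\rvert_S + e_S\rvert_p\,(dS/dp)$ and sign the three factors separately. First, $e_p\rvert_S>3>0$ by (\ref{pz}), already proved in Theorem \ref{sub-luminal}. Second, from the Gibbs equation (\ref{Gibbs}) one derives $de=\tfrac{e+p}{\rho}d\rho+\rho T\,dS$, so that $e_S\rvert_p>0$ is equivalent to $\partial\rho/\partial S\rvert_p>-\rho^2 T/(e+p)$; using the Synge state in the $(p,\gamma)$-chart with $e=p\,r(\gamma)$, $n=\gamma p/(mc^2)$ and the entropy formula (\ref{entropias}), this reduces to a Bessel-function inequality that follows from (\ref{imp-ine1})-(\ref{imp-ine2}) in Appendix 3. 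Third, $dS/dp>0$ along the shock is the differential form of the entropy-growth statement (\ref{S13}) and is obtained by implicit differentiation of the Hugoniot relation (\ref{curve}) together with (\ref{entropias}).

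The principal technical obstacle is this last step: the already-proved inequality $S>S_L$ is merely an integral statement, whereas we need the pointwise slope $dS/dp>0$. Establishing it requires differentiating the Hugoniot implicitly and signing a transcendental expression involving $r(\gamma)$, $r'(\gamma)$ and ratios of the $K_j(\gamma)$; the Bessel identities and inequalities collected in Appendices 1 and 3 are the essential input. An alternative, perhaps more direct route is to substitute the Hugoniot relation into the bracket itself and prove its positivity in one step in the $(p,\gamma)$-chart, without isolating $dS/dp$. The $3$-shock case is identical modulo the sign of $v$: the same derivation produces the same formula with the same positive bracket, and because $v>0$ one concludes $dv/dp>0$.
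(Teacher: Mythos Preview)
Your setup is fine and matches the paper up to the displayed identity for $2v\,dv/dp$. The gap is in your factor-by-factor proof that $e'(p)\geq 0$: the claim $e_S|_p>0$ is \emph{false}. In the $(p,\gamma)$ chart one has $e=p\bigl(\gamma K_1/K_2+3\bigr)$, so $\partial_\gamma e|_p=p\,B_3(\gamma)>0$ by (\ref{B1231}); on the other hand, from (\ref{entropias}) one computes
\[
\frac{m}{k_B}\,\partial_\gamma S\big|_p=\gamma\Bigl(\frac{K_1}{K_2}\Bigr)^2+3\frac{K_1}{K_2}-\gamma-\frac{4}{\gamma}<0
\]
by (\ref{imp-ine1}). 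Hence $e_S|_p=(\partial_\gamma e|_p)/(\partial_\gamma S|_p)<0$. Equivalently, your inequality $\partial\rho/\partial S|_p>-\rho^2T/(e+p)$ reduces, after the substitutions $\rho=p\gamma/c^2$, $T=mc^2/(k_B\gamma)$, $e+p=p(\gamma K_1/K_2+4)$, exactly to $B_3(\gamma)<0$, which is the \emph{wrong} sign. So your three-factor sign argument cannot close: since $dS/dp>0$, the correction $e_S|_p\,(dS/dp)$ is negative and competes against $e_p|_S>3$, and you have given no size control.

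The paper in fact follows your ``alternative route.'' It computes $de/dp$ along the Hugoniot explicitly in the $(p,\gamma)$ chart (formula (\ref{entr-grow4})) and substitutes into the bracket, reducing the claim to the algebraic inequality (\ref{vderi1}). To close, it first proves the intermediate estimate $(e-e_L)/(p-p_L)>1$ (equation (\ref{ep11})), which does use $dS/dp>0$ via (\ref{entr-grow1}), and then combines this with the signs of the Bessel combinations $B_1,B_2,B_3$ in (\ref{B1231})--(\ref{B1232}) to finish. So the correct path is the one you sketched only as an alternative; the primary plan, as written, fails at the $e_S|_p$ step.
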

\begin{proof}
We only prove $\frac{dv}{dp}<0$ for the 1-shock curves since $\frac{dv}{dp}>0$ for the 3-shock curves can be proved in the same way.

As in (\ref{shock}), we can choose proper coordinate system such that $v_L=0$ to have
$$v^2=\frac{(p-p_L)(e-e_L)}{(p+e_L)(p_L+e)}c^2.$$
Differentiate the above equation with respect to $p$ to have

\begin{align*}
&\frac{dv^2}{dp}=2v\frac{dv}{dp}=\\
&\hspace{1cm}\frac{c^2(e_L+p_L)}{(p+e_L)^2(p_L+e)^2}\Big[(e-e_L)(e+p_L)+(p-p_L)(p+e_L)\frac{de}{dp}\Big].
\end{align*}
Note the fact $v<v_L=0$. To show $\frac{dv}{dp}<0$, it is equivalent to derive
\begin{equation}\label{ep1}
(e-e_L)(e+p_L)+(p-p_L)(p+e_L)\frac{de}{dp}>0.
\end{equation}
On the other hand, we use (\ref{ener-den}) and (\ref{curve}) to  have
$$\frac{de}{dp}=\gamma\frac{K_1(\gamma)}{K_2(\gamma)}+3+\Big[\gamma\Big(\frac{K_1(\gamma)}{K_2(\gamma)}\Big)^2
+4\frac{K_1(\gamma)}{K_2(\gamma)}-\gamma\Big]p\frac{d\gamma}{dp},$$
\begin{equation}\label{entr-grow2}
\begin{split}
&\frac{e_L+p_L}{n_L^2}\frac{dp}{d\gamma}=m^2c^4\frac{d}{d\gamma}\Big[\Big(\frac{1}{\gamma}\frac{K_1(\gamma)}{K_2(\gamma)}+
\frac{4}{\gamma^2}\Big)\Big(\gamma\frac{K_1(\gamma)}{K_2(\gamma)}+3+\frac{p_L}{p}\Big)\Big] =\\
&m^2c^4\Big\{\Big[\frac{1}{\gamma}\Big(\frac{K_1(\gamma)}{K_2(\gamma)}\Big)^2
+\frac{2}{\gamma^2}\frac{K_1(\gamma)}{K_2(\gamma)}-\frac{1}{\gamma}-\frac{8}{\gamma^3}\Big]
\Big(\gamma\frac{K_1(\gamma)}{K_2(\gamma)}+3+\frac{p_L}{p}\Big) +\\
&\Big(\frac{1}{\gamma}\frac{K_1(\gamma)}{K_2(\gamma)}+\frac{4}{\gamma^2}\Big)
\Big[\gamma\Big(\frac{K_1(\gamma)}{K_2(\gamma)}\Big)^2+4\frac{K_1(\gamma)}{K_2(\gamma)}-\gamma-\frac{p_L}{p}\frac{dp}{d\gamma}\Big]\Big\},
\end{split}
\end{equation}
and
\begin{equation}\label{entr-grow3}
\begin{split}
\frac{e_L+p_L}{n_L^2}=\frac{m^2c^4}{p}\Big(\frac{K_1(\gamma)}{\gamma K_2(\gamma)}+\frac{4}{\gamma^2}\Big)\frac{e+p_L}{e_L+p}
\end{split}
\end{equation}
We combine (\ref{entr-grow2}) and (\ref{entr-grow3}) to have
\begin{equation}\label{entr-grow4}
\begin{split}
\frac{1}{p}\frac{dp}{d\gamma}=&\frac{\Big[2\gamma^2\Big(\frac{K_1(\gamma)}{K_2(\gamma)}\Big)^3+13\gamma\Big(\frac{K_1(\gamma)}{K_2(\gamma)}\Big)^2+
(14-2\gamma^2)\frac{K_1(\gamma)}{K_2(\gamma)}-7\gamma-\frac{24}{\gamma}\Big]p
}
{\Big(\gamma\frac{K_1(\gamma)}{K_2(\gamma)}+4\Big)\Big(\frac{e+p_L}{e_L+p}p+p_L\Big)}+\\
&\frac{\Big[\gamma\Big(\frac{K_1(\gamma)}{K_2(\gamma)}\Big)^2+2\frac{K_1(\gamma)}{K_2(\gamma)}-\gamma-\frac{8}{\gamma}\Big]p_L}
{\Big(\gamma\frac{K_1(\gamma)}{K_2(\gamma)}+4\Big)\Big(\frac{e+p_L}{e_L+p}p+p_L\Big)},\\
\frac{de}{dp}=&\gamma\frac{K_1(\gamma)}{K_2(\gamma)}+3
+\frac{B_3(\gamma)\Big(\gamma\frac{K_1(\gamma)}{K_2(\gamma)}+4\Big)\Big(\frac{e+p_L}{e_L+p}p+p_L\Big)}
{\Big[2B_1(\gamma)-B_2(\gamma)\Big]p+B_2(\gamma)p_L}.
\end{split}
\end{equation}
Here and below, we use the following notations:
\begin{align*}
B_1(\gamma)=:&\gamma^2\Big(\frac{K_1(\gamma)}{K_2(\gamma)}\Big)^3+7\gamma\Big(\frac{K_1(\gamma)}{K_2(\gamma)}\Big)^2+
(8-\gamma^2)\frac{K_1(\gamma)}{K_2(\gamma)}-4\gamma-\frac{16}{\gamma},\\
B_2(\gamma)=:&\gamma\left(\frac{K_1(\gamma)}{K_2(\gamma)}\right)^2+2\frac{K_1(\gamma)}{K_2(\gamma)}-\gamma-\frac{8}{\gamma},\\
B_3(\gamma)=:&\gamma\Big(\frac{K_1(\gamma)}{K_2(\gamma)}\Big)^2+4\frac{K_1(\gamma)}{K_2(\gamma)}-\gamma.
\end{align*}
Note that from Appendix 3 Proposition \ref{eep}, one has
\begin{align}\label{B1231}
\begin{aligned}
B_1(\gamma)&=\Big[\gamma\left(\frac{K_1(\gamma)}{K_2(\gamma)}\right)^2
+3\frac{K_1(\gamma)}{K_2(\gamma)}- \gamma-\frac{4}{\gamma}\Big]\Big(\gamma\frac{K_1(\gamma)}{K_2(\gamma)}+4\Big)
<0,\\
B_2(\gamma)&=\Big[\gamma\left(\frac{K_1(\gamma)}{K_2(\gamma)}\right)^2
+3\frac{K_1(\gamma)}{K_2(\gamma)}- \gamma-\frac{3}{\gamma}\Big]-\frac{K_1(\gamma)}{K_2(\gamma)}- \frac{5}{\gamma}
<0,\\
B_3(\gamma)&=\frac{1}{(K_2(\gamma))^2}\Big[\gamma K^2_1(\gamma)+4K_1(\gamma)\Big(\frac{2K_1(\gamma)}{\gamma}+K_0(\gamma)\Big)- \\
&
\gamma\Big(\frac{2K_1(\gamma)}{\gamma}+K_0(\gamma)\Big)^2\Big] =\\
&\frac{1}{(K_2(\gamma))^2}\Big[\Big(\gamma+\frac{4}{\gamma}\Big) (K_1(\gamma))^2-\gamma (K_0(\gamma))^2\Big]
>0,
\end{aligned}
\end{align}
and
\begin{align}\label{B1232}
\begin{split}
&B_1(\gamma)-B_2(\gamma)=\\
&\gamma^2\Big(\frac{K_1(\gamma)}{K_2(\gamma)}\Big)^3+6\gamma\Big(\frac{K_1(\gamma)}{K_2(\gamma)}\Big)^2+
(6-\gamma^2)\frac{K_1(\gamma)}{K_2(\gamma)}-3\gamma-\frac{8}{\gamma}=\\
&\hspace{1.5cm}\gamma^2\left(\frac{K_1(\gamma)}{K_2(\gamma)}\right)^3+4\gamma\left(\frac{K_1(\gamma)}{K_2(\gamma)}\right)^2
-\gamma^2\frac{K_1(\gamma)}{K_2(\gamma)}-\gamma+\\
&\hspace{1.5cm}2\Big[\gamma\left(\frac{K_1(\gamma)}{K_2(\gamma)}\right)^2
+3\frac{K_1(\gamma)}{K_2(\gamma)}- \gamma-\frac{4}{\gamma}\Big]
<0,\\
B_1(\gamma)&-B_2(\gamma)-B_3(\gamma)=\\
& \Big[\gamma\left(\frac{K_1(\gamma)}{K_2(\gamma)}\right)^2
+3\frac{K_1(\gamma)}{K_2(\gamma)}- \gamma-\frac{4}{\gamma}\Big]
\Big(\gamma\frac{K_1(\gamma)}{K_2(\gamma)}+2\Big)
<0.
\end{split}
\end{align}
Now we use (\ref{ep1}) and (\ref{entr-grow4}) to have
\begin{equation*}
\begin{split}
 &(e-e_L)(e+p_L)+(p-p_L)(p+e_L)\times\\
 &\bigg[\gamma\frac{K_1(\gamma)}{K_2(\gamma)}+3
 +\frac{(\gamma\frac{K_1(\gamma)}{K_2(\gamma)}+4)B_3(\gamma)(\frac{e+p_L}{p+e_L}p+p_L)}{[2B_1(\gamma)-B_2(\gamma)]p+B_2(\gamma)p_L}\bigg]>0.
 \end{split}
 \end{equation*}
By (\ref{B1231}) and (\ref{B1232}), the equation above can be simplified as
\begin{equation}\label{vderi1}
\begin{split}
 &(p-p_L)(p+e_L)\Big\{\frac{e-e_L}{p-p_L}\frac{e+p_L}{e_L+p}B_2(\gamma)+B_3(\gamma)+\\
 &\Big(\gamma\frac{K_1(\gamma)}{K_2(\gamma)}+3\Big)\Big(B_2(\gamma)+B_3(\gamma)\Big)\Big\}p_L+\\
& (p-p_L)(e+p_L)\Big\{\Big(\frac{e-e_L}{p-p_L}+\frac{e_L+p}{e+p_L}\frac{e}{p}\Big)[2B_1(\gamma)-B_2(\gamma)]+\\
& \Big(\gamma\frac{K_1(\gamma)}{K_2(\gamma)}+4\Big)B_3(\gamma)\Big\}p<0.
 \end{split}
 \end{equation}
 Before verifying \eqref{vderi1}, we first show that
 \begin{equation}\label{ep11}
 \frac{e-e_L}{p-p_L}>1.
 \end{equation}
 In fact, we differentiate (\ref{entropias}) with respect to $p$ and use \eqref{S13} to have
\begin{equation}\label{entr-grow1}
\begin{split}
\frac{m}{k_B}\frac{dS}{dp}=&-\frac{1}{p}-\frac{2}{\gamma}\frac{d\gamma}{dp}
+\Big[\frac{K'_2(\gamma)}{K_2(\gamma)}+\frac{d}{d\gamma}\Big(\gamma\frac{K_1(\gamma)}{K_2(\gamma)}\Big)\Big]\frac{d\gamma}{dp}\\
=&-\frac{1}{p}+\Big[\gamma\Big(\frac{K_1(\gamma)}{K_2(\gamma)}\Big)^2
+3\frac{K_1(\gamma)}{K_2(\gamma)}-\gamma-\frac{4}{\gamma}\Big]\frac{d\gamma}{dp}>0.
\end{split}
\end{equation}
 Then we combine (\ref{entr-grow4}), (\ref{B1231}), (\ref{B1232}) and \eqref{entr-grow1} to have
\begin{equation*}
\begin{split}
B_1(\gamma)\Big(\frac{e+p_L}{p+e_L}-1\Big)p
<\left[B_1(\gamma)-B_2(\gamma)\right](p-p_L).
\end{split}
\end{equation*}
That is,
\begin{equation*}
\begin{split}
\frac{e-e_L}{p-p_L}-1>\frac{B_1(\gamma)-B_2(\gamma)}{B_1(\gamma)}\frac{p+e_L}{p}>0.
\end{split}
\end{equation*}
This inequality implies \eqref{ep11}. Now we use  (\ref{B1232}) and \eqref{ep11} to have
 \begin{equation*}
 \begin{split}
 &\frac{e-e_L}{p-p_L}\frac{e+p_L}{e_L+p}B_2(\gamma)+B_3(\gamma)<B_2(\gamma)+B_3(\gamma)<0,\\
 &\frac{e-e_L}{p-p_L}+\frac{e_L+p}{e+p_L}\frac{e}{p}=\frac{(ep-e_Lp_L)(e+p)}{(p-p_L)(e+p_L)p}=\\
& \hspace{0.5cm} \frac{e+p}{p}\Big(1+\frac{[(e-e_L)-(p-p_L)]p_L}{(p-p_L)(e+p_L)}\Big)>\frac{e+p}{p}=\gamma\frac{K_1(\gamma)}{K_2(\gamma)}+4.
 \end{split}
 \end{equation*}
 Then (\ref{vderi1}) holds since $2B_1(\gamma)-B_2(\gamma)+B_3(\gamma)=2[B_1(\gamma)-B_2(\gamma)]+[B_2(\gamma)+B_3(\gamma)]<0$ by (\ref{B1232}) and Appendix 3 (\ref{imp-ine1}).
\end{proof}

%
%
\subsection{Monotonicity of the velocity on rarefaction curves}
In this subsection, we consider the monotonicity of the velocity on rarefaction curves.
\begin{proposition}\label{mono-rare} For the relativistic Euler system (\ref{main1}) with constitutive equations (\ref{press}), (\ref{ener-den}) and (\ref{num-den}), $\frac{dv}{dp}<0$ on the 1-rarefaction curves, and $\frac{dv}{dp}>0$ on the 3-rarefaction curves.
\end{proposition}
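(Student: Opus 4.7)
\textbf{Proof proposal for Proposition \ref{mono-rare}.} The plan is to exploit the fact that a rarefaction curve is characterized by the constancy of its two Riemann invariants, which were computed explicitly in the previous subsection, and then to differentiate these invariants implicitly with respect to $p$.

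First I would treat the 1-rarefaction curve. Along it, both $S$ and
\begin{equation*}
\bar{r}=\frac{1}{2}\ln\!\left(\frac{c+v}{c-v}\right)+\int_0^p\frac{\sqrt{e_{p}(\tilde p,S)}}{(e(\tilde p,S)+\tilde p)\,c}\,d\tilde p
\end{equation*}
are constant. Since $S$ is constant, $e_p$ and $e$ can be regarded as functions of $p$ alone on the curve. Differentiating $\bar{r}=\text{const}$ in $p$, using
\begin{equation*}
\frac{d}{dv}\left[\frac{1}{2}\ln\frac{c+v}{c-v}\right]=\frac{c}{c^{2}-v^{2}},
\end{equation*}
yields
\begin{equation*}
\frac{c}{c^{2}-v^{2}}\,\frac{dv}{dp}+\frac{\sqrt{e_{p}}}{(e+p)\,c}=0,
\end{equation*}
and hence
\begin{equation*}
\frac{dv}{dp}=-\frac{\sqrt{e_{p}}\,(c^{2}-v^{2})}{(e+p)\,c^{2}}.
\end{equation*}
Theorem \ref{sub-luminal} (together with the strict hyperbolicity shown above) guarantees that $e_p=1/p_e>0$, while $e+p>0$ and $|v|<c$ hold throughout the physical regime; therefore the right-hand side is strictly negative, giving $dv/dp<0$ on the 1-rarefaction curve.

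The 3-rarefaction case is entirely analogous, the only change being the sign in front of the pressure integral in the invariant $\bar{s}$. Differentiating $\bar{s}=\text{const}$ produces
\begin{equation*}
\frac{dv}{dp}=\frac{\sqrt{e_{p}}\,(c^{2}-v^{2})}{(e+p)\,c^{2}}>0,
\end{equation*}
which is the claim for the 3-rarefaction curves. Because everything reduces to an implicit differentiation of already-established Riemann invariants, there is no genuine obstacle here; the only sanity check is to make sure that $e_p>0$ along the curve, which is exactly the hyperbolicity information obtained earlier, so no further estimate on the modified Bessel functions is needed in this proposition. \qed
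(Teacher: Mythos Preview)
Your argument is correct and is essentially the same as the paper's: the paper simply quotes equation \eqref{1sp} (which is exactly the differential form of $\bar r=\text{const}$) to read off $\dfrac{dv}{dp}=-\dfrac{\sqrt{e_p}\,(c^2-v^2)}{(e+p)c}<0$, while you recover this relation by differentiating the Riemann invariant $\bar r$ directly. The only discrepancy is a harmless factor of $c$ (you get $c^2$ in the denominator, the paper gets $c$), which traces back to the paper's own missing $1/c$ when passing from \eqref{1sp} to the explicit formula for $\bar r$; it does not affect the sign, and your justification via $e_p>0$, $e+p>0$, $|v|<c$ is exactly what is needed.
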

\begin{proof}
Here we also only prove the case for 1-rarefaction curves, the other case for 3-rarefaction curves can be proved similarly.
From (\ref{1sp}), we have
\begin{equation*}
\frac{dv}{dp}=-\frac{\sqrt{(c^2-v^2)e_{p}}}{{(e+p)c}}<0.
\end{equation*}

\end{proof}

%
%
\section{Relativistic Euler system for Diatomic gas}

In this section, we analyze basic properties of the relativistic Euler system (\ref{main1}) in the case of gas  with internal structure (polyatomic gas). Contents in this section are almost along the same line of Section 3. Therefore, we will present the corresponding results in a brief way.

For the relativistic Euler system for polyatomic gas, the corresponding constitutive equations are given in \eqref{3n} and (\ref{ener-den}). We can rewrite them also in this equivalent form:
\begin{align}
&p=k_B nT=\frac{k_B}{m}\rho T,\label{gp-pre}\\
&e=\frac{nmc^2}{A(\gamma)K_2(\gamma)}\int_0^{\infty}\Big[3\frac{K_2(\gamma^*)}{\gamma^*}+K_1(\gamma^*)\Big]\phi(\mathcal {I})d\mathcal {I},\label{gp-ener}\\
&\displaystyle n=\frac{4\pi m^3c^3h^{-3}A(\gamma)K_2(\gamma)}{\gamma}e^{\frac{-S}{k_B}}\exp\Big\{\displaystyle \int_0^{\infty}\Big[3\frac{K_2(\gamma^*)}{\gamma^*}+K_1(\gamma^*)\Big]\phi(\mathcal {I})d\mathcal {I}\Big\},\label{gp-num}
\end{align}
where
$$A(\gamma)=\frac{\gamma}{K_2(\gamma)}\int_0^{\infty}\frac{K_2(\gamma^*)}{\gamma^*}{\mathcal {I}}^a d\mathcal {I}.$$
Then for $a=0$, we use  Appendix 1 (\ref{deriva}) to have
$$A(\gamma)=\frac{\gamma}{K_2(\gamma)}\int_0^{\infty}\frac{K_2(\gamma^*)}{\gamma^*}d\mathcal {I}=
\frac{\gamma}{K_2(\gamma)}\frac{mc^2}{\gamma}\frac{K_1(\gamma)}{\gamma}=\frac{mc^2K_1(\gamma)}{\gamma K_2(\gamma)}.$$
In the rest part of this paper, we focus on the case $a=0$ (diatomic gases). Namely, $\phi(\gamma)=1$ and
\begin{equation*}
\begin{split}
A(\gamma)&=\frac{mc^2K_1(\gamma)}{\gamma K_2(\gamma)}\\
\int_0^{\infty}\Big[3\frac{K_2(\gamma^*)}{\gamma^*}+K_1(\gamma^*)\Big]\phi(\mathcal {I})d\mathcal {I}&=\frac{mc^2}{\gamma}\Big[3\frac{K_1(\gamma)}{\gamma}+K_0(\gamma)\Big].
\end{split}
\end{equation*}
Then, in our case discussed below, the constitutive equations given in (\ref{gp-pre}), (\ref{gp-ener}) and (\ref{gp-num}) take the special form:
\begin{eqnarray}
&&p=k_B nT,\label{p-pre}\\
&&e=nk_BT\left(\frac{\gamma K_0(\gamma)}{K_1(\gamma)}+3\right)=p\left(\frac{\gamma K_0(\gamma)}{K_1(\gamma)}+3\right),\label{p-ener}\\
&&n=\frac{4\pi e^4 m^4c^5h^{-3}K_1(\gamma)}{\gamma^2}e^{\frac{\gamma K_0(\gamma)}{K_1(\gamma)}}e^{-\frac{S}{k_B}}.\label{p-num}
\end{eqnarray}

\begin{proposition} Under the relations (\ref{p-pre})-(\ref{p-num}), the system (\ref{main2}) is strictly hyperbolic and the corresponding characteristic velocities are sub-luminal.
\end{proposition}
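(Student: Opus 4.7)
By the general framework of Section \ref{Ruggeri-Strumia}, both strict hyperbolicity and the sub-luminal property of \eqref{main2} reduce, via \eqref{condQ}--\eqref{convcondi} and the expression \eqref{case44} of the characteristic speeds, to verifying the two inequalities $c_V>0$ and $0<p_e<1$. As in the monatomic case, the stronger bound $e_p>3$ (equivalently $p_e<1/3$) is what drives both $\bar{\lambda}_{max}<1$ and $\hat{\lambda}_{max}<1/\sqrt{3}$ in the proof of Theorem \ref{sub-luminal}. The plan is therefore to repeat the arguments of Sections 3.1--3.3 almost verbatim, with the Bessel ratio $K_1(\gamma)/K_2(\gamma)$ replaced throughout by $K_0(\gamma)/K_1(\gamma)$.

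The first step is to derive the diatomic analogues of \eqref{pz} and of the expression for $c_V$ obtained in the monatomic case. Starting from \eqref{p-pre}--\eqref{p-num}, and using $K_0'(\gamma)=-K_1(\gamma)$ together with the three-term recurrence $K_2(\gamma)=K_0(\gamma)+(2/\gamma)K_1(\gamma)$, which yields $K_1'(\gamma)/K_1(\gamma)=-K_0/K_1-1/\gamma$, a short calculation produces closed-form rational expressions for both $e_p$ and $c_V$ in the variables $y:=K_0(\gamma)/K_1(\gamma)$ and $\gamma$. The desired inequalities $e_p>3$ and $c_V>0$ then reduce to explicit polynomial inequalities in $y$ and $\gamma$.

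The main obstacle, and the only genuinely new point, is the verification of these polynomial inequalities. The recurrence gives only $K_0/K_1=K_2/K_1-2/\gamma$, which expresses $y$ in terms of the reciprocal of $K_1/K_2$ rather than $K_1/K_2$ itself, so the monatomic estimates collected in Appendix 3 cannot be transported to the diatomic setting by a direct substitution. I would therefore re-derive the required bounds from scratch using the same tools as in Appendix 3: the integral representations of $K_\nu(\gamma)$ for integer $\nu$ combined with Cauchy--Schwarz to obtain Turán-type estimates among $K_0$, $K_1$, $K_2$, followed by elementary manipulations to convert them into the sign statements needed for the polynomials in $y$. Once these are in hand, strict hyperbolicity of \eqref{main2} follows from \eqref{case44}, and the sub-luminal bounds $\bar{\lambda}_{max}<1$ and $\hat{\lambda}_{max}<1/\sqrt{3}$ follow verbatim from the computation in the proof of Theorem \ref{sub-luminal}, with $K_1/K_2$ replaced everywhere by $K_0/K_1$.
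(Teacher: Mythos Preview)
Your overall strategy is exactly the paper's: reduce everything to $c_V>0$ and $e_p>3$, express both as explicit sign conditions in $y:=K_0(\gamma)/K_1(\gamma)$ and $\gamma$, and then verify these by elementary Bessel estimates. The computations you outline for $e_p$ and $c_V$ are precisely those the paper carries out (the diatomic analogues of \eqref{pz} and of the $c_V$ formula).

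The one concrete gap is in the tool you single out for the Bessel step. Cauchy--Schwarz on the integral representations yields Tur\'an-type inequalities such as $K_0K_2\ge K_1^2$, which after the recurrence $K_2=K_0+\tfrac{2}{\gamma}K_1$ gives a \emph{lower} bound on $y$. For $c_V>0$ one needs $\gamma^2y^2+\gamma y-\gamma^2-3<0$, and for $e_p>3$ one needs $\gamma^2y^3+2\gamma y^2-(\gamma^2+2)y-\gamma<0$; both require an \emph{upper} bound on $y$ sharper than $y<1$ once $\gamma$ is moderately large, and Tur\'an/Cauchy--Schwarz does not supply this. The paper does not use Tur\'an here at all: it splits at $\gamma=3$ (resp.\ $\gamma=2$), uses the trivial $y<1$ below the split, and above the split invokes the explicit asymptotic-expansion bounds on $K_0/K_1$ from Appendix~2 (Proposition~\ref{K01p}, equations \eqref{rough} and \eqref{acurate}). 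In particular, you need not ``re-derive from scratch'': the required upper bounds on $K_0/K_1$ are already recorded in Appendix~2, and the polynomial inequality underlying $e_p>3$ is exactly Proposition~\ref{peep} in Appendix~3. With those in hand your argument goes through verbatim.
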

\begin{proof}Corresponding to (\ref{cV}) and $(\ref{cV1})_2 $, we use (\ref{p-ener}) to have
 \begin{align*}
\begin{aligned}
\varepsilon=c^2\left[\frac{K_0 \left(\gamma
\right)}{K_1 \left(\gamma
\right)}+\frac{3}{\gamma}-1\right],
\end{aligned}
\end{align*}
and
\begin{align*}
\begin{aligned}
c_V=&\frac{d \varepsilon}{d T}=\rho c^2\frac{\partial}{\partial \gamma}\left[\frac{K_0 \left(\gamma
\right)}{K_1 \left(\gamma
\right)}+\frac{3}{\gamma}-1\right]\frac{d\gamma}{d T}=\\
&-\frac{k_B \rho}{  m}  \left[\gamma^2\left(\frac{K_0(\gamma)}{K_1(\gamma)}\right)^2+\gamma\frac{K_0(\gamma)}{K_1(\gamma)}- \gamma^2-3\right]>0,
\end{aligned}
\end{align*}
since for $\gamma\leq 3$,
\begin{equation}\label{p-cV1}
\gamma^2\left(\frac{K_0(\gamma)}{K_1(\gamma)}\right)^2+\gamma\frac{K_0(\gamma)}{K_1(\gamma)}-\gamma^2-3<\gamma-3\leq0,
\end{equation}
and for $\gamma>3$,
\begin{equation}\label{p-cV2}
\begin{split}
&\gamma^2\left(\frac{K_0(\gamma)}{K_1(\gamma)}\right)^2+\gamma\frac{K_0(\gamma)}{K_1(\gamma)}-\gamma^2-3 \leq\\
&\hspace{1cm} \gamma^2\left(1-\frac{1}{2\gamma}+\frac{3}{8\gamma^2}\right)^2+\gamma-\frac{1}{2}+\frac{3}{8\gamma}-\gamma^2-3=\\
&\hspace{1cm}-\frac{5}{2}+\frac{9}{64\gamma^2} <0,
\end{split}
\end{equation}
by Appendix 2 (\ref{acurate}). Then $c_p>0$ and the strict hyperbolicity follows from \eqref{case44}.

To prove the characteristic velocities of the system (\ref{main2}) with constitutive equations (\ref{p-pre})-(\ref{p-num}) are sub-luminal, as in the proof of Theorem \ref{sub-luminal}, we only need to prove that $e_p>3$. In fact, As in (\ref{pz}), under the relations (\ref{p-pre})-(\ref{p-num}), we have
\begin{align*}
\begin{aligned}
e_{p}=&\gamma\frac{K_0(\gamma)}{K_1(\gamma)}+3+\frac{\gamma\left(\frac{K_0(\gamma)}{K_1(\gamma)}\right)^2
+2\frac{K_0(\gamma)}{K_1(\gamma)}-\gamma}{\gamma\left(\frac{K_0(\gamma)}{K_1(\gamma)}\right)^2
+\frac{K_0(\gamma)}{K_1(\gamma)}-\gamma-\frac{4}{\gamma}} =\\
&3+\frac{\gamma^2\left(\frac{K_0(\gamma)}{K_1(\gamma)}\right)^3+2\gamma\left(\frac{K_0(\gamma)}{K_1(\gamma)}\right)^2-(\gamma^2
	+2)\frac{K_0(\gamma)}{K_1(\gamma)}- \gamma}{\gamma\left(\frac{K_0(\gamma)}{K_1(\gamma)}\right)^2
+\frac{K_0(\gamma)}{K_1(\gamma)}-\gamma-\frac{4}{\gamma}}>3,
\end{aligned}
\end{align*}
by (\ref{p-cV1}) (\ref{p-cV2}) and Appendix 3 (\ref{imp-inep}).
\end{proof}
The behavior of $\hat{\lambda}_3$ is plotted in Figure \ref{curves} ($a=0$).
By the same arguments in Subsection \ref{sec-genui}, we can also show that the second eigenvalue of  the system (\ref{main2}) is linear degenerate under the relations (\ref{p-pre})-(\ref{p-num}). Moreover, we have the following proposition.

\begin{proposition} Under the relations (\ref{p-pre})-(\ref{p-num}), the inequality corresponding to (\ref{negativity}) holds. The first and third eigenvalues of the system (\ref{main2}) are genuinely nonlinear.
\end{proposition}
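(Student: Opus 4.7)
The plan is to mirror the strategy of Proposition \ref{genuine} (the monatomic case), adapting every step to the diatomic constitutive laws \eqref{p-pre}--\eqref{p-num} in which the ratio $K_1/K_2$ is replaced by $K_0/K_1$. The final conclusion about genuine nonlinearity of $\lambda_1$ and $\lambda_3$ follows formally from \eqref{genui} and its symmetric counterpart for $\lambda_3$ once the inequality analogous to \eqref{negativity} is established, together with the freedom to normalize $\widetilde r$ so that $\nabla\lambda_1\cdot r_1=1$. So the entire technical content lies in proving
\begin{equation*}
(e+p)\,e_{pp}-2e_p(e_p-1)<0 \qquad \text{for all }\gamma\in(0,\infty),
\end{equation*}
now with $e,p$ given by \eqref{p-pre}--\eqref{p-num}.

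The first step is to differentiate \eqref{p-pre} and \eqref{p-num} with respect to $\gamma$ along an isentrope. Using the Bessel identities collected in Appendix 1, in particular \eqref{transform} and \eqref{deriva} applied to $K_0,K_1$ (which give $K_1'(\gamma)=-\tfrac12(K_0(\gamma)+K_2(\gamma))$ and the relation $K_2(\gamma)=\tfrac{2}{\gamma}K_1(\gamma)+K_0(\gamma)$), one obtains compact expressions for $p_\gamma/p$ and hence for $e_p=e_\gamma/p_\gamma$. The outcome is the diatomic analogue of \eqref{pz}: the same cubic-in-$K_0/K_1$ numerator divided by a quadratic denominator, plus the leading constant $3$. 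From there a second differentiation along the isentrope yields an explicit, if lengthy, expression of $e_{pp}$ as a rational function of $\rho:=K_0(\gamma)/K_1(\gamma)$ with polynomial coefficients in $\gamma$.

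The second step is the algebraic reduction of $(e+p)e_{pp}-2e_p(e_p-1)$ to a single rational expression in $\rho$ and $\gamma$. As in the monatomic proof, the common denominator (a positive quantity by the diatomic versions of Appendix 3, cf.\ the inequalities \eqref{p-cV1}--\eqref{p-cV2} already used in the strict-hyperbolicity proof above) can be discarded, leaving a polynomial $P(\rho,\gamma)$ whose sign has to be controlled. The decisive ingredients will be (i) the diatomic analogue of Appendix 3 \eqref{imp-inep}, which bounds $\rho=K_0/K_1$ from above and below in a $\gamma$-dependent way, and (ii) the sharp asymptotic expansion of Appendix 2 \eqref{acurate} that was already exploited to treat the $\gamma>3$ regime in \eqref{p-cV2}.

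The main obstacle, and the one that makes this more than a verbatim translation of Appendix 4, is that the diatomic inequalities on $K_0/K_1$ are less standard and weaker than those on $K_1/K_2$; in particular the crude bound $\rho<1$ from \eqref{acurate} is only asymptotic, so for small $\gamma$ one must work with the logarithmic singularity of $K_0$ and use the two-sided estimates of Appendix 3 to keep $P(\rho,\gamma)$ strictly negative. I would therefore split the analysis into three regimes: the ultra-relativistic regime $\gamma\to 0^+$ (where the expansions $K_0(\gamma)\sim -\ln\gamma$, $K_1(\gamma)\sim 1/\gamma$ give $(e+p)e_{pp}\to 0^-$ and $2e_p(e_p-1)\to 2\cdot 4\cdot 3>0$, so the sign is immediate), the intermediate regime $\gamma\in[\gamma_0,\gamma_1]$ treated by the explicit bounds on $\rho$, and the classical regime $\gamma\to\infty$ handled by the sharp expansion \eqref{acurate}, exactly as in \eqref{p-cV2}. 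Once the negativity is in hand, genuine nonlinearity of $\lambda_1$ and (by the obvious sign-symmetric computation) of $\lambda_3$ follow from \eqref{genui} by choosing $\widetilde r$ to normalize the inner product to $1$, and the linear degeneracy of $\lambda_2$ is exactly \eqref{eigvec2}, which is insensitive to the equation of state.
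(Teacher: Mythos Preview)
Your plan follows the paper's own route almost exactly: the paper also reduces $(e+p)e_{pp}-2e_p(e_p-1)$ to an explicit rational expression in $\rho:=K_0/K_1$ and $\gamma$, isolates a single scalar quantity $\mathcal I_3(\gamma)$ whose sign governs everything, and then verifies that sign on three $\gamma$-intervals. So strategically you are aligned with the paper.

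However, what you have written is an outline, not a proof, and it contains two concrete errors that would derail the execution. First, the common denominator you propose to discard as ``a positive quantity'' is not positive: the quantity $\gamma\rho^2+\rho-\gamma-4/\gamma$ is strictly \emph{negative} (this is exactly what \eqref{p-cV1}--\eqref{p-cV2} give, cf.\ \eqref{p-pp2}), and in the reduction it appears \emph{cubed}, so its sign is essential. The paper uses this to conclude that one must prove $\mathcal I_3(\gamma)>0$, not that some residual polynomial is negative. Second, your ultra-relativistic argument is both numerically off ($e_p\to 3$, not $4$, so $2e_p(e_p-1)\to 12$) and logically insufficient: a limit does not control an interval. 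The paper instead gets a quantitative bound $\mathcal I_3(\gamma)\ge -1-9/\gamma+12/\gamma^2$, positive for all $\gamma\le \gamma_1:=(-9+\sqrt{129})/2$, and then treats $(\gamma_1,\sqrt 2]$ and $(\sqrt 2,\infty)$ separately via the sharp two-sided estimates of Propositions \ref{new-K01p} and \ref{K01p}.

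The piece that carries all the weight is the actual algebra you omit: after the two differentiations and simplification the paper arrives at
\[
\mathcal I_3(\gamma)=\gamma^2\bigl(1-\rho^2\bigr)^2-11\rho^2-\tfrac{9}{\gamma}\rho+10+\tfrac{12}{\gamma^2},
\]
and the entire proof is the positivity of this specific expression. Without producing this (or an equivalent) reduction and carrying out the interval-by-interval verification, the argument is incomplete.
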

\begin{proof}
As in the proof of  Proposition \ref{genuine} in Appendix 4, we only need to show
\begin{align}\label{p-pp1}
\begin{aligned}
&(e+p)e_{pp}-2e_{p}(e_{p}-1)=\\
&e_{p}(-2e_{p}+3)+\frac{e}{p}\Big(e_{p}-\frac{e}{p}-1\Big)
+\frac{e+p}{\partial_\gamma p}\partial_\gamma\Big(\frac{p}{\partial_\gamma p}\frac{d}{d\gamma}\Big(\gamma\frac{K_0(\gamma)}{K_1(\gamma)}\Big)\Big)
<\\
&-9+\Big(\gamma\frac{K_0(\gamma)}{K_1(\gamma)}+3\Big)\frac{\frac{K_0(\gamma)}{K_1(\gamma)}+\frac{4}{\gamma}}
{\gamma\Big(\frac{K_0(\gamma)}{K_1(\gamma)}\Big)^2
+\frac{K_0(\gamma)}{K_1(\gamma)}-\gamma-\frac{4}{\gamma}}+\\
&\Big(\gamma\frac{K_0(\gamma)}{K_1(\gamma)}+4\Big)\Bigg[\frac{\frac{1}{\gamma}}{\gamma\Big(\frac{K_0(\gamma)}{K_1(\gamma)}\Big)^2
+\frac{K_0(\gamma)}{K_1(\gamma)}-\gamma-\frac{4}{\gamma}}-\Big(\frac{K_0(\gamma)}{K_1(\gamma)}+\frac{4}{\gamma}\Big)\times \\
&\frac{2\gamma\Big(\frac{K_0(\gamma)}{K_1(\gamma)}\Big)^3
+4\Big(\frac{K_0(\gamma)}{K_1(\gamma)}\Big)^2
+\Big(\frac{1}{\gamma}-2\gamma\Big)\frac{K_0(\gamma)}{K_1(\gamma)}-2+\frac{4}{\gamma^2}}
{\Big(\gamma\Big(\frac{K_0(\gamma)}{K_1(\gamma)}\Big)^2+3\frac{K_0(\gamma)}{K_1(\gamma)}-\gamma-\frac{4}{\gamma}\Big)^3}\Bigg]=\\
&-9+\frac{\Big(\gamma\frac{K_0(\gamma)}{K_1(\gamma)}+4\Big)\Big(\frac{K_0(\gamma)}{K_1(\gamma)}+\frac{4}{\gamma}\Big)}
{\Big(\gamma\Big(\frac{K_0(\gamma)}{K_1(\gamma)}\Big)^2+\frac{K_0(\gamma)}{K_1(\gamma)}-\gamma-\frac{4}{\gamma}\Big)^3}\times\mathcal {I}_3(\gamma)
\Big]<0,
\end{aligned}
\end{align}
where $$\mathcal {I}_3(\gamma)=\gamma^2\Big[1-\Big(\frac{K_0(\gamma)}{K_1(\gamma)}\Big)^2\Big]^2-11\Big(\frac{K_0(\gamma)}{K_1(\gamma)}\Big)^2
-\frac{9}{\gamma}\frac{K_0(\gamma)}{K_1(\gamma)}+10+\frac{12}{\gamma^2}.$$
We first use (\ref{p-cV1}) and (\ref{p-cV2}) to have
\begin{equation}\label{p-pp2}
\gamma\left(\frac{K_0(\gamma)}{K_1(\gamma)}\right)^2+\frac{K_0(\gamma)}{K_1(\gamma)}-\gamma-\frac{4}{\gamma}
<\gamma\left(\frac{K_0(\gamma)}{K_1(\gamma)}\right)^2+\frac{K_0(\gamma)}{K_1(\gamma)}-\gamma-\frac{3}{\gamma}<0.
\end{equation}
Then we combine (\ref{p-pp1}) and (\ref{p-pp2}) to see that in order to prove (\ref{negativity}), one only need to show
\begin{equation}\label{p-nega}
\mathcal {I}_3(\gamma)>0.
\end{equation}
Now we come to prove (\ref{p-nega}). We first note the fact
\begin{align*}
\begin{aligned}
\mathcal {I}_3(\gamma)\geq&-1 -\frac{9}{\gamma}+\frac{12}{\gamma^2}.\notag
 \end{aligned}
\end{align*}
This inequality implies that (\ref{p-nega}) holds for $\gamma\leq \gamma_1$, where $\gamma_1=\frac{-9+\sqrt{129}}{2}>1.1789>\gamma_0$ satisfies
$$\gamma^2_1+9\gamma_1-12=0.$$
Next we show that (\ref{negativity}) holds for $\gamma\in (\gamma_1, \sqrt{2}]$. In fact, we use Appendix 1 (\ref{new1}) in Proposition \ref{new-K01p} to have
\begin{align*}
\mathcal {I}_3(\gamma&)>(\gamma_0-1)^2-11\Big(1-\frac{\gamma_0-1}{\gamma}\Big)^2
-\frac{9}{\gamma}\Big(1-\frac{\gamma_0-1}{\gamma}\Big)+10+\frac{12}{\gamma^2}>\\
&-11\Big[1-\frac{2\gamma_0-2}{\gamma}+\frac{(\gamma_0-1)^2}{\gamma^2}\Big]-\frac{9}{\gamma}+\frac{9(\gamma_0-1)}{\gamma^2}+10+\frac{12}{\gamma^2}>\\
&-1+\frac{22\gamma_0-31}{\gamma}+\frac{3+9\gamma_0-11(\gamma_0-1)^2}{\gamma^2}=\\
&-1-\frac{13}{2\gamma}+\frac{25}{2\gamma^2}>2,\hspace{3cm} \gamma\in (\gamma_1, \sqrt{2}].
 \end{align*}
Finally, we prove  (\ref{negativity}) holds for $\gamma\in (\sqrt{2}, \infty)$. We use Appendix 2 (\ref{rough}) to obtain
\begin{align*}
\mathcal {I}_3(\gamma)\geq&\gamma^2\Big(\frac{1}{2\gamma}-\frac{3}{8\gamma^2}-\frac{3}{16\gamma^3}\Big)^2\Big(2-\frac{1}{2\gamma}\Big)^2-\\
&11\Big(1-\frac{1}{2\gamma}+\frac{3}{8\gamma^2}+\frac{3}{16\gamma^3}\Big)^2-\\
&\frac{9}{\gamma}\Big(1-\frac{1}{2\gamma}+\frac{3}{8\gamma^2}+\frac{3}{16\gamma^3}\Big)+10+\frac{12}{\gamma^2}=\\
&\frac{1}{4}\Big(1-\frac{3}{4\gamma}-\frac{3}{8\gamma^2}\Big)^2\Big(4-\frac{2}{\gamma}+\frac{1}{4\gamma^2}\Big)
-\\
&11\Big(1-\frac{1}{\gamma}+\frac{1}{\gamma^2}-\frac{3}{64\gamma^4}+\frac{9}{64\gamma^5}+\frac{9}{256\gamma^6}\Big)\\
&+10-\frac{9}{\gamma}+\frac{33}{2\gamma^2}-\frac{27}{8\gamma^3}-\frac{27}{16\gamma^4}=\\
&\Big(1-\frac{1}{2\gamma}+\frac{1}{16\gamma^2}\Big)\Big(1-\frac{3}{2\gamma}-\frac{3}{16\gamma^2}+\frac{9}{16\gamma^3}+\frac{9}{64\gamma^4}\Big)\\
&-1+\frac{2}{\gamma}+\frac{11}{2\gamma^2}-\frac{27}{8\gamma^3}-\frac{75}{64\gamma^4}-\frac{99}{64\gamma^5}-\frac{99}{256\gamma^6}=\\
&\Big(\frac{3}{4}+\frac{-3+1}{16}+\frac{11}{2}\Big)\frac{1}{\gamma^2}+\Big(\frac{9}{16}-\frac{27}{8}\Big)\frac{1}{\gamma^3}\\
&+\Big(\frac{9}{64}-\frac{9}{32}-\frac{3}{256}-\frac{75}{64}\Big)\frac{1}{\gamma^4}\\
&+\Big(-\frac{9}{128}+\frac{9}{256}-\frac{99}{64}\Big)\frac{1}{\gamma^5}+\Big(\frac{9}{16\times64}-\frac{99}{256}\Big)\frac{1}{\gamma^6}=\\
&\frac{49}{8\gamma^2}-\frac{45}{16\gamma^3}-\frac{339}{256\gamma^4}-\frac{405}{256\gamma^5}-\frac{387}{1024\gamma^6}>0,
 \end{align*}
for $\gamma\in (\sqrt{2}, \infty)$.

\end{proof}

Corresponding to the monotonicity of the velocity in Proposition \ref{mono-velo}, we have the following proposition.

\begin{proposition} For the relativistic Euler system (\ref{main1}) with constitutive equations (\ref{p-pre})-(\ref{p-num}), $\frac{dv}{dp}<0$ for the 1-shock curves, and $\frac{dv}{dp}>0$ for the 3-shock curves.
\end{proposition}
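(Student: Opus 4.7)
The plan is to follow the strategy of Proposition~\ref{mono-velo} line by line, transplanting every Bessel-function identity from the monatomic ratio $K_1/K_2$ to the diatomic ratio $K_0/K_1$. As in the monatomic case I will only treat the 1-shock curve ($s<0$, $v<v_L=0$); the 3-shock statement is obtained by the parallel argument. Fixing coordinates so that $v_L=0$, the Rankine--Hugoniot relations (\ref{RH2}) still give
\[
v^2 \;=\; \frac{(p-p_L)(e-e_L)}{(p+e_L)(p_L+e)}\,c^2,
\]
and differentiating in $p$ reduces $dv/dp<0$ to the key inequality
\[
(e-e_L)(e+p_L) \;+\; (p-p_L)(p+e_L)\,\frac{de}{dp} \;>\; 0. \qquad (\star)
\]

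First I would compute $\partial_\gamma p$ and $de/dp$ from the diatomic constitutive relations (\ref{p-pre})--(\ref{p-num}) together with the Hugoniot identity (\ref{curve}), obtaining expressions that are structurally identical to (\ref{entr-grow4}) but with every occurrence of $K_1/K_2$ replaced by $K_0/K_1$ and the constants in the coefficients shifted accordingly. I would then introduce the diatomic analogues $B_1^{\ast}(\gamma)$, $B_2^{\ast}(\gamma)$, $B_3^{\ast}(\gamma)$ of the auxiliary functions in (\ref{B1231})--(\ref{B1232}) and establish the sign pattern
\[
B_1^{\ast}<0,\quad B_2^{\ast}<0,\quad B_3^{\ast}>0,\quad B_1^{\ast}-B_2^{\ast}<0,\quad 2B_1^{\ast}-B_2^{\ast}+B_3^{\ast}<0.
\]
Next, differentiating the diatomic entropy density, read off from the logarithm of (\ref{p-num}), with respect to $p$ and invoking the entropy-growth corollary of Section~\ref{Ruggeri-Strumia} produces the analogue of (\ref{entr-grow1}), namely $dS/dp>0$ along the 1-shock curve; combined with the formula for $dp/d\gamma$ this forces $(e-e_L)/(p-p_L)>1$, the diatomic version of (\ref{ep11}). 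Feeding both ingredients into the algebraic rearrangement of $(\star)$ carried out in (\ref{vderi1}) then yields the claim.

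The main obstacle is verifying the sign pattern of the $B_i^{\ast}$. For the monatomic ratio $K_1/K_2$ the sharp bounds collected in Appendix~3 deliver these inequalities in one stroke, but the ratio $K_0/K_1$ does not satisfy comparable uniform estimates; as already happens in the diatomic strict hyperbolicity and genuine nonlinearity arguments proved in this section, I expect to partition $\gamma\in(0,\infty)$ into several subintervals. Following the pattern used there, the natural split is at $\gamma=\sqrt{2}$, using the sharp two-sided bound (\ref{new1}) of Proposition~\ref{new-K01p} for small $\gamma$ and the coarse asymptotic estimate (\ref{rough}) of Appendix~2 for $\gamma>\sqrt{2}$, combined with elementary polynomial manipulations in $1/\gamma$ and in $K_0/K_1$. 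Once the $B_i^{\ast}$-inequalities are settled on each subinterval, the remainder of the argument is a mechanical repetition of the monatomic computation.
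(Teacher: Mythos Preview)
Your proposal is correct and follows essentially the same route as the paper. The only difference is that you overestimate the work needed for the sign pattern of the $B_i^{\ast}$: the paper does \emph{not} perform any fresh interval splitting inside this proof, but instead observes that each sign follows from an algebraic factorization together with inequalities already established earlier in the diatomic section and the appendix---namely (\ref{p-cV1})--(\ref{p-cV2}) (giving $\gamma(K_0/K_1)^2+K_0/K_1-\gamma-4/\gamma<0$, hence $\bar B_1<0$ and $\bar B_2+\bar B_3<0$), Proposition~\ref{peep} (giving $\bar B_1-\bar B_2<0$), and Proposition~\ref{new-K01p} (giving $\bar B_3>0$). So the ``main obstacle'' you identify has in fact already been dealt with in the hyperbolicity and genuine-nonlinearity arguments you mention; here one only needs to quote those results and factor the $\bar B_i$ appropriately.
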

\begin{proof}
As in Proposition \ref{mono-velo}, we choose proper coordinate system such that $v_L=0$. Similar to the derivation in Proposition \ref{mono-velo}, we can show that $\frac{dv}{dp}<0$ is equivalent to derive
\begin{equation}\label{p-ep1}
(e-e_L)(e+p_L)+(p-p_L)(p+e_L)\frac{de}{dp}>0.\qquad\mbox{and}
\end{equation}
\begin{equation}\label{pentr-grow4}
\begin{split}
\frac{de}{dp}=&\gamma\frac{K_0(\gamma)}{K_1(\gamma)}+3
+\frac{\bar{B}_3(\gamma)\left(\gamma\frac{K_0(\gamma)}{K_1(\gamma)}+4\right)\left(\frac{e+p_L}{e_L+p}p+p_L\right)}
{\left[2\bar{B}_1(\gamma)-\bar{B}_2(\gamma)\right]p+\bar{B}_2(\gamma)p_L}.
\end{split}
\end{equation}
Here and below, we use the following notations:
\begin{equation*}
\begin{split}
\bar{B}_1(\gamma)=:&\gamma^2\Big(\frac{K_0(\gamma)}{K_1(\gamma)}\Big)^3+5\gamma\Big(\frac{K_0(\gamma)}{K_1(\gamma)}\Big)^2
-\gamma^2\frac{K_0(\gamma)}{K_1(\gamma)}-4\gamma-\frac{16}{\gamma},\\
\bar{B}_2(\gamma)=:&\gamma\left(\frac{K_0(\gamma)}{K_1(\gamma)}\right)^2-\gamma-\frac{8}{\gamma},\\
\bar{B}_3(\gamma)=:&\gamma\Big(\frac{K_0(\gamma)}{K_1(\gamma)}\Big)^2+2\frac{K_0(\gamma)}{K_1(\gamma)}-\gamma.
\end{split}
\end{equation*}
We use Appendix Proposition \ref{new-K01p}, \ref{K01p}, \ref{peep} to have
\begin{align}\label{pB1231}
\begin{aligned}
\bar{B}_1(\gamma)&=\Big[\gamma\left(\frac{K_0(\gamma)}{K_1(\gamma)}\right)^2
+\frac{K_0(\gamma)}{K_1(\gamma)}- \gamma-\frac{4}{\gamma}\Big]\Big(\gamma\frac{K_0(\gamma)}{K_1(\gamma)}+4\Big)
<0,\\
\bar{B}_3(\gamma)&=\gamma\Big(\frac{K_0(\gamma)}{K_1(\gamma)}\Big)^2+2\frac{K_0(\gamma)}{K_1(\gamma)}-\gamma>0,
\end{aligned}
\end{align}
and
\begin{align}\label{pB1232}
\begin{aligned}
&\bar{B}_1(\gamma)-\bar{B}_2(\gamma)=\\
&\hspace{1cm}\gamma^2\Big(\frac{K_0(\gamma)}{K_1(\gamma)}\Big)^3+4\gamma\Big(\frac{K_0(\gamma)}{K_1(\gamma)}\Big)^2
-\gamma^2\frac{K_0(\gamma)}{K_1(\gamma)}-3\gamma-\frac{8}{\gamma}=\\
&\hspace{1cm}\Big[\gamma^2\left(\frac{K_0(\gamma)}{K_1(\gamma)}\right)^3+2\gamma\left(\frac{K_0(\gamma)}{K_1(\gamma)}\right)^2-(\gamma^2
	+2)\frac{K_0(\gamma)}{K_1(\gamma)}- \gamma\Big]+\\
&\hspace{1cm}2\Big[\gamma\left(\frac{K_0(\gamma)}{K_1(\gamma)}\right)^2
+\frac{K_0(\gamma)}{K_1(\gamma)}- \gamma-\frac{4}{\gamma}\Big]<0,\\
&\bar{B}_1(\gamma)-\bar{B}_2(\gamma)-\bar{B}_3(\gamma)=\\
&\hspace{1cm}\Big[\gamma\left(\frac{K_0(\gamma)}{K_1(\gamma)}\right)^2
+\frac{K_0(\gamma)}{K_1(\gamma)}- \gamma-\frac{4}{\gamma}\Big]\Big(\gamma\frac{K_0(\gamma)}{K_1(\gamma)}+2\Big)
<0.
\end{aligned}
\end{align}
Now we use (\ref{p-ep1}) and (\ref{pentr-grow4}) to have
\begin{equation*}
\begin{split}
 &(e-e_L)(e+p_L)+(p-p_L)(p+e_L)\\
 &\times\bigg[\gamma\frac{K_0(\gamma)}{K_1(\gamma)}+3
 +\frac{(\gamma\frac{K_0(\gamma)}{K_1(\gamma)}+4)\bar{B}_3(\gamma)(\frac{e+p_L}{p+e_L}p+p_L)}{[2\bar{B}_1(\gamma)-\bar{B}_2(\gamma)]p+\bar{B}_2(\gamma)p_L}\bigg]>0.
 \end{split}
 \end{equation*}
By (\ref{pB1231}) and (\ref{pB1232}), the equation above can be further simplified as
\begin{equation}\label{p-velo-deri1}
\begin{split}
 &(p-p_L)(p+e_L)\Big\{\frac{e-e_L}{p-p_L}\frac{e+p_L}{e_L+p}\bar{B}_2(\gamma)+\bar{B}_3(\gamma)+\\
 & \Big(\gamma\frac{K_0(\gamma)}{K_1(\gamma)}+3\Big)\Big(\bar{B}_2(\gamma)+\bar{B}_3(\gamma)\Big)\Big\}p_L+\\
& (p-p_L)(e+p_L)\Big\{\Big(\frac{e-e_L}{p-p_L}+\frac{e_L+p}{e+p_L}\frac{e}{p}\Big)[2\bar{B}_1(\gamma)-\bar{B}_2(\gamma)]+\\
& \Big(\gamma\frac{K_0(\gamma)}{K_1(\gamma)}+4\Big)\bar{B}_3(\gamma)\Big\}p<0.
 \end{split}
 \end{equation}
 By almost the same derivation of \eqref{ep1}
 \begin{equation*}
 \frac{e-e_L}{p-p_L}>1.
 \end{equation*}
Then we further use (\ref{pB1231}), (\ref{pB1232}) to have
 \begin{equation*}
 \begin{split}
 &\frac{e-e_L}{p-p_L}\frac{e+p_L}{e_L+p}\bar{B}_2(\gamma)+\bar{B}_3(\gamma)<\bar{B}_2(\gamma)+\bar{B}_3(\gamma)<0,\\
 &\frac{e-e_L}{p-p_L}+\frac{e_L+p}{e+p_L}\frac{e}{p}=\frac{(ep-e_Lp_L)(e+p)}{(p-p_L)(e+p_L)p}=\\
& \hspace{1cm} \frac{e+p}{p}\Big(1+\frac{[(e-e_L)-(p-p_L)]p_L}{(p-p_L)(e+p_L)}\Big)>\\
& \hspace{1cm}\frac{e+p}{p}=\gamma\frac{K_0(\gamma)}{K_1(\gamma)}+4.
 \end{split}
 \end{equation*}
 Then (\ref{p-velo-deri1}) holds since $2\bar{B}_1(\gamma)-\bar{B}_2(\gamma)+\bar{B}_3(\gamma)=2[\bar{B}_1(\gamma)-\bar{B}_2(\gamma)]+[\bar{B}_2(\gamma)+\bar{B}_3(\gamma)]<0$ by (\ref{pB1232}).
\end{proof}
The behavior of the production of entropy across the shock is given in Figure \ref{fig:etarel} with $a=0$.
By the same arguments as in Proposition \ref{mono-rare} for the monatomic gas, we also have the monotonicity of the velocity on rarefaction curves.
\begin{proposition} For the relativistic Euler system (\ref{main1}) with constitutive equations (\ref{p-pre})-(\ref{p-num}), $\frac{dv}{dp}<0$ on the 1-rarefaction curves, and $\frac{dv}{dp}>0$ on the 3-rarefaction curves.
\end{proposition}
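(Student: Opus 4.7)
The plan is to carry out the argument from Proposition~\ref{mono-rare} essentially verbatim, since the Riemann invariant computation in Section~3.5 was phrased entirely in terms of the symbolic quantity $e_p$ and is therefore insensitive to the precise equation of state. First I would observe that the eigenvectors \eqref{eigen3} and hence the Riemann invariant ODE \eqref{1sp} that characterizes the 1-rarefaction curves depend on the constitutive relation only through $e_p$; the same is true for the 3-rarefaction analog obtained by holding the Riemann invariant $\bar s$ constant. Consequently, along the respective rarefaction curves we still have
\begin{equation*}
\frac{dv}{dp} \;=\; \mp\,\frac{\sqrt{e_p}\,(c^2-v^2)}{(e+p)\,c},
\end{equation*}
with the minus sign on the 1-family and the plus sign on the 3-family.

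The remaining task is then to verify that the right-hand side above has the claimed sign under the diatomic constitutive equations \eqref{p-pre}--\eqref{p-num}. This reduces to positivity of the three quantities $e_p$, $c^2-v^2$, and $e+p$. The lower bound $e_p>3>0$ was already established in the proof of the strict hyperbolicity and sub-luminal proposition earlier in this section (via the Bessel-function inequalities used to obtain \eqref{p-cV1} and \eqref{p-cV2} together with the $K_0/K_1$ estimates of Appendix~3); the sub-luminal condition $|v|<c$ was obtained in the same proposition; and $e+p>0$ is immediate from positivity of pressure and energy density for a rarefied gas. Combining these three facts yields $dv/dp<0$ on the 1-rarefaction curves and $dv/dp>0$ on the 3-rarefaction curves, as claimed.

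There is essentially no real obstacle here, as all the delicate Bessel function work needed for this statement has already been done when proving strict hyperbolicity and sub-luminality in the diatomic case. The only conceptual point worth making explicit is that the Riemann invariant calculation of Section~3.5 used the constitutive law only through the symbol $e_p$, so it transfers to the diatomic setting without modification; an alternative, more computational route would be to first write down the diatomic analogs of $\bar r,\bar s$ by integrating $\int \sqrt{e_p}/((e+p)c)\,dp$ with the $e_p$ from the diatomic case and then differentiate, but this extra bookkeeping is unnecessary for extracting the sign of $dv/dp$.
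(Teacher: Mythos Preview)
Your proposal is correct and follows exactly the approach the paper takes: the paper simply remarks that ``by the same arguments as in Proposition~\ref{mono-rare}'' the result holds, and Proposition~\ref{mono-rare} itself just reads off the sign of $dv/dp$ from the Riemann invariant ODE~\eqref{1sp} using positivity of $e_p$, $c^2-v^2$, and $e+p$. Your explicit observation that the derivation of~\eqref{1sp} depends on the constitutive law only through the symbol $e_p$ is precisely what justifies the transfer to the diatomic case, and is a helpful clarification of what the paper leaves implicit.
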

%
%
%
%

%
%
\section{Proof of the main theorem \ref{main theorem}}
In this section, we are devoted to the proof of the main Theorem \ref{main theorem}. We first discuss the condition under which vacuum occurs. Then for the case that no vacuum occurs,  based on the analysis about the structure of shock curves and the monotonicity of the velocity on rarefaction curves in Section 3 and 4, we solve the Riemann problem of the relativistic Euler system (\ref{main1}) with  constitutive equations given in (\ref{press}), (\ref{ener-den}) and (\ref{num-den}) or in  (\ref{p-pre}), (\ref{p-ener}) and (\ref{p-num}).
%
%
\subsection{Vacuum condition} To give a vacuum condition, we first define vacuum. We say that vacuum occurs if
 $$e=0.$$
In fact, we have obtained that  the pressure $p$ is monotonic along 1-curves and 3-curves, and is constant along 2-contact discontinuity waves. Note that $e=p\left(\gamma\frac{K_1(\gamma)}{K_2(\gamma)}+3\right)$ for monatomic gas and $e=p\left(\gamma\frac{K_0(\gamma)}{K_1(\gamma)}+3\right)$ for diatomic gas. It is easy to
see that the vacuum may occur only when a 1-rarefaction wave interacts a 3-rarefaction wave.

Denote $(\bar{r}_L, \bar{s}_L)$ and $(\bar{r}_R, \bar{s}_R)$ as the Riemann invariants at $x < 0$ and $x>0$, respectively.
$\bar{r}_L, \bar{s}_L, \bar{r}_R$ and $\bar{s}_R$ can be written as
  \begin{align*}
   \bar{r}_L=&\frac12\ln\Big(\frac{c+v_L}{c-v_L}\Big)+\int_0^{p_L}\frac{\sqrt{e_{p}}dp}{(e+p)c}
  =\\
  &\frac12\ln\Big(\frac{c+v_L}{c-v_L}\Big)+\int_0^{e_L}\frac{de}{(e+p)\sqrt{e_{p}}c},\\
  \bar{s}_L=&\frac12\ln\Big(\frac{c+v_L}{c-v_L}\Big)-\int_0^{p_L}\frac{\sqrt{e_{p}}dp}{(e+p)c}
  =\\
  &\frac12\ln\Big(\frac{c+v_L}{c-v_L}\Big)-\int_0^{e_L}\frac{de}{(e+p)\sqrt{e_{p}}c},\\
  \bar{r}_R=&\frac12\ln\Big(\frac{c+v_R}{c-v_R}\Big)+\int_0^{p_R}\frac{\sqrt{e_{p}}dp}{(e+p)c}
 =\\
  &\frac12\ln\Big(\frac{c+v_R}{c-v_R}\Big)+\int_0^{e_R}\frac{de}{(e+p)\sqrt{e_{p}}c},\\
  \bar{s}_R=&\frac12\ln\Big(\frac{c+v_R}{c-v_R}\Big)-\int_0^{p_R}\frac{\sqrt{e_{p}}dp}{(e+p)c}
  =\\
  &\frac12\ln\Big(\frac{c+v_R}{c-v_R}\Big)-\int_0^{e_R}\frac{de}{(e+p)\sqrt{e_{p}}c}.
    \end{align*}
  Note that $\bar{r}$ is a constant along a 1-rarefaction curve, velocity $v$ and pressure $p$ are constant along a 2-contact discontinuity wave, and $\bar{s}$ is a constant along a 3-rarefaction curve. Therefore, there exist states $(e_1, v_1, S_1)$ and $(e_3, v_3, S_3)$ with $v_1=v_2, S_1=S_L, S_3=S_R$ such that
 \begin{align*}
  \begin{aligned}
  \bar{r}_L-\bar{s}_R=&\frac12\ln\Big(\frac{c+v_L}{c-v_L}\Big)+\int_0^{e_L}\frac{de}{(e+p)\sqrt{e_{p}}c}-\\
 & \Big[\frac12\ln\Big(\frac{c+v_R}{c-v_R}\Big)-\int_0^{e_R}\frac{de}{(e+p)\sqrt{e_{p}}c}\Big]=\\
  &\frac12\ln\Big(\frac{c+v_1}{c-v_1}\Big)+\int_0^{e_1}\frac{de}{(e+p)\sqrt{e_{p}}c}-\\
  &\Big[\frac12\ln\Big(\frac{c+v_3}{c-v_3}\Big)-\int_0^{e_3}\frac{de}{(e+p)\sqrt{e_{p}}c}\Big]=\\
  & \int_0^{e_1}\frac{de}{(e+p)\sqrt{e_{p}}c}+\int_0^{e_3}\frac{de}{(e+p)\sqrt{e_{p}}c}.
  \end{aligned}
  \end{align*}
Then $\bar{r}_L\leq\bar{s}_R$ implies that
$$\int_0^{e_1}\frac{de}{(e+p)\sqrt{e_{p}}c}+\int_0^{e_3}\frac{de}{(e+p)\sqrt{e_{p}}c}\leq0.$$
Namely, $e_1, e_3\leq0$ and vacuum occurs. Therefore, the condition that vacuum occurs is $\bar{r}_L\leq\bar{s}_R$.

%
%
\subsection{Existence of solutions to the Riemann problem}
Finally, we discuss the solutions to the Riemann problem for the case $\bar{r}_L>\bar{s}_R$. In Section 3 and Section 4, we have proved that $\frac{dv}{dp}<0$ on $\mathcal{T}_1^p(\mathbf{u}_L)$ and $\frac{dv}{dp}>0$ on $\mathcal{T}_3^p(\mathbf{u}_L)$ for the monatomic gas case and diatomic gas case, respectively. Denote
$$v=f_1(p; p_L, v_L, S_L) \quad\mbox{and}\quad v=f_3(p; p_R, v_R, S_R) $$
as the curves $\mathcal{T}_1^p(\mathbf{u}_L)$ and the backward 3-curve, respectively.  Then the $p-v$ plane is divided into four parts by the curves $\mathcal{T}_1^p(\mathbf{u}_L)$ and $\mathcal{T}_3^p(\mathbf{u}_L)$. Moreover,  if $(p_L, v_L)$ is above the curve $f_3(p; p_R, v_R, S_R)$, i.e.,
 $$v_L>f_3(p_L; p_R, v_R, S_R),$$
 the 1-curve of the Riemann problem should be a rarefaction curve; while if $(p_L, v_L)$ is below the curve $f_3(p; p_R, v_R, S_R)$,
  i.e.,
 $$v_L<f_3(p_L; p_R, v_R, S_R),$$ the 1-curve of the Riemann problem should be a shock curve.

 Correspondingly,
 the 3-curve of the Riemann problem should be a rarefaction curve if $(p_R, v_R)$ is above the curve $\mathcal{T}_1^p(\mathbf{u}_L)$, i.e.,
 $$v_R>f_1(p_R; p_L, v_L, S_L),$$  and the 3-curve of the Riemann problem should be a rarefaction curve if $(p_R, v_R)$ is below the curve $\mathcal{T}_1^p(\mathbf{u}_L)$, i.e.,
 $$v_R<f_1(p_R; p_L, v_L, S_L).$$

 Since $\frac{dv}{dp}<0$ on $\mathcal{T}_1^p(\mathbf{u}_L)$ and $\frac{dv}{dp}>0$ on 3-curve, an  intermediate state $(p_M, v_M)$ can be uniquely solved by
  $$v=f_1(p; p_L, v_L, S_L), \qquad v=f_3(p; p_R, v_R, S_R).$$

Having obtained the 1-curve and 3-curve, we need to determine the 2-contact discontinuity wave to fully solve the Riemann problem. In fact, since the velocity $v$ and pressure $p$ are constants on the 2-contact discontinuity wave, we only need to obtain the entropy. Note that the entropy is a constant along a rarefaction curve and is monotonic along a shock curve. Then we can uniquely determine the entropy for the left state and right state of the 2-contact discontinuity wave by the values of $p$ on the 1-curve and 3-curve, respectively.

\section{Conclusions}
In this paper, we consider the relativistic Euler equation in one space dimension. The constitutive equations for the closure of the differential system come from the relativistic Boltzmann-Chernikov equation that involves the Synge energy in the case of a monatomic gas and the generalized Synge energy in the case of a polyatomic gas. These constitutive equations are more appropriate than the ones present in literatures to study the Riemann problem. In fact, using the Synge equation we discover the constitutive equations of previous papers listed in the introduction are valid only in the classical limit ($\gamma \rightarrow \infty$) or in the ultra-relativistic limit  ($\gamma \rightarrow 0$).
Therefore our analysis is more realistic in the relativistic regime ($\gamma$ small). This more physical case is mathematically difficult because the modified Bessel functions of the second kind appear in the constitutive equations. Nevertheless, we are able to prove rigorously the well-posedness of the Riemann problem at least for monatomic and diatomic gases. For these kinds of gases, our results reduce to one of the previous studies as limit cases of the classical regime or ultra-relativistic framework.

%
%

\setcounter{equation}{0}

%
%
\section*{Appendix 1: Modified Bessel functions and properties}
In this part, we recall expressions of the modified Bessel functions and their basic properties. Moreover, with our observation, a simple corollary is also presented. Now we give the modified Bessel functions and collect their basic properties:
\begin{lemma}\cite{Groot-Leeuwen-Weert-1980,Oliver-1974,Wa}\label{def-pro}
	Let
	$K_j(\gamma)$ be the Bessel functions defined by
	\begin{equation}\label{defini}
	 K_j(\gamma)=\frac{(2^j)j!}{(2j)!}\frac{1}{\gamma^j}\int_{\lambda=\gamma}^{\lambda=\infty}e^{-\lambda}(\lambda^2-\gamma^2)^{j-1/2}d\lambda,\quad(j\geq0).
	\end{equation}
	Then the following identities hold:
	\begin{eqnarray}
	 &K_j(\gamma)=\frac{2^{j-1}(j-1)!}{(2j-2)!}\frac{1}{\gamma^j}\int_{\lambda=\gamma}^{\lambda=\infty}e^{-\lambda}(\lambda^2-\gamma^2)^{j-3/2}d\lambda,\quad(j>0),\nonumber\\
	&K_{j+1}(\gamma)=2j\frac{K_j(\gamma)}{\gamma}+K_{j-1}(\gamma),\quad(j\geq1), \label{transform}\\
	&K_j(\gamma)<K_{j+1}(\gamma),\quad (j\geq0), \notag
	\end{eqnarray}
	and
	\begin{eqnarray}
	&\frac{d}{d\gamma}\left(\frac{K_j(\gamma)}{\gamma^j}\right)=-\frac{K_{j+1}(\gamma)}{\gamma^j},\quad(j\geq0),\label{deriva}\\
	&\displaystyle K_{j}(\gamma)=\sqrt{\frac{\pi}{2\gamma}}e^{-\gamma}\left(\gamma_{j,n}(\gamma)\gamma^{-n}+\sum_{m=0}^{n-1}A_{j,m}\gamma^{-m}\right),\quad(j\geq0,~n\geq1),\label{remainder}
	\end{eqnarray}
	where expressions of the coefficients in (\ref{remainder}) are
	
	\begin{align} \label{coefficient}
	\begin{aligned}
	&A_{j,0}=1\\
	&A_{j,m}=\frac{(4j^2-1)(4j^2-3^2)\cdots(4j^2-(2m-1)^2)}{m!8^m},\quad(j\geq0,~m\geq1), \\
	&|\gamma_{j,n}(\gamma)|\leq2e^{[j^2-1/4]\gamma^{-1}}|A_{j,n}|,\quad(j\geq0,~n\geq1).
	\end{aligned}
	\end{align}
	
	On the other hand, according to \cite{Wa} (in Page 80), the Bessel functions defined in (\ref{defini}) can also be written in the following form:
	\begin{align}
		K_0(\gamma)=&-\sum^{\infty}_{m=0}\frac{(\frac{1}{2}\gamma)^{2m}}{m!m!}\Big[\ln\Big(\frac{\gamma}{2}\Big)-\psi(m+1)\Big],\nonumber\\
	K_n(\gamma)=&\frac{1}{2}\sum^{n-1}_{m=0}(-1)^m\frac{(n-m-1)!}{m!}\Big(\frac{1}{2}\gamma\Big)^{-n+2m}+\nonumber\\
	&(-1)^{n+1}\sum^{\infty}_{m=0}\frac{(\frac{1}{2}\gamma)^{n+2m}}{m!(m+n)!}\times\nonumber\\
&\Big[\ln\Big(\frac{\gamma}{2}\Big)
	-\frac{1}{2}\psi(n+m)-\frac{1}{2}\psi(n+m+1)\Big], \label{new-defi}\\
	\psi(1)=&-C_E,\quad \psi(m+1)=-C_E+\sum^{m}_{k=1}\frac{1}{k},\quad m\geq1,\nonumber\\
	 K_1(\gamma)=&\frac{1}{\gamma}+\sum^{\infty}_{m=0}\frac{(\frac{1}{2}\gamma)^{2m+1}}{m!(m+1)!}
\Big[\ln\Big(\frac{\gamma}{2}\Big)-\frac{1}{2}\psi(m+1)-\frac{1}{2}\psi(m+2)\Big],\nonumber
		\end{align}
	where $C_E=0.5772157\ldots$ is the Euler's constant.
	
\end{lemma}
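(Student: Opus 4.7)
The plan is to derive each property from the integral definition \eqref{defini}, proceeding in the order the lemma states them, and only at the end invoking Watson-type series expansions for the behavior near $\gamma=0$. Throughout, exponential decay of $e^{-\lambda}$ justifies differentiation under the integral and interchange of limits with integration, so I will use these freely without comment.

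First I would establish the alternative representation by integration by parts on the defining integral: take $u = (\lambda^2-\gamma^2)^{j-1/2}$ and $dv = e^{-\lambda}d\lambda$. The boundary term at $\lambda = \gamma$ vanishes because $j-1/2 > 0$ for $j \ge 1$, the boundary at infinity vanishes by exponential decay, and one is left with $(2j-1)\int_\gamma^\infty \lambda e^{-\lambda}(\lambda^2-\gamma^2)^{j-3/2}d\lambda$; matching the combinatorial prefactors $\tfrac{2^j j!}{(2j)!}$ and $\tfrac{2^{j-1}(j-1)!}{(2j-2)!}$ yields the claim. The recurrence \eqref{transform} then follows by writing $\lambda^2 = (\lambda^2-\gamma^2) + \gamma^2$ inside the alternative representation of $K_{j+1}$: the first piece reproduces $K_{j-1}$ after re-indexing, the second piece produces $2jK_j/\gamma$ after pulling out $\gamma^2$ and re-normalizing the combinatorial constants. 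Monotonicity $K_j < K_{j+1}$ is then immediate from \eqref{transform} together with the positivity $K_j > 0$ seen from the integrand in \eqref{defini}.

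For the derivative identity \eqref{deriva}, I would apply Leibniz's rule to $\gamma^{-j}$ times the integral in \eqref{defini}. Two contributions arise: the $\gamma$-derivative of the prefactor $\gamma^{-j}$, and the $\gamma$-derivative of $(\lambda^2-\gamma^2)^{j-1/2}$. A short algebraic manipulation, using the alternative representation established above to re-express one of the two contributions, produces exactly $-K_{j+1}/\gamma^j$; the key is that the cross-term involving $\gamma(\lambda^2-\gamma^2)^{j-3/2}$ conspires with $\gamma^{-j-1}$ to recombine into the $K_{j+1}$ integrand.

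The main obstacle is the uniform asymptotic expansion \eqref{remainder} with the explicit remainder constant \eqref{coefficient}. I would follow the classical Watson-Olver recipe: substitute $\lambda = \gamma(1+u/\gamma)$ so the integral becomes
\begin{equation*}
K_j(\gamma) = e^{-\gamma}\sqrt{\tfrac{\pi}{2\gamma}}\,\frac{2^j j!}{(2j)!\sqrt{\pi/2}}\int_0^\infty e^{-u}u^{j-1/2}\Bigl(2+\tfrac{u}{\gamma}\Bigr)^{j-1/2}\,du,
\end{equation*}
expand $(2+u/\gamma)^{j-1/2}$ by the binomial theorem to order $n-1$, and integrate term by term against the Gamma kernel $e^{-u}u^{j-1/2}$; the resulting Gamma ratios yield exactly the coefficients $A_{j,m}$. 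The delicate part is the explicit bound $|\gamma_{j,n}(\gamma)| \le 2 e^{(j^2-1/4)/\gamma}|A_{j,n}|$ on the truncated tail, which requires a Lagrange-form remainder estimate on $(2+u/\gamma)^{j-1/2}$ together with a uniform majorization of $e^{-u/2}u^{n+j-1/2}$ on $[0,\infty)$; this is where I expect the argument to be most technical.

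Finally, for the series representations \eqref{new-defi} near $\gamma = 0$, I would use the standard link $K_j(\gamma) = \tfrac{\pi}{2\sin(j\pi)}\bigl(I_{-j}(\gamma) - I_j(\gamma)\bigr)$ with $I_\nu$ defined by its convergent Taylor series, then pass to the integer-$j$ limit by L'Hôpital's rule; differentiating the Taylor coefficients in the index $\nu$ produces exactly the $\psi(m+1)$ and $\psi(n+m)$ digamma combinations, and the logarithmic singularity arises from $\partial_\nu (\gamma/2)^\nu |_{\nu=\text{integer}}$. All of these manipulations reduce to classical identities collected in Watson and Olver, to which I would defer for the termwise convergence and the identification of constants.
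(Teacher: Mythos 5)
The paper itself does not prove this lemma: it is stated with a \verb|\cite{...}| header and serves purely as an inventory of classical facts about modified Bessel functions, deferred to de Groot--van Leeuwen--van Weert, Olver, and Watson. Your proposal is therefore a genuine re-derivation from scratch, which is a different enterprise from what the paper does. Most of your outlined steps for the algebraic identities are sound: the single integration by parts with $u=(\lambda^2-\gamma^2)^{j-1/2}$, $dv=e^{-\lambda}\,d\lambda$ does produce the alternative representation, and combining it with a \emph{second} integration by parts and the split $\lambda^2=(\lambda^2-\gamma^2)+\gamma^2$ yields the recurrence exactly as you describe (your phrasing compresses the two integrations by parts into one, but the idea is right). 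The $\frac{d}{d\gamma}(K_j/\gamma^j)=-K_{j+1}/\gamma^j$ calculation via Leibniz plus the recurrence also goes through. One thing you should flag: as printed in the lemma, the alternative representation has no $\lambda$ in the integrand, which would force the false identity $K_j=K_{j-1}/\gamma$; the integration by parts you perform actually produces $\int_\gamma^\infty \lambda\,e^{-\lambda}(\lambda^2-\gamma^2)^{j-3/2}\,d\lambda$ with the extra factor of $\lambda$, which is the correct formula from the references. Saying your computation ``yields the claim'' without noting this discrepancy is a sign you should compare more carefully.

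There are, however, two genuine gaps. First, your claim that the monotonicity $K_j<K_{j+1}$ is ``immediate from \eqref{transform} together with positivity'' does not hold: the recurrence $K_{j+1}=\tfrac{2j}{\gamma}K_j+K_{j-1}$ together with $K_j>0$ yields only $K_{j+1}>K_{j-1}$, not $K_{j+1}>K_j$. An honest proof needs a separate ingredient --- for instance the Basset representation $K_\nu(\gamma)=\int_0^\infty e^{-\gamma\cosh\theta}\cosh(\nu\theta)\,d\theta$, from which monotonicity in $\nu$ is manifest --- or an explicit order-difference integral; none of these is contained in your outline. Second, the explicit remainder bound $|\gamma_{j,n}(\gamma)|\le 2\,e^{(j^2-1/4)/\gamma}\,|A_{j,n}|$ in \eqref{coefficient} is the only part of \eqref{remainder}--\eqref{coefficient} that is not a formal expansion, and it is precisely what the paper actually uses (in Appendix 2 to produce uniform two-sided bounds on $K_0/K_1$). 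You acknowledge this is ``where I expect the argument to be most technical,'' but you do not carry out the Lagrange-remainder estimate on $(2+u/\gamma)^{j-1/2}$ nor the uniform majorization you invoke, so this portion remains a sketch rather than a proof, and it is the load-bearing one.
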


From Lemma \ref{def-pro}, we immediately have the following corollary:

\begin{corollary}\label{K012} For $K_j(j\geq0)$ defined in Lemma \ref{def-pro}, it holds that
	\begin{equation}\label{K012-1}
	K_1^2\leq3K_0K_2,
	\end{equation}
	\begin{equation}\label{K012-2}
	3\left(\frac{K_0(\gamma)}{K_1(\gamma)}\right)^2+\frac{6}{\gamma}\frac{K_0(\gamma)}{K_1(\gamma)}-1\geq0.
	\end{equation}
\end{corollary}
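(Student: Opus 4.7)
\emph{Proof proposal.} The plan is to prove (\ref{K012-1}) by a Cauchy--Schwarz inequality applied directly to the integral representation (\ref{defini}), and then to derive (\ref{K012-2}) as a purely algebraic consequence of (\ref{K012-1}) combined with the recurrence (\ref{transform}).

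First, I would apply the change of variable $\lambda = \gamma \cosh t$ to (\ref{defini}). Since $\sqrt{\lambda^2 - \gamma^2} = \gamma \sinh t$ and $d\lambda = \gamma \sinh t\, dt$, a short computation gives
\[
K_j(\gamma) = \frac{2^j j!}{(2j)!}\,\gamma^j \int_0^\infty e^{-\gamma \cosh t}\sinh^{2j} t\, dt,
\]
so that, specializing to $j = 0,1,2$,
\[
K_0 = \int_0^\infty e^{-\gamma \cosh t}\, dt, \qquad K_1 = \gamma\int_0^\infty e^{-\gamma \cosh t}\sinh^2 t\, dt, \qquad K_2 = \frac{\gamma^2}{3}\int_0^\infty e^{-\gamma \cosh t}\sinh^4 t\, dt.
\]
Then I would apply Cauchy--Schwarz to the middle integral, with the splitting $e^{-\gamma\cosh t}\sinh^2 t = \bigl(e^{-\gamma\cosh t/2}\bigr)\cdot\bigl(e^{-\gamma\cosh t/2}\sinh^2 t\bigr)$, to obtain
\[
\left(\int_0^\infty e^{-\gamma\cosh t}\sinh^2 t\, dt\right)^{\!2} \le \left(\int_0^\infty e^{-\gamma\cosh t}\, dt\right)\!\left(\int_0^\infty e^{-\gamma\cosh t}\sinh^4 t\, dt\right).
\]
Substituting the three representations above turns this into $K_1^2/\gamma^2 \le K_0 \cdot 3 K_2/\gamma^2$, i.e., $K_1^2 \le 3 K_0 K_2$, which is (\ref{K012-1}).

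For (\ref{K012-2}), I would then invoke the three-term recurrence $K_2 = (2/\gamma)K_1 + K_0$ from (\ref{transform}) to rewrite the right-hand side of (\ref{K012-1}) as
\[
K_1^2 \le 3 K_0 K_2 = 3K_0^2 + \frac{6}{\gamma}K_0 K_1.
\]
Dividing both sides by the strictly positive quantity $K_1^2$ gives $1 \le 3(K_0/K_1)^2 + (6/\gamma)(K_0/K_1)$, which is precisely (\ref{K012-2}).

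The only real choice in the argument is the Cauchy--Schwarz splitting in the first step: I use the $\sinh^{2j} t$ form derived directly from (\ref{defini}) rather than the classical $\cosh(\nu t)$ form, because the prefactors $2^j j!/(2j)!$ for $j = 0,1,2$ are $1,1,\tfrac{1}{3}$ and therefore produce the constant $3$ on the right-hand side cleanly, whereas the $\cosh(\nu t)$ form would bound $K_1^2$ only by $\tfrac{1}{2}K_0(K_0+K_2)$ and require the extra monotonicity step $K_0 \le K_2$ (from (\ref{transform})) to close up. No asymptotic analysis or case split on $\gamma$ is needed, and once (\ref{K012-1}) is established the passage to (\ref{K012-2}) is purely algebraic via the recurrence.
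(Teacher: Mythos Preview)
Your proof is correct and follows essentially the same approach as the paper: a Cauchy--Schwarz (H\"older) inequality applied to the integral representation of $K_0$, $K_1$, $K_2$ for (\ref{K012-1}), then the recurrence $K_2 = (2/\gamma)K_1 + K_0$ for (\ref{K012-2}). The only cosmetic difference is that the paper applies Cauchy--Schwarz directly to the $\lambda$-integrals in (\ref{defini}) without your preliminary substitution $\lambda = \gamma\cosh t$, but the content of the argument is identical.
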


\begin{proof} From (\ref{defini}), we have
	\begin{eqnarray*}
		&K_0(\gamma)=\int_{\lambda=\gamma}^{\lambda=\infty}e^{-\lambda}(\lambda^2-\gamma^2)^{-1/2}d\lambda,\\
		&K_1(\gamma)=\frac{1}{\gamma}\int_{\lambda=\gamma}^{\lambda=\infty}e^{-\lambda}(\lambda^2-\gamma^2)^{1/2}d\lambda,\\
		&K_2(\gamma)=\frac{1}{3\gamma^2}\int_{\lambda=\gamma}^{\lambda=\infty}e^{-\lambda}(\lambda^2-\gamma^2)^{3/2}d\lambda.
	\end{eqnarray*}
	These equations imply (\ref{K012-1}) by H$\ddot{o}$lder's inequality.
	
	Taking $j=1$ in (\ref{transform}) and inserting it to (\ref{K012-1}) yield
	$$K_1^2(\gamma)\leq 3K_0(\gamma)\left(\frac{2}{\gamma}K_1(\gamma)+K_0(\gamma)\right).$$
	Then (\ref{K012-2}) follows.
\end{proof}

%
%
\section*{Appendix 2:  Estimates of the ratio $\frac{K_0(\gamma)}{K_1(\gamma)}$}
In this subsection, we concentrate on estimates of $\frac{K_0(\gamma)}{K_1(\gamma)}$.
Our estimates are divided into two cases according to different expressions of $K_m (m\geq1)$ given in Lemma \ref{def-pro}: the case $\gamma\in (0, \sqrt{2}]$ by (\ref{new-defi}) and the case $\gamma\in [1.1, \infty)$ by (\ref{defini}). Here $\gamma_0=1.1229189\ldots$ is a constant satisfying
$$\ln\Big(\frac{\gamma}{2}\Big)+C_E=0.$$
We first estimate $\frac{K_0(\gamma)}{K_1(\gamma)}$ for the first case $\gamma\in (0, \sqrt{2}]$ by using the expressions (\ref{new-defi}).

\begin{proposition}\label{new-K01p} For $\gamma\in[\gamma_0, \sqrt{2}]$, it holds that
	\begin{equation}\label{new1}
	\frac{K_0(\gamma)}{K_1(\gamma)}\leq 1-\frac{\gamma_0-1}{\gamma}.
	\end{equation}
	And for $\gamma\in(0, \gamma_0]$, we have $\Big(\frac{K_0(\gamma)}{K_1(\gamma)}\Big)^2+\frac{2}{\gamma}\frac{K_0(\gamma)}{K_1(\gamma)}-1>0$. Moreover, we have
	\begin{equation}\label{low1}
	\frac{\gamma}{\sqrt{\gamma^2+1}+1}\leq\frac{K_0(\gamma)}{K_1(\gamma)}\leq \gamma\left[\frac{11}{16}-\left(\ln(\frac{\gamma}{2})+C_E\right)\right].
	\end{equation}
\end{proposition}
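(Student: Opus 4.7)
The proposition consists of three assertions, which I treat together. First observe that the unique positive root of $x^2 + (2/\gamma)x - 1$ is
\[
x_+ \;=\; \frac{-1+\sqrt{\gamma^2+1}}{\gamma} \;=\; \frac{\gamma}{\sqrt{\gamma^2+1}+1},
\]
which coincides exactly with the lower bound in \eqref{low1}. Hence the quadratic-positivity statement on $(0,\gamma_0]$ is the strict form of that lower bound, and the real work reduces to three items: (A) the linear-fractional upper bound on $[\gamma_0,\sqrt 2]$, (B) the lower bound in \eqref{low1}, and (C) the upper bound in \eqref{low1}.

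For (B) I reduce the bound to the log-convexity inequality $K_1^2 \le K_0 K_2$: clearing the square root in $\gamma K_0 + K_1 \ge \sqrt{\gamma^2+1}\,K_1$ (both sides positive) and then using the recurrence \eqref{transform} in the form $2 K_1 = \gamma(K_2 - K_0)$ to substitute $2K_0K_1 = \gamma K_0(K_2-K_0)$ collapses the desired inequality to exactly $K_1^2 \le K_0 K_2$. The latter I would prove from the integral representation
\[
K_\nu(\gamma) \;=\; \tfrac12(\gamma/2)^\nu\int_0^\infty t^{-\nu-1}\,e^{-t-\gamma^2/(4t)}\,dt
\]
by applying the Cauchy--Schwarz inequality to the factorisation $t^{-2} = t^{-1/2}\cdot t^{-3/2}$ inside the integrand for $K_1$. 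Since the two factors $\sqrt{g/t}$ and $\sqrt{g/t^3}$ (with $g(t)=e^{-t-\gamma^2/(4t)}$) are manifestly non-proportional, Cauchy--Schwarz is strict, yielding the strict lower bound needed for the quadratic-positivity claim.

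For (C) I substitute the Frobenius expansions \eqref{new-defi}. With $L := \ln(\gamma/2)+C_E$ (so $L\le 0$ on $(0,\gamma_0]$, with equality at $\gamma_0$) and $H_m := \sum_{k=1}^m 1/k$, the target $K_0 \le \gamma(\tfrac{11}{16}-L)K_1$ rearranges as $K_0 + L\gamma K_1 \le \tfrac{11}{16}\gamma K_1$. A direct computation cancels the singular $-L$ terms and yields
\[
K_0 + L\gamma K_1 \;=\; \sum_{m\ge 1}\frac{(\gamma/2)^{2m}}{(m!)^2}\bigl[2mL^2 - 2mLH_m + H_m\bigr],
\]
whose $m=1$ summand equals $(\gamma^2/8)\bigl[1 + 4(L-\tfrac12)^2\bigr]$. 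Since $\gamma K_1 = 1 + O(\gamma^2)$ and $(\gamma/2)^{2m} \le (\gamma_0/2)^{2m} < 2^{-m}$ on $(0,\gamma_0]$, the tails form a convergent majorant; the constant $\tfrac{11}{16}$ is what is needed to absorb the leading $O(\gamma^2)$ term together with the geometrically decaying remainder on the whole interval (one checks in particular that $\gamma^2 L^2$ stays bounded, with an interior maximum at $L=-1$).

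The hardest step is (A): proving $\gamma K_0 \le (\gamma-\gamma_0+1)K_1$ on $[\gamma_0,\sqrt 2]$. The coefficient $\gamma_0-1$ is calibrated exactly so that the bound is saturated at $\gamma=\gamma_0$ (where $L=0$); inserting \eqref{new-defi} one organises the difference as
\[
\gamma K_0 - (\gamma-\gamma_0+1)K_1 \;=\; -\frac{\gamma-\gamma_0+1}{\gamma} + L\cdot\Phi(\gamma,L) + \Psi(\gamma),
\]
where the first piece comes from the $1/\gamma$ summand of $K_1$, and $\Phi,\Psi$ are explicit power series with coefficients built from $H_m$ and $L$. Two features of the interval then enter: $L\ge 0$ on $[\gamma_0,\sqrt 2]$ (so the middle term carries a favourable sign once $\Phi\ge 0$ is checked), and $(\gamma/2)^{2m}\le 2^{-m}$ (so the tails of $\Phi,\Psi$ are under geometric control). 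The main obstacle is matching the algebraic prefactors in the first few low-order terms against these tail bounds so that the reorganised expression is non-positive uniformly on $[\gamma_0,\sqrt 2]$; after an appropriate truncation this reduces to verifying a short list of explicit polynomial inequalities in $\gamma$, provable by monotonicity arguments or by evaluation at the two endpoints $\gamma_0$ and $\sqrt 2$.
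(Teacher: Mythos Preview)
Your treatment of (B) is a genuinely different and cleaner route than the paper's. The paper proves the lower bound in \eqref{low1} by truncating the series \eqref{new-defi} to obtain an explicit rational lower estimate for $K_0/K_1$, introducing an auxiliary function $\bar f(\gamma)$, and showing $\bar f'<0$ on $(0,\gamma_0)$ so that $\bar f(\gamma)\ge\bar f(\gamma_0)>0$. Your reduction---square $\gamma K_0+K_1\ge\sqrt{\gamma^2+1}\,K_1$, apply the recurrence $2K_1=\gamma(K_2-K_0)$, and collapse to the Tur\'an inequality $K_1^2<K_0K_2$, then prove the latter by Cauchy--Schwarz on the representation $K_\nu(\gamma)=\tfrac12(\gamma/2)^\nu\int_0^\infty t^{-\nu-1}e^{-t-\gamma^2/(4t)}\,dt$---is shorter, works for all $\gamma>0$ at once, and automatically gives the strict inequality needed for the quadratic-positivity claim. (Note that the paper's Corollary~\ref{K012} only gets $K_1^2\le 3K_0K_2$ from the representation \eqref{defini}; you need, and correctly obtain, the sharper inequality from a different integral formula.)

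For (C) your identity for $K_0+L\gamma K_1$ is correct and the strategy matches the paper's, but the paper actually carries out the tail control: it bounds $K_0/K_1$ above by an explicit quotient of truncated series, packages the resulting inequality as a function $\tilde f(\gamma)$, shows $\tilde f'>0$, and concludes $\tilde f(\gamma)\le\tilde f(\gamma_0)<0$. Your sketch stops before this monotonicity step.

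For (A) there are two concrete problems. First, the bound is \emph{not} saturated at $\gamma_0$: numerically $K_0(\gamma_0)/K_1(\gamma_0)\approx 0.71$ while $1-(\gamma_0-1)/\gamma_0=1/\gamma_0\approx 0.89$, and the paper's auxiliary function satisfies $f(\gamma_0)>0$ strictly. The constant $\gamma_0-1$ is a convenient choice that works on the interval, not a calibration to equality. Second, your sign heuristic is backwards: since you want $\gamma K_0-(\gamma-\gamma_0+1)K_1\le 0$ and the leading piece $-(\gamma-\gamma_0+1)/\gamma$ is already negative, having $L\cdot\Phi\ge 0$ (with $L\ge 0$, $\Phi\ge 0$) works \emph{against} you, not in your favour. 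The paper's organisation is different in kind: it truncates \eqref{new-defi} with explicit remainder factors $e^{\gamma^2/28}$, $e^{\gamma^2/40}$, defines $f(\gamma)$, proves $f''<0$ on $[\gamma_0,\sqrt 2]$, and then checks $f(\gamma_0)>0$ and $f(\sqrt 2)>0$ directly. Concavity plus endpoint evaluation is the organising device your sketch is missing.
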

\begin{proof}
	We first prove (\ref{new1}). From (\ref{new-defi}), we get for $\gamma\in[\gamma_0, \sqrt{2}]$ that
	\begin{equation*}
	\gamma\frac{K_0(\gamma)}{K_1(\gamma)}\leq \frac{-[\ln(\frac{\gamma}{2})+C_E]\Big[\gamma^2+\frac{\gamma^4}{4}+\frac{\gamma^6e^{\frac{\gamma^2}{28}}}{64}\Big]+\frac{\gamma^4}{4}+
		-\frac{3\gamma^6 e^{\frac{\gamma^2}{28}}}{128}}
	{[\ln(\frac{\gamma}{2})+C_E]\Big[\frac{\gamma^2}{2}+\frac{\gamma^4}{16}+\frac{\gamma^6e^{\frac{\gamma^2}{28}}}{32\times12}\Big]+1-\frac{\gamma^2}{4}-
\frac{5\gamma^4}{64}+
		-\frac{5\gamma^6 e^{\frac{\gamma^2}{40}}}{32\times36}}.
	\end{equation*}
	Then, (\ref{new1}) holds if we can show
	\begin{equation*}
	\begin{split}
	& [\ln(\frac{\gamma}{2})+C_E]\Big[\gamma^2+\frac{\gamma^4}{4}+\frac{\gamma^6e^{\frac{\gamma^2}{28}}}{64}\Big]+\frac{\gamma^4}{4}+
		-\frac{3\gamma^6 e^{\frac{\gamma^2}{28}}}{128}\leq\\	 &\Big\{[\ln(\frac{\gamma}{2})+C_E]\Big[\frac{\gamma^2}{2}+\frac{\gamma^4}{16}+\frac{\gamma^6e^{\frac{\gamma^2}{28}}}{32\times12}\Big]+1-\frac{\gamma^2}{4}-
\frac{5\gamma^4}{64}+
		-\frac{5\gamma^6 e^{\frac{\gamma^2}{40}}}{32\times36}\Big\}\times\\
&(1+\gamma-\gamma_0).
	\end{split}
	\end{equation*}
	Namely,
	\begin{align}\label{new2}
	\begin{aligned}
	f(\gamma):=&1+\gamma-\gamma_0-\frac{1}{4}(1+\gamma-\gamma_0)\gamma^2-\frac{\gamma^4}{64}[5(\gamma-\gamma_0)+21]-\\
&\Big[5(1+\gamma-\gamma_0)e^{\frac{\gamma^2}{40}}+27e^{\frac{\gamma^2}{28}}\Big]\frac{\gamma^6}{32\times36}+\\
&\Big[\ln(\frac{\gamma}{2})+C_E\Big]
	\Big\{\frac{\gamma^2}{2}(3+\gamma-\gamma_0)+\frac{\gamma^4}{16}(5+\gamma-\gamma_0)+\\
	&\frac{\gamma^6}{32\times12}\Big[(1+\gamma-\gamma_0)e^{\frac{\gamma^2}{40}}+6e^{\frac{\gamma^2}{28}}\Big]\Big\}\geq0
	.
	\end{aligned}
	\end{align}
	
	Note that for $\gamma\in [\gamma_0, \sqrt{2}]$,
	\begin{align*}
		f'(\gamma)=&1-\frac{1}{4}\gamma^2-\frac{\gamma}{2}(1+\gamma-\gamma_0)-\frac{\gamma^3}{16}[5(\gamma-\gamma_0)+21]
	-\frac{5\gamma^4}{64}\\
&-\frac{\gamma^5}{32\times6}\Big[5(1+\gamma-\gamma_0)e^{\frac{\gamma^2}{40}}+27e^{\frac{\gamma^2}{28}}\Big]\\
	 &+\frac{\gamma}{2}(3+\gamma-\gamma_0)+\frac{\gamma^3}{16}(5+\gamma-\gamma_0)+\\
&\frac{\gamma^5}{32\times12}\Big[(1+\gamma-\gamma_0)e^{\frac{\gamma^2}{40}}+6e^{\frac{\gamma^2}{28}}\Big]-\\
	&\frac{\gamma^6}{32\times36}\Big\{5\Big[1+\frac{\gamma(1+\gamma-\gamma_0)}{20}\Big]e^{\frac{\gamma^2}{40}}+\frac{27\gamma}{14}e^{\frac{\gamma^2}{28}}\Big]\Big\}\\
	&+\Big[\ln(\frac{\gamma}{2})+C_E\Big]
	\Big\{(3+\gamma-\gamma_0)\gamma+\frac{\gamma^2}{2}+\frac{\gamma^3}{4}(5+\gamma-\gamma_0)+\frac{\gamma^4}{16}+\\
	&\frac{\gamma^5}{64}\Big[(1+\gamma-\gamma_0)e^{\frac{\gamma^2}{40}}+6e^{\frac{\gamma^2}{28}}\Big]\\
&	 +\frac{\gamma^6}{32\times12}\Big[\Big(1+\frac{\gamma(1+\gamma-\gamma_0)}{20}\Big)e^{\frac{\gamma^2}{40}}
+\frac{3\gamma}{7}e^{\frac{\gamma^2}{28}}\Big]\Big\}.
	\end{align*}
We can further obtain that for $\gamma\in [\gamma_0, \sqrt{2}]$,
\begin{equation*}
	\begin{split}
f''(\gamma)=&4-\gamma_0+\gamma+\frac{(2\gamma_0-7)\gamma^2}{4}-\gamma^3-\\
&\frac{\gamma^4}{128}\Big[13(1+\gamma-\gamma_0)e^{\frac{\gamma^2}{40}}
+68e^{\frac{\gamma^2}{28}}\Big]-\\
	&\frac{\gamma^6}{64}\Big\{\Big[3+\frac{3\gamma(1+\gamma-\gamma_0)}{20}\Big]e^{\frac{\gamma^2}{40}}
	+\frac{27\gamma}{7}e^{\frac{\gamma^2}{28}}\Big\}-\\
	&\frac{\gamma^6}{32\times36}\Big\{
	\Big[\frac{1-\gamma_0+3\gamma}{4}+\frac{\gamma^2(1+\gamma-\gamma_0)}{80}\Big]e^{\frac{\gamma^2}{40}}\\
	&+\Big(\frac{27}{14}+\frac{27\gamma^2}{256}\Big)e^{\frac{\gamma^2}{28}}\Big\}+\Big[\ln(\frac{\gamma}{2})+C_E\Big] \Big\{3-\gamma_0+3\gamma+\\
	&+\frac{(15-3\gamma_0)\gamma^2}{4}+\gamma^3
	\frac{5\gamma^4}{64}\Big[(1+\gamma-\gamma_0)e^{\frac{\gamma^2}{40}}+6e^{\frac{\gamma^2}{28}}\Big]+\\	 &\frac{\gamma^5}{32}\Big[\Big(1+\frac{\gamma(1+\gamma-\gamma_0)}{20}\Big)e^{\frac{\gamma^2}{40}}
+\frac{3\gamma}{7}e^{\frac{\gamma^2}{28}}\Big]+\frac{\gamma^6}{32\times12}\times\\
&\Big\{
\Big[\frac{1-\gamma_0+3\gamma}{20}+\frac{\gamma^2(1+\gamma-\gamma_0)}{400}\Big]e^{\frac{\gamma^2}{40}}
+\\
&\Big(\frac{3}{7}+\frac{3\gamma^2}{98}\Big)e^{\frac{\gamma^2}{28}}\Big\}\Big\}
	< 0.
	\end{split}
	\end{equation*}
Then we have
$$f(\gamma)\geq \min\{f(\gamma_0), f(\sqrt{2})\}$$ for $\gamma\in [\gamma_0, \sqrt{2}]$. On the other hand, we have
	\begin{align*}
f(\gamma_0)=&1-\frac{\gamma_0^2}{4}-\frac{21\gamma_0^4}{64}-\frac{\gamma_0^6}{32\times36}\Big[5e^{\frac{\gamma_0^2}{40}}+27e^{\frac{\gamma_0^2}{28}}\Big]>0,\quad\mbox{and}\\
	  f(\sqrt{2})=&1+\sqrt{2}-\gamma_0-\frac{1+\sqrt{2}-\gamma_0}{2}-\frac{5(\sqrt{2}-\gamma_0)+21}{16}-\\
	  &\Big[\dfrac{5(1+\sqrt{2}-\gamma_0)}{144}e^{\frac{1}{20}}+\dfrac{27}{144}e^{\frac{1}{14}}\Big]+\Big[\ln(\frac{\sqrt{2}}{2})+C_E\Big]\times
	  \\
	  &\Big\{3+\sqrt{2}-\gamma_0+\frac{5+\sqrt{2}-\gamma_0}{4}+\\
	  &\frac{1}{48}\Big[(1+\sqrt{2}-\gamma_0)e^{\frac{1}{20}}+6e^{\frac{1}{14}}\Big]\Big\}>0.
		\end{align*}
	Then (\ref{new2}) holds.
	
	Now we turn to the proof of (\ref{low1}). We first verify the left inequality of (\ref{low1}).  For $\gamma\in(0,\gamma_0],$ we use (\ref{new-defi}) to have
	\begin{equation*}
	\frac{K_0(\gamma)}{K_1(\gamma)}\geq \frac{-\left[\ln(\frac{\gamma}{2})+C_E\right]\gamma-\frac{1}{4}\left[\ln(\frac{\gamma}{2})+C_E-1\right]\gamma^3
	}
	{1+\frac{1}{2}\left[\ln(\frac{\gamma}{2})+C_E-\frac{1}{2}\right]\gamma^2+\frac{1}{16}\left[\ln(\frac{\gamma}{2})+C_E-\frac{5}{4}\right]\gamma^4
	}.
	\end{equation*}
	(\ref{low1}) holds if we have
	\begin{equation*}
	\begin{split}
	 \overline{f}(\gamma)=&-\Big[\ln(\frac{\gamma}{2})+C_E\Big]\Big[1+\sqrt{\gamma^2+1}+\frac{\gamma^2(3+\sqrt{\gamma^2+1})}{4}+\frac{\gamma^4}{16}\Big]+\\
	&\frac{\gamma^2(2+\sqrt{\gamma^2+1})}{4}+\frac{5\gamma^4}{64}-1
	>0.
	\end{split}
	\end{equation*}
	In fact, for $\gamma\in (0,\gamma_0)$, we have
	\begin{equation*}
	\begin{split}
	\overline{f}'(\gamma)= & -\frac{1}{\gamma}\Big[1+\sqrt{\gamma^2+1}+\frac{\gamma^2(3+\sqrt{\gamma^2+1})}{4}+\frac{\gamma^4}{16}\Big]+\\
	&\frac{\gamma^3}{4\sqrt{\gamma^2+1}}
	+\frac{\gamma(2+\sqrt{\gamma^2+1})}{2} +\frac{5\gamma^3}{16}-\Big[\ln(\frac{\gamma}{2})+C_E\Big]\times\\
	 &\Big[\frac{\gamma}{\sqrt{\gamma^2+1}}+\frac{\gamma(3+\sqrt{\gamma^2+1})}{2}
+\frac{\gamma^3}{4\sqrt{\gamma^2+1}}+\frac{\gamma^3}{4}\Big]\\
	 <&(1+\sqrt{\gamma^2+1})\Big[-\frac{1}{\gamma}+\frac{\gamma}{4}+\frac{\gamma^3}{4(\gamma^2+1)}\Big]\\
	  &-\gamma\left[\ln(\frac{\gamma}{2})+C_E\right](3+\sqrt{\gamma^2+1})\\
	<&-\frac{3(1+\sqrt{\gamma^2+1})}{5\gamma}-\gamma\left[\ln(\frac{\gamma}{2})+C_E\right](3+\sqrt{\gamma^2+1})\\
	<&0.
	\end{split}
	\end{equation*}
	Then $\overline{f}(\gamma)\geq\overline{f}(\gamma_0)>0$. The left inequality of (\ref{low1}) holds.
	
	We finally treat the right inequality of (\ref{low1}). For $\gamma\in(0, \gamma_0]$, we use (\ref{new-defi}) to have
	\begin{equation*}
	\gamma\frac{K_0(\gamma)}{K_1(\gamma)}\leq \frac{-\left[\ln(\frac{\gamma}{2})+C_E\right]\gamma^2-\frac{1}{4}\left[\ln(\frac{\gamma}{2})+C_E-1\right]\gamma^4e^{\frac{\gamma^2}{10}}}
	{1+\frac{1}{2}\left[\ln(\frac{\gamma}{2})+C_E-\frac{1}{2}\right] \gamma^2e^{\frac{5\gamma^2}{16}}}.
	\end{equation*}
	To prove the right inequality of (\ref{low1}), we only need to derive the following inequality
	\begin{align*}
	\begin{aligned}
	&-\left[\ln(\frac{\gamma}{2})+C_E\right]\gamma^2-\frac{1}{4}\left[\ln(\frac{\gamma}{2})+C_E-1\right]\gamma^4e^{\frac{\gamma^2}{10}}\leq\\
	 &  \Big[1+\frac{1}{2}\Big(\ln(\frac{\gamma}{2})+C_E-\frac{1}{2}\Big) \gamma^2 e^{\frac{5\gamma^2}{16}}\Big]\gamma^2\left[\frac{11}{16}-\left(\ln(\frac{\gamma}{2})+C_E\right)\right]
	.
	\end{aligned}
	\end{align*}
	That is,
	\begin{align}\label{001new}
	\begin{aligned}
	\tilde{f}(\gamma)=&\gamma^2\Big\{-\Big[\Big(\ln(\frac{\gamma}{2})+C_E\Big)-1\Big]e^{\frac{\gamma^2}{10}}
	+\Big[2\Big(\ln(\frac{\gamma}{2})+C_E\Big)^2-\\
	&\frac{19}{8}\Big(\ln(\frac{\gamma}{2})+C_E\Big)+\frac{11}{16}\Big]e^{\frac{5\gamma^2}{16}}\Big\}-\frac{11}{4}\leq0
	\end{aligned}
	\end{align}
	for $\gamma\in(0, \gamma_0]$. Note the fact
	\begin{align*}
	\begin{aligned}
	\tilde{f}'(\gamma)=&2\gamma\Big\{-\Big[\Big(\ln(\frac{\gamma}{2})+C_E\Big)-1\Big]e^{\frac{\gamma^2}{10}}
	+\Big[2\Big(\ln(\frac{\gamma}{2})+C_E\Big)^2-\\
&\frac{19}{8}\Big(\ln(\frac{\gamma}{2})+C_E\Big)+\frac{11}{16}\Big]e^{\frac{5\gamma^2}{16}}\Big\}-\gamma e^{\frac{\gamma^2}{10}}-\frac{19\gamma}{8}e^{\frac{5\gamma^2}{16}}+\\
	&4\Big(\ln(\frac{\gamma}{2})+C_E\Big)\gamma e^{\frac{5\gamma^2}{16}}+\frac{\gamma^3}{5}\Big[-\Big(\ln(\frac{\gamma}{2})+C_E\Big)+1\Big]e^{\frac{\gamma^2}{10}}\\
	&+\frac{5\gamma^3}{8}\Big[2\Big(\ln(\frac{\gamma}{2})+C_E\Big)^2-\frac{19}{8}\Big(\ln(\frac{\gamma}{2})+C_E\Big)
	+\frac{11}{16}\Big]e^{\frac{5\gamma^2}{16}}\\
	\geq&\gamma\Big\{\Big(\frac{\gamma^2}{5}+1\Big)e^{\frac{\gamma^2}{10}}
	+\Big[-\frac{3}{4}\Big(\ln(\frac{\gamma}{2})+C_E\Big)+\frac{55\gamma^2}{128}-1\Big]e^{\frac{5\gamma^2}{16}}\Big\}\\
>&0.
	\end{aligned}
	\end{align*}
	Here we used the estimate $-\frac{3}{4}\Big(\ln(\frac{\gamma}{2})+C_E\Big)+\frac{55\gamma^2}{128}\geq\frac{1}{2}$ for $\gamma\in(0, \gamma_0]$.
	Then we have
	\begin{align*}\label{001new}
	\begin{aligned}
	\tilde{f}(\gamma)\leq&\gamma^2\Big(e^{\frac{\gamma^2}{10}}
	+\frac{11}{16}e^{\frac{5\gamma^2}{16}}\Big)-\frac{11}{4}<0
	\end{aligned}
	\end{align*}
	for $\gamma\in(0, \gamma_0]$. (\ref{001new}) is verified.
\end{proof}

For later use, we also need two different estimates:
\begin{proposition}\label{K01p} Let $\gamma\in(\sqrt{2}, \infty)$. Then $\frac{K_0(\gamma)}{K_1(\gamma)}$ satisfies:
	\begin{equation}\label{rough}
	1-\frac{1}{2\gamma}\leq\frac{K_0(\gamma)}{K_1(\gamma)}\leq 1-\frac{1}{2\gamma}+\frac{3}{8\gamma^2}+\frac{3}{16\gamma^3}.
	\end{equation}
	Moreover, for $\gamma\in(2, \infty)$, it holds that
	\begin{align} \label{acurate}
	\begin{aligned}
	&\frac{K_0(\gamma)}{K_1(\gamma)}\geq 1-\frac{1}{2\gamma}+\frac{3}{8\gamma^2}-\frac{3}{8\gamma^3}+\frac{63}{128\gamma^4}-\frac{31}{20\gamma^5}, \\
	&\frac{K_0(\gamma)}{K_1(\gamma)}\leq 1-\frac{1}{2\gamma}+\frac{3}{8\gamma^2}-\frac{3}{8\gamma^3}+\frac{63}{128\gamma^4}+\frac{7}{8\gamma^5}.
	\end{aligned}
	\end{align}
\end{proposition}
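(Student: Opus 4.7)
The plan is to apply the asymptotic expansion \eqref{remainder} with an appropriate truncation order $n$ to both $K_0(\gamma)$ and $K_1(\gamma)$, compute the coefficients $A_{j,m}$ from \eqref{coefficient} (for $j=0$: $A_{0,0}=1$, $A_{0,1}=-1/8$, $A_{0,2}=9/128$, $A_{0,3}=-75/1024$; for $j=1$: $A_{1,0}=1$, $A_{1,1}=3/8$, $A_{1,2}=-15/128$, $A_{1,3}=105/1024$; and so on), and exploit the fact that the common prefactor $\sqrt{\pi/(2\gamma)}\,e^{-\gamma}$ cancels identically in $K_0/K_1$. This reduces each desired inequality to a polynomial inequality in $1/\gamma$ with controllable remainders.

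For the rough bound \eqref{rough} on $(\sqrt{2},\infty)$, I would first truncate at $n=3$, writing
\begin{equation*}
K_j(\gamma) = \sqrt{\tfrac{\pi}{2\gamma}}\,e^{-\gamma}\Bigl(\sum_{m=0}^{2} A_{j,m}\gamma^{-m} + \gamma_{j,3}(\gamma)\gamma^{-3}\Bigr),\qquad j=0,1,
\end{equation*}
and cross-multiply. For the lower bound, a direct computation shows that $S_0 - (1 - 1/(2\gamma))S_1$ (where $S_j$ denotes the bracketed series) has vanishing constant and $1/\gamma$ coefficients, with leading term $3/(8\gamma^2)$ — this is why $1-1/(2\gamma)$ is the natural two-term approximation to $K_0/K_1$. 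What remains is a polynomial of degree $\le 3$ in $1/\gamma$ plus explicit multiples of the remainders $\gamma_{0,3}$ and $\gamma_{1,3}$, which one bounds via \eqref{coefficient} using $\gamma > \sqrt{2}$. The upper bound is handled analogously; the extra $3/(16\gamma^3)$ appearing in the claimed bound provides slack against the true coefficient $-3/(8\gamma^3)$ in the asymptotic expansion of $K_0/K_1$, leaving positive room at the $1/\gamma^3$ level to absorb the remainder.

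For the sharper estimates \eqref{acurate} on $(2,\infty)$, I would push the truncation to $n=5$ or $n=6$, carrying through $A_{j,5}$ and a correspondingly higher-order remainder. The stronger constraint $\gamma>2$ shrinks the exponential factor $e^{(j^2-1/4)/\gamma}$ in the remainder estimate. Each inequality again becomes a definite-sign polynomial in $1/\gamma$ on $(2,\infty)$, which can be shown positive either by writing it as a sum of manifestly positive monomials, or by checking the sign of its derivative together with the single endpoint $\gamma\to 2^{+}$.

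The main obstacle will be the exponential factor $e^{(j^2-1/4)/\gamma}$ in the remainder bound \eqref{coefficient}: for $j=1$ and $\gamma$ near $\sqrt{2}$ this factor is $e^{3/(4\sqrt{2})}\approx 1.70$, so the naive remainder estimate is far from negligible and can compete with the polynomial gap we are trying to certify. Overcoming this requires choosing $n$ large enough that $|A_{j,n}|\gamma^{-n}$ times the exponential factor is strictly dominated by the polynomial slack between the true expansion and the claimed bound; the specific cutoffs $\sqrt{2}$ in \eqref{rough} and $2$ in \eqref{acurate} are precisely the thresholds at which this tradeoff becomes favourable. The chief technical step is then the careful bookkeeping — clearing denominators, multiplying through by $\gamma^{n}$, and verifying the sign pattern of the resulting integer-coefficient polynomial — by which the final polynomial inequalities are certified on the stated intervals.
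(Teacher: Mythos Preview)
Your proposal is essentially the same method the paper uses: plug in the truncated asymptotic expansion \eqref{remainder}, cancel the common prefactor, cross-multiply, bound the remainders by \eqref{coefficient}, and reduce each inequality to a one-variable polynomial inequality verified via an endpoint value plus a derivative sign. The only deviation is that the paper actually needs $n=4$ (not $n=3$) for the \emph{upper} bound in \eqref{rough} to get enough slack at $\gamma=\sqrt{2}$, exactly the tradeoff you anticipated in your final paragraph.
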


\begin{proof} Compared to the proof of (\ref{rough}), the proof of (\ref{acurate}) is more tedious but simpler. For brevity, we only prove (\ref{rough}).
	From (\ref{coefficient}), one has
	\begin{equation}\label{eff0}
\begin{split}
	&A_{0,1}=-\frac{1}{8}, \quad A_{0,2}=\frac{9}{2\times8^2}, \quad A_{0,3}=-\frac{75}{2\times8^3}, \\
 &A_{0,4}=\frac{3\times25\times49}{8^5}, \quad A_{0,5}=-\frac{15\times49\times81}{8^6},\quad\mbox{and}
 \end{split}
 \end{equation}
 \begin{equation}\label{eff1}
 \begin{split}
	& A_{1,1}=\frac{3}{8}, \quad A_{1,2}=-\frac{15}{2\times8^2}, \quad A_{1,3}=\frac{105}{2\times8^3}, \\
 &A_{1,4}=-\frac{105\times45}{8^5}, \quad A_{1,5}=\frac{21\times45\times77}{8^6}.
 \end{split}
 \end{equation}
Moreover, for $\gamma>0$,
	\begin{eqnarray}
	&&r_{0,3}\leq2e^{-\frac{1}{4\gamma}}|A_{0,3}|= \frac{75e^{-\frac{1}{4\gamma}}}{8^3},\quad r_{1,3}\leq2e^{\frac{3}{4\gamma}}|A_{1,3}|= \frac{105e^{\frac{3}{4\gamma}}}{8^3},\label{rem3} \\
	&&r_{0,4}\leq2e^{-\frac{1}{4\gamma}}|A_{0,4}|= \frac{75\times49e^{-\frac{1}{4\gamma}}}{4\times8^5},\nonumber\\
 &&r_{1,4}\leq2e^{\frac{3}{4\gamma}}|A_{1,4}|= \frac{105\times45e^{\frac{3}{4\gamma}}}{4\times8^4}, \label{rem4} \\
	&&r_{0,5}=2e^{-\frac{1}{4\gamma}}|A_{0,5}|\leq \frac{15\times49\times81}{4\times8^5}e^{-\frac{1}{4\gamma}},\nonumber \\
	&&r_{1,5}=2e^{\frac{3}{4\gamma}}|A_{1,5}|\leq \frac{21\times45\times77}{4\times8^5}e^{\frac{3}{4\gamma}}. \nonumber
	\end{eqnarray}
		
	Firstly, we show that the inequality on the left side of (\ref{rough}) is true. We use (\ref{coefficient}), (\ref{eff0}), (\ref{eff1}) and (\ref{rem3}) to have
	$$\frac{K_0(\gamma)}{K_1(\gamma)}\geq\frac{ 1-\frac{1}{8\gamma}+\frac{9}{128\gamma^2}-\frac{75}{8^3\gamma^3}e^{-\frac{1}{4\gamma}}}{1+\frac{3}{8\gamma}-\frac{15}{128\gamma^2}+\frac{105}{8^3\gamma^3}e^{\frac{3}{4\gamma}}}.$$
	Then, it suffice to show that
	\begin{equation*}
	\begin{split}
	&1-\frac{1}{8\gamma}+\frac{9}{128\gamma^2}-\frac{75}{8^3\gamma^3}e^{-\frac{1}{4\gamma}}\geq\\
	&\left(1+\frac{3}{8\gamma}-\frac{15}{128\gamma^2}+\frac{105}{8^3\gamma^3}e^{\frac{3}{4\gamma}}\right)\left(1-\frac{1}{2\gamma}\right)=\\
	 &1-\frac{1}{8\gamma}-\left(\frac{3}{16}+\frac{15}{128}\right)\frac{1}{\gamma^2}+\\
&\left(\frac{105}{8^3}e^{\frac{3}{4\gamma}}+\frac{15}{256}\right)\frac{1}{\gamma^3}-\frac{105}{2\times8^3\gamma^4}e^{\frac{3}{4\gamma}}.
	\end{split}
	\end{equation*}
	Namely,
	\begin{align} \label{r1}
	\begin{aligned}
	 &\frac{3}{8\gamma^2}+\frac{105}{2\times8^3\gamma^4}e^{\frac{3}{4\gamma}}\geq\left(\frac{75}{8^3}e^{-\frac{1}{4\gamma}}+\frac{105}{8^3}e^{\frac{3}{4\gamma}}+\frac{15}{256}\right)\frac{1}{\gamma^3},\\
	 &\gamma^2-\left(\frac{25}{64}e^{-\frac{1}{4\gamma}}+\frac{35}{64}e^{\frac{3}{4\gamma}}+\frac{5}{32}\right)\gamma+\frac{35}{128}e^{\frac{3}{4\gamma}}\geq0.
	\end{aligned}
	\end{align}
	Denote $f_3(\gamma)=:\gamma^2-\left(\frac{25}{64}e^{-\frac{1}{4\gamma}}+\frac{35}{64}e^{\frac{3}{4\gamma}}+\frac{5}{32}\right)\gamma+\frac{35}{128}e^{\frac{3}{4\gamma}}$. It is easy to check that
	$$f_3(1.1)>0,\qquad f_3'(\gamma)>0 ~~\mbox{for} ~~\gamma\geq1.$$
	Then (\ref{r1}) holds for $\gamma\in [1.1,\infty)\subset (\sqrt{2}, \infty)$.
	
	We now continue to verify the inequality on the right side of (\ref{rough}). Similarly, from (\ref{coefficient}), (\ref{eff0}), (\ref{eff1}) and (\ref{rem4}), we get
	$$\frac{K_0(\gamma)}{K_1(\gamma)}\leq\frac{ 1-\frac{1}{8\gamma}+\frac{9}{128\gamma^2}-\frac{75}{2\times8^3\gamma^3}+\frac{75\times49e^{-\frac{1}{4\gamma}}}{4\times8^4\gamma^4}}
	{1+\frac{3}{8\gamma}-\frac{15}{128\gamma^2}+\frac{105}{2\times8^3\gamma^3}-\frac{105\times45e^{\frac{3}{4\gamma}}}{4\times8^4\gamma^4}}.$$
	The proof can be completed if we can show the following inequality:
	\begin{equation*}
	\begin{split}
	&1-\frac{1}{8\gamma}+\frac{9}{128\gamma^2}-\frac{75}{2\times8^3\gamma^3}+\frac{75\times49e^{-\frac{1}{4\gamma}}}{4\times8^4\gamma^4}\leq\\
	 &\Big(1+\frac{3}{8\gamma}-\frac{15}{128\gamma^2}+\frac{105}{2\times8^3\gamma^3}-\frac{105\times45e^{\frac{3}{4\gamma}}}{4\times8^4\gamma^4}\Big)\times\\
	&\Big(1-\frac{1}{2\gamma}+\frac{3}{8\gamma^2}+\frac{3}{16\gamma^3}\Big)=\\
	&1-\frac{1}{8\gamma}+\frac{9}{128\gamma^2}+\left(\frac{3}{16}+\frac{9}{64}+\frac{15}{256}+\frac{105}{1024}\right)\frac{1}{\gamma^3}\\
	&+\left(\frac{9}{128}-\frac{45}{1024}-\frac{105}{4\times8^3}-\frac{105\times45}{4\times8^4}e^{\frac{3}{4\gamma}}\right)\frac{1}{\gamma^4}\\
	&+\left(\frac{-45\times4+315}{2\times8^4}+\frac{105\times45}{8^5}e^{\frac{3}{4\gamma}}\right)\frac{1}{\gamma^5}\\
	 &+\left(\frac{315}{4\times8^4}-\frac{315\times45}{4\times8^5}e^{\frac{3}{4\gamma}}\right)\frac{1}{\gamma^6}-\frac{315\times45}{8^6}\frac{e^{\frac{3}{4\gamma}}}{\gamma^7}.
	\end{split}
	\end{equation*}
	This inequality can be simplified as
	\begin{equation}\label{right}
	\begin{split}
	&9\gamma^4+\left(\frac98-\frac{195}{128}-\frac{105\times45e^{\frac{3}{4\gamma}}+75\times49e^{-\frac{1}{4\gamma}}}{2\times8^3}\right)\gamma^3+\\
	&\hspace{1cm}
	 \Big(-\frac{45}{128}+\frac{315}{8^3}+\frac{105\times45e^{\frac{3}{4\gamma}}}{4\times8^3}\Big)\gamma^2
+\\
	&\hspace{1cm}\Big(\frac{315}{2\times8^3}-\frac{315\times45e^{\frac{3}{4\gamma}}}{2\times8^4}\Big)\gamma-\frac{315\times45e^{\frac{3}{4\gamma}}}{2\times8^4}\geq0.
	\end{split}
	\end{equation}
	Denote the function on the left side of (\ref{right}) as $f_4(\gamma)$. It can be verified that
	\begin{equation*}
	\begin{split}
	f_4(\sqrt{2})>0,\qquad f'_4(\gamma)>0 ~~\mbox{for}~~ \gamma\in[1,\infty) .
	\end{split}
	\end{equation*}
	Therefore, (\ref{right}) holds for $\gamma\in [\frac{5}{4}, \infty)$
	
	
\end{proof}

%
%
\section*{Appendix 3: Essential estimates and solution of the conjectures in \cite{Speck-Strain-CMP-2011}}
In this part, we present estimates essential to the analysis in the rest of our paper. The estimates are also closely related to the two conjectures in \cite{Speck-Strain-CMP-2011}. For convenience of discussion, we first list these conjectures.

\begin{conjecture}\label{conj1}
The first  conjecture of \cite{Speck-Strain-CMP-2011} reads:
The map $(n,\gamma)\rightarrow(\mathfrak{H}(n,\gamma),\mathfrak{P}(n,\gamma))$ is auto-diffeomorphism of the region $(0,\infty)\times(0,\infty)$, where the maps $\mathfrak{H}$ and $\mathfrak{P}$ are defined as follows:
	\begin{eqnarray*}
		&&S=\mathfrak{H}(n,\gamma)=k_B\ln\left(\frac{4\pi e^4m^3c^2h^{-3}K_2(\gamma)}{n\gamma}e^{\gamma\frac{K_1(\gamma)}{K_2(\gamma)}}\right),\\
		&&p=\mathfrak{P}(n,\gamma)=\frac{nmc^2}{\gamma}.
	\end{eqnarray*}
\end{conjecture}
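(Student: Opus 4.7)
The strategy is the global inverse function theorem, reduced to a one-variable monotonicity argument via the algebraic triangular structure $p = nmc^{2}/\gamma$. Smoothness of $\mathfrak{H}$ and $\mathfrak{P}$ on $(0,\infty)^{2}$ is immediate from the smoothness and strict positivity of $K_{1},K_{2}$ established in Appendix~1, so the task splits into (i) verifying that the Jacobian never vanishes and (ii) proving global bijectivity onto the target region.

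For (i), using the identity $K_{2}'(\gamma) = -K_{1}(\gamma) - (2/\gamma)K_{2}(\gamma)$ (which follows from Appendix~1 \eqref{deriva} and \eqref{transform}) together with the expression for $\tfrac{d}{d\gamma}\bigl(\gamma K_{1}/K_{2}\bigr)$ given in \eqref{p1}, a direct computation yields
\[
\det\frac{\partial(\mathfrak{P},\mathfrak{H})}{\partial(n,\gamma)} \;=\; \frac{k_{B}mc^{2}}{\gamma}\left[\,\gamma\left(\frac{K_{1}}{K_{2}}\right)^{\!2} + 3\,\frac{K_{1}}{K_{2}} - \gamma - \frac{4}{\gamma}\,\right].
\]
Multiplying the bracket by $\gamma$ gives $\gamma^{2}(K_{1}/K_{2})^{2} + 3\gamma K_{1}/K_{2} - \gamma^{2} - 4$, which is strictly smaller than $\gamma^{2}(K_{1}/K_{2})^{2} + 3\gamma K_{1}/K_{2} - \gamma^{2} - 3$; the strict negativity of the latter is precisely Appendix~3 inequality \eqref{imp-ine1} used to prove $c_{V}>0$. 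Hence the Jacobian is strictly negative throughout $(0,\infty)^{2}$, and the map is a local diffeomorphism.

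For (ii), from $\mathfrak{P}(n,\gamma)=p$ one immediately recovers $n = p\gamma/(mc^{2})$, so it suffices to show that for each fixed $p>0$ the scalar function $\Psi_{p}(\gamma):=\mathfrak{H}\bigl(p\gamma/(mc^{2}),\gamma\bigr)$ is a strictly monotone bijection of $(0,\infty)$ onto the admissible range of entropies. A direct computation shows $\Psi_{p}'(\gamma) = k_{B}\bigl[\gamma(K_{1}/K_{2})^{2} + 3(K_{1}/K_{2}) - \gamma - 4/\gamma\bigr]$, which is exactly the bracket appearing in the Jacobian and is therefore strictly negative. The endpoint limits come from Appendix~1: as $\gamma\to 0^{+}$, the series \eqref{new-defi} gives $K_{2}(\gamma)\sim 2/\gamma^{2}$ and $\gamma K_{1}/K_{2}\to 0$, so the logarithm in $\mathfrak{H}$ diverges and $\Psi_{p}(\gamma)\to +\infty$; as $\gamma\to +\infty$, the expansion \eqref{remainder} gives $K_{1}/K_{2} = 1 - 3/(2\gamma) + O(\gamma^{-2})$ and $\ln K_{2}(\gamma) = -\gamma - \tfrac{1}{2}\ln\gamma + O(1)$, so the leading $\pm\gamma$ contributions cancel and the surviving $-\tfrac{5}{2}\ln\gamma$ forces $\Psi_{p}(\gamma)\to -\infty$. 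Monotonicity and these limits yield, for each target $(p,S)$, a unique $\gamma$, hence a unique $n=p\gamma/(mc^{2})$, and smoothness of the inverse follows from (i).

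The main technical obstacle is the $\gamma\to\infty$ bookkeeping: the two dominant linear contributions $-\gamma$ coming from $\ln K_{2}(\gamma)$ and $+\gamma$ coming from $\gamma K_{1}(\gamma)/K_{2}(\gamma)$ cancel exactly, so one must push \eqref{remainder} to a sufficient order to isolate the surviving logarithmic divergence and pin down its sign. A minor point of interpretation is that the range of $S$ produced by the above argument is naturally all of $\mathbb{R}$ rather than $(0,\infty)$, so the stated auto-diffeomorphism should be read up to an additive normalization of the entropy (equivalently, a rescaling of the constant inside the logarithm defining $\mathfrak{H}$).
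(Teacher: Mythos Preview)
Your proof is correct and follows essentially the same route as the paper: the paper reduces Conjecture~\ref{conj1} to the negativity of the bracket
\[
\gamma\Bigl(\tfrac{K_{1}}{K_{2}}\Bigr)^{2}+3\tfrac{K_{1}}{K_{2}}-\gamma-\tfrac{4}{\gamma}
\]
(equation~\eqref{inver}), citing Lemma~3.5 of \cite{Speck-Strain-CMP-2011} for why this suffices, and then establishes that negativity via the sharper inequality~\eqref{imp-ine1}. You compute the same bracket as the Jacobian determinant and as $\Psi_{p}'$, and invoke~\eqref{imp-ine1} in the same way.

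The one difference is that you make the global-bijectivity step self-contained rather than deferring it to \cite{Speck-Strain-CMP-2011}: you exploit the triangular structure $n=p\gamma/(mc^{2})$ explicitly and supply the endpoint asymptotics $\Psi_{p}(0^{+})=+\infty$, $\Psi_{p}(+\infty)=-\infty$ directly from the Bessel expansions in Appendix~1. This is a genuine addition of detail, and your bookkeeping of the cancellation between $-\gamma$ from $\ln K_{2}$ and $+\gamma$ from $\gamma K_{1}/K_{2}$, leaving the dominant $-\tfrac{5}{2}\ln\gamma$, is correct. Your closing remark that the range of $S$ is naturally all of $\mathbb{R}$ rather than $(0,\infty)$ is also a valid observation about the statement as written.
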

With relations in (\ref{press})-(\ref{num-den}), one can deduce the local resolvability of any one of the variables $n, T, S$ and $p$ in terms of any two of the others whenever one knows that the necessary derivatives are non-zero. In fact, as in the analysis of Lemma 3.5 in \cite{Speck-Strain-CMP-2011}, the negativity of $\frac{\partial p}{\partial \gamma}\Big|_{S}$:
\begin{equation}\label{inver}
\frac{\partial_\gamma|_{S}p}{p}=\gamma\left(\frac{K_1(\gamma)}{K_2(\gamma)}\right)^2+3\frac{K_1(\gamma)}{K_2(\gamma)}-\gamma-\frac{4}{\gamma}<0,\quad \gamma>0
\end{equation}
would imply Conjecture \ref{conj1}. Here and in the rest part of this paper, we use the notation
$$\partial_X|_{Y}$$
to denote partial differentiation with respect to the variable $X$ while $Y$ is held constant.

Authors in \cite{Speck-Strain-CMP-2011} also made another conjecture which is about the speed of sound in the relativistic setting, a stronger statement than Conjecture \ref{conj1}:
\begin{conjecture} \label{conj2}
The second  conjecture of \cite{Speck-Strain-CMP-2011} reads:
 Under the relations (\ref{press})-(\ref{num-den}), p can be written as a smooth, positive
	function of $E, S$ on the domain $(0,\infty) \times (0,\infty)$, i.e., the kinetic equation of state:
	$$p=p(e, S)$$
	is well-defined for all $(e, S) \in (0,\infty)\times(0,\infty)$. Furthermore, on $(0,\infty)\times(0,\infty)$, we
	have that
	\begin{equation}\label{sou-spe}
	0<\frac{\partial p}{\partial e}\Big|_{S}(e, S)=\frac{\partial p(e, S)}{\partial e}\Big|_{S}<\frac{1}{3}.
	\end{equation}
\end{conjecture}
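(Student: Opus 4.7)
The plan is to resolve Conjecture \ref{conj2} by first settling Conjecture \ref{conj1}, then using the caloric equation (\ref{ener-den}) to pass from $(p,S)$ variables to $(e,S)$ variables. Concretely, I would (i) show that the map $(n,\gamma) \mapsto (p,S)$ defined via (\ref{press}) and (\ref{num-den}) is a diffeomorphism of $(0,\infty)^2$; (ii) use (\ref{ener-den}) to express $e$ as a smooth positive function of $(p,S)$; and (iii) establish $\partial e/\partial p|_S > 3$, which lets me invert $e$ in $p$ at fixed $S$ and conclude both smoothness of $p(e,S)$ and the bound (\ref{sou-spe}). The entire analytical content reduces to two scalar Bessel-function inequalities, both of which have essentially already been proved elsewhere in the paper.

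For step (i), observe from $p = nmc^2/\gamma$ and (\ref{num-den}) that along any curve of constant $S$, the combination $S + k_B \ln p$ depends only on $\gamma$, and its $\gamma$-derivative equals the bracket $\gamma(K_1/K_2)^2 + 3K_1/K_2 - \gamma - 4/\gamma$ appearing in (\ref{inver}), once one uses $K_2'/K_2 = -K_1/K_2 - 2/\gamma$ together with (\ref{p1}). Inequality (\ref{inver}) is strictly weaker than the negativity of $\gamma^2(K_1/K_2)^2 + 3\gamma K_1/K_2 - \gamma^2 - 3$ that was already established in the proof of strict hyperbolicity via $c_V > 0$, so the $\gamma$-derivative at fixed $S$ is strictly negative. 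Together with the trivial bijectivity of $n \mapsto p$ at fixed $\gamma$ and the endpoint asymptotics of $K_j$ from Appendices 1 and 2 (formulas (\ref{remainder}), (\ref{new-defi})), the local monotonicity upgrades to a global diffeomorphism of $(0,\infty)^2$.

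For step (iii), note that $\partial e/\partial p|_S$ is exactly the function computed in (\ref{pz}) during the proof of Theorem \ref{sub-luminal}, and that $e_p > 3$ was already obtained there via the Bessel estimates (\ref{imp-ine1}) and (\ref{imp-ine2}) of Appendix 3. The inverse function theorem then produces a smooth $p = p(e,S)$ with $\partial p/\partial e|_S = 1/e_p \in (0, 1/3)$, which is precisely (\ref{sou-spe}); the lower bound $p_e > 0$ comes for free from the same $e_p > 3$ once finiteness of $e_p$ is observed. The main obstacle is not any of the pointwise differential estimates, which are already in hand, but the \emph{global} (rather than local) nature of the diffeomorphism in step (i). The cleanest way around this obstacle is the one-variable reduction sketched above, which replaces a two-dimensional inverse-function argument by monotonicity plus correct endpoint behavior of a single explicit function of $\gamma$, both of which follow from the Bessel asymptotics already recorded in Appendices 1 and 2.
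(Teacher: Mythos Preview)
Your proposal is correct and follows essentially the same route as the paper: the paper likewise reduces Conjecture~\ref{conj2} to the two Bessel inequalities (\ref{imp-ine1}) and (\ref{imp-ine2}) of Proposition~\ref{eep}, noting that (\ref{imp-ine1}) implies (\ref{inver}) (hence Conjecture~\ref{conj1}) and that both together feed into the computation (\ref{pz}) to give $e_p>3$, i.e.\ (\ref{speed}). The only difference is that you spell out the global-diffeomorphism step (monotonicity in $\gamma$ at fixed $S$ plus endpoint asymptotics) more carefully than the paper, which simply asserts in a Remark that the sharpened estimates (\ref{imp-ine1})--(\ref{imp-ine2}) settle both conjectures.
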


As is noted in \cite{Speck-Strain-CMP-2011}, proving (\ref{sou-spe}) is equivalent to proving the following inequality
\begin{equation}\label{speed}
3<\frac{\partial e}{\partial p}\Big|_{S}(p, S)=3+\gamma\frac{K_1(\gamma)}{K_2(\gamma)}+\frac{\gamma\left(\frac{K_1(\gamma)}{K_2(\gamma)}\right)^2
+4\frac{K_1(\gamma)}{K_2(\gamma)}-\gamma}{\gamma\left(\frac{K_1(\gamma)}{K_2(\gamma)}\right)^2
+3\frac{K_1(\gamma)}{K_2(\gamma)}-\gamma-\frac{4}{\gamma}}<\infty,
\end{equation}
since $\frac{\partial p}{\partial e}\Big|_{S}(e, S)=\left(\frac{\partial e}{\partial p}\Big|_{S}(p, S)\right)^{-1}$. The speed of sound, the square of which is defined to be $\frac{\partial p}{\partial e}\Big|_{S}(e, S)$, is a fundamental quantity in the relativistic Euler system. $0<\frac{\partial p}{\partial e}\Big|_{S}(e, S)<1$ is a fundamental thermodynamic assumption for physically relevant equations of state. Moreover, in  \cite{Speck-Strain-CMP-2011}, the non-negativity of $\frac{\partial p}{\partial e}\Big|_{S}(e, S)$ plays a fundamental role in the well-posedness theory of the relativistic Euler system,  see Remark 2.1 there.

\begin{remark} For $\gamma\in(0,\frac{1}{10}]\cup[70,\infty)$, the authors in \cite{Speck-Strain-CMP-2011} verified (\ref{inver}) and made Conjectures \ref{conj1} and \ref{conj2} based on numerical observations in the remained region $\gamma\in(\frac{1}{10},70)$. Later, Juan \cite{Juan-2013} gave a proof of the two conjectures for any range of $\gamma$. We will present two estimates which implies the two conjectures since these estimates are basic in our analysis.
\end{remark}

In the following proposition, based on Lemma \ref{def-pro} and Corollary \ref{K012}, we present estimates more accurate than (\ref{inver}) and (\ref{speed}).
\begin{proposition}\label{eep} Let $\gamma\in(0,\infty)$ and $K_j(\gamma) (j\geq0)$ be the functions defined in Lemma \ref{def-pro}. Then it holds that
	\begin{equation}\label{imp-ine1}
	\gamma^2\left(\frac{K_1(\gamma)}{K_2(\gamma)}\right)^2+3\gamma\frac{K_1(\gamma)}{K_2(\gamma)}- \gamma^2-3<0,
	\end{equation}
	\begin{equation}\label{imp-ine2}
	\gamma\left(\frac{K_1(\gamma)}{K_2(\gamma)}\right)^3+4\left(\frac{K_1(\gamma)}{K_2(\gamma)}\right)^2-\gamma\frac{K_1(\gamma)}{K_2(\gamma)}-1<0.
	\end{equation}
\end{proposition}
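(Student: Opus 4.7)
The strategy is to reduce each inequality to a polynomial inequality in the single variable $u := K_0(\gamma)/K_1(\gamma)$, using the recurrence \eqref{transform}, and then to verify it using the two-sided bounds on $u$ established in Appendix 2, splitting $\gamma$ into a few regimes.

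Since $\gamma K_2 = 2K_1 + \gamma K_0$, we have $K_1/K_2 = \gamma/(2+\gamma u)$. Inserting this into \eqref{imp-ine1} and multiplying through by $(2+\gamma u)^2$ produces the equivalent quadratic inequality in $u$,
\[
\gamma^{2}(\gamma^{2}+3)\,u^{2} + \gamma(\gamma^{2}+12)\,u \;>\; \gamma^{4} + 2\gamma^{2} - 12,
\]
and analogously \eqref{imp-ine2}, after clearing $(2+\gamma u)^{3}$, becomes
\[
\gamma^{3} u^{3} + \gamma^{2}(\gamma^{2}+6)\,u^{2} + 12\gamma u \;>\; \gamma^{4} + 4\gamma^{2} - 8.
\]
Each right-hand side is increasing in $\gamma^{2}$ and is negative below a small threshold (namely $\gamma^{2} < -1+\sqrt{13}$ for the first inequality and $\gamma^{2} < -2+2\sqrt{3}$ for the second); in that easy range all coefficients on the left are positive and $u > 0$, so both inequalities hold trivially.

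For $\gamma$ above these thresholds I would split into two further ranges matching the two families of bounds on $u$ provided by Appendix 2. On the intermediate interval $\gamma_0 \le \gamma \le \sqrt{2}$, plugging the sharp two-sided bounds \eqref{new1} and \eqref{low1} into each reduced inequality converts it into a polynomial/rational inequality in $\gamma$ alone, which can be verified on this compact interval by evaluating at the endpoints and controlling the sign of the derivative, exactly in the style of the arguments in Propositions \ref{new-K01p} and \ref{K01p}. On $\gamma>\sqrt{2}$, the two-sided asymptotic estimates \eqref{rough} (refined to \eqref{acurate} past $\gamma=2$) furnish $u = 1 - 1/(2\gamma) + O(1/\gamma^{2})$ with fully explicit remainders; substituting these produces, once again, an algebraic inequality in $\gamma$ alone that is settled by a sign-and-derivative argument.

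The genuine technical obstacle is the large-$\gamma$ regime rather than the intermediate one. A short asymptotic computation shows that the positive root of the quadratic reduction satisfies $u_{+}(\gamma) = 1 - 1/(2\gamma) - 3/(8\gamma^{2}) + O(1/\gamma^{3})$, while $u = 1 - 1/(2\gamma) + 3/(8\gamma^{2}) + O(1/\gamma^{3})$, so the gap $u-u_{+}$ is only $3/(4\gamma^{2}) + O(1/\gamma^{3})$. Any bound on $u$ that loses the $O(1/\gamma^{2})$ correction destroys the inequality; this is precisely why the refined expansion \eqref{acurate}, with several explicit coefficients, is required, and explains why the two conjectures of \cite{Speck-Strain-CMP-2011} were originally supported only by numerics rather than by an elementary estimate. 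The cubic inequality is handled the same way, with a larger asymptotic cushion. Once the sharp bounds from Appendix 2 are in place, what remains are routine low-degree polynomial verifications of the kind already carried out throughout Appendices 2 and 3.
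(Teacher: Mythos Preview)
Your overall strategy---reduce to $u=K_0/K_1$ via the recurrence and then split on $\gamma$---is exactly what the paper does, and your reformulated inequalities are correct (they coincide with the paper's \eqref{imp-ine11} and \eqref{imp-lin21} after dividing by a power of $\gamma$).

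However, your diagnosis of where the difficulty lies is backwards. You claim the large-$\gamma$ regime is the genuine obstacle and that the refined expansion \eqref{acurate} is required because ``any bound on $u$ that loses the $O(1/\gamma^2)$ correction destroys the inequality.'' This is false. Your own asymptotic gives $u_+ = 1 - \tfrac{1}{2\gamma} - \tfrac{3}{8\gamma^2} + O(\gamma^{-3})$, with a \emph{negative} second-order term; hence the crude lower bound $u \ge 1 - \tfrac{1}{2\gamma}$ from \eqref{rough} already exceeds $u_+$ by $\tfrac{3}{8\gamma^2}+O(\gamma^{-3})$. The paper does exactly this: substituting $u = 1 - \tfrac{1}{2\gamma}$ into \eqref{imp-ine11} yields $\tfrac{3}{4} + \tfrac{9}{\gamma} - \tfrac{21}{4\gamma^2} > 0$, and the cubic \eqref{imp-lin21} gives $\tfrac{3}{4\gamma} + \tfrac{27}{4\gamma^2} + \tfrac{27}{8\gamma^3} > 0$. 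No appeal to \eqref{acurate} is needed anywhere in this proposition.

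There is also a small gap in your intermediate range for the second inequality. On $[\gamma_0,\sqrt{2}]$ you propose using \eqref{new1} and \eqref{low1}, but \eqref{new1} is only an \emph{upper} bound and \eqref{low1} is stated only for $\gamma \le \gamma_0$; you lack a lower bound on $u$ precisely where you need one. The paper sidesteps this entirely by invoking Corollary~\ref{K012}: the H\"older-type estimate $K_1^2 \le 3K_0K_2$ gives $3u^2 + \tfrac{6}{\gamma}u - 1 \ge 0$, and with this algebraic relation the cubic reduction is handled uniformly on all of $(0,\sqrt{2}]$ without inserting any numerical bounds on $u$.
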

\begin{proof} We first prove (\ref{imp-ine1}). Its proof is divided into two cases: $\gamma\in(0,\sqrt{2}]$ and $\gamma\in(\sqrt{2},\infty)$. Firstly, by (\ref{transform}), we can rewrite (\ref{imp-ine1}) as
	\begin{equation}\label{imp-ine11}
	 (\gamma^2+3)\left(\frac{K_0(\gamma)}{K_1(\gamma)}\right)^2+\left(\gamma+\frac{12}{\gamma}\right)\frac{K_0(\gamma)}{K_1(\gamma)}+\frac{12}{\gamma^2}-\gamma^2-2>0.
	\end{equation}
	Noting $K_0(\gamma), K_1(\gamma)>0$ for $\gamma\in(0,\infty)$ and
	$$\frac{12}{\gamma^2}-\gamma^2-2>0,\quad \gamma\in(0,\sqrt{2}],$$
	(\ref{imp-ine1}) holds when $\gamma\in(0,\sqrt{2}]$. For the case $\gamma\in(\sqrt{2},\infty)$, we use (\ref{rough}) to have
	\begin{align*}
	\begin{aligned}
	 &(\gamma^2+3)\left(\frac{K_0(\gamma)}{K_1(\gamma)}\right)^2+\left(\gamma+\frac{12}{\gamma}\right)\frac{K_0(\gamma)}{K_1(\gamma)}+\frac{12}{\gamma^2}-\gamma^2-2 \geq\nonumber\\
	 &(\gamma^2+3)\left(1-\frac{1}{2\gamma}\right)^2+\left(\gamma+\frac{12}{\gamma}\right)\left(1-\frac{1}{2\gamma}\right)+\frac{12}{\gamma^2}-\gamma^2-2=\\
	&\gamma^2-\gamma+\frac{13}{4}-\frac{3}{\gamma}+\frac{3}{4\gamma^2}+\gamma-\frac{1}{2}+\frac{12}{\gamma}-\frac{6}{\gamma^2}-\gamma^2-2= \\
	&\frac{3}{4}+\frac{9}{\gamma}-\frac{21}{4\gamma^2}>0.
	\end{aligned}
	\end{align*}
	This yields (\ref{imp-ine11}). Then (\ref{imp-ine1}) follows.
	
	Now we turn to prove  (\ref{imp-ine2}). The proof is also done in two cases, $\gamma\in(0,\sqrt{2}]$ and $\gamma\in(\sqrt{2},\infty)$, separately. We use (\ref{transform}) again to rewrite (\ref{imp-ine2}) as
	\begin{equation}\label{imp-lin21}
	 \left(\frac{K_0(\gamma)}{K_1(\gamma)}\right)^3+\left(\gamma+\frac{6}{\gamma}\right)\left(\frac{K_0(\gamma)}{K_1(\gamma)}\right)^2+\frac{12}{\gamma^2}\frac{K_0(\gamma)}{K_1(\gamma)}-\gamma-\frac{4}{\gamma}+\frac{8}{\gamma^3}>0.
	\end{equation}
	We first show that (\ref{imp-lin21}) is true when $\gamma\in(0,\sqrt{2})$. For this purpose, we use (\ref{K012-2}) to have
	\begin{align*}
		 &\left(\frac{K_0(\gamma)}{K_1(\gamma)}\right)^3+\left(\gamma+\frac{6}{\gamma}\right)
\left(\frac{K_0(\gamma)}{K_1(\gamma)}\right)^2+\frac{12}{\gamma^2}\frac{K_0(\gamma)}{K_1(\gamma)}-\gamma-\frac{4}{\gamma}+\frac{8}{\gamma^3} =\nonumber\\	 &\frac{K_0(\gamma)}{K_1(\gamma)}\left[\left(\frac{K_0(\gamma)}{K_1(\gamma)}\right)^2+\frac{2}{\gamma}\frac{K_0(\gamma)}{K_1(\gamma)}-\frac{1}{3}\right]
	 +\\
&\left(\gamma+\frac{4}{\gamma}\right)\left[\left(\frac{K_0(\gamma)}{K_1(\gamma)}\right)^2
+\frac{2}{\gamma}\frac{K_0(\gamma)}{K_1(\gamma)}-\frac{1}{3}\right]+\\
	 &\frac{1}{3}\left(\gamma+\frac{4}{\gamma}\right)+\Big(\frac{1}{3}-2+\frac{4}{\gamma^2}\Big)
\frac{K_0(\gamma)}{K_1(\gamma)}-\gamma-\frac{4}{\gamma}+\frac{8}{\gamma^3}>  \\
	&\left(\frac{4}{\gamma^2}-\frac{5}{3}\right)\frac{K_0(\gamma)}{K_1(\gamma)}-\frac{2\gamma}{3}-\frac{8}{3\gamma}+\frac{8}{\gamma^3}>0,
		\end{align*}
	when $\gamma\in(0,\sqrt{2}]$. Here we have used the simple estimates: for $\gamma\in(0,\sqrt{2}]$,
	$$\frac{4}{\gamma^2}-\frac{5}{3}>0,\quad -\frac{2\gamma}{3}-\frac{8}{3\gamma}+\frac{8}{\gamma^3}\geq0.$$
	When $\gamma\in(\sqrt{2},\infty)$, similar to proof of (\ref{imp-ine1}), we use (\ref{rough}) to obtain
	\begin{align}
	\begin{aligned}
	 &\Big(\frac{K_0(\gamma)}{K_1(\gamma)}\Big)^3+\left(\gamma+\frac{6}{\gamma}\right)\left(\frac{K_0(\gamma)}{K_1(\gamma)}\right)^2
+\frac{12}{\gamma^2}\frac{K_0(\gamma)}{K_1(\gamma)}-\gamma-\frac{4}{\gamma}+\frac{8}{\gamma^3}\geq\nonumber\\
	&\left(\gamma+1+\frac{11}{2\gamma}\right)\left(1-\frac{1}{\gamma}+\frac{1}{4\gamma^2}\right)+
	\frac{12}{\gamma^2}+\frac{2}{\gamma^3}-\gamma-\frac{4}{\gamma}=\\
	&\gamma+\frac{19}{4\gamma}-\frac{21}{4\gamma^2}+\frac{11}{8\gamma^3}+
	\frac{12}{\gamma^2}+\frac{2}{\gamma^3}-\gamma-\frac{4}{\gamma} =\\
	&\frac{3}{4\gamma}+\frac{27}{4\gamma^2}+\frac{27}{8\gamma^3}>0.
	\end{aligned}
	\end{align}
\end{proof}
\begin{remark} In the proof of Proposition \ref{eep}, instead of working on the ratio $\frac{K_1(\gamma)}{K_2(\gamma)}$ directly, we transformed the ratio $\frac{K_1(\gamma)}{K_2(\gamma)}$ into the ratio $\frac{K_0(\gamma)}{K_1(\gamma)}$ and divided our proof of (\ref{imp-ine1}) and (\ref{imp-ine2}) into two cases,  $\gamma\in(0,\sqrt{2}]$ and $\gamma\in(\sqrt{2},\infty)$. The motivations for this are as follows: from the expansion of $K_j(\gamma)$ in (\ref{remainder})  and (\ref{coefficient}), we can see that it works well for $\gamma$ which is a little larger than $1$, and vice versa; the estimate of remained term $|r_{j,n}(\gamma)|$ seems more accurate when $j$ is smaller due to the coefficient $e^{[j^2-1/4]\gamma^{-1}}$ in the estimate, which increase more rapidly than the normal exponential function; when $\gamma$ is small, we can make use of the simple inequality (\ref{K012-2}) from the observation (\ref{K012-1}).
	
\end{remark}

\begin{remark}
	Estimates (\ref{imp-ine1}) and (\ref{imp-ine2}) are more accurate than (\ref{inver}) and (\ref{sou-spe}).  Then Conjecture \ref{conj1} and \ref{conj2} are correct and the main results in \cite{Speck-Strain-CMP-2011}
	can be extended to including the whole case $\gamma\in(0,\infty)$. The range for speed of sound $\sqrt{\frac{\partial p}{\partial e}\Big|_{S}(E, S)}$ is $(0,\frac{\sqrt{3}}{3})$ for $\gamma\in(0,\infty)$. Moreover, estimates (\ref{imp-ine1}) and (\ref{imp-ine2}) are of essential importance in this paper.
\end{remark}

\begin{proposition}\label{peep} Let $\gamma\in(0,\infty)$ and $K_j(\gamma) (j\geq0)$ be the functions defined in Lemma \ref{def-pro}. Then it holds that
	\begin{equation}\label{imp-inep}
	\gamma^2\left(\frac{K_0(\gamma)}{K_1(\gamma)}\right)^3+2\gamma\left(\frac{K_0(\gamma)}{K_1(\gamma)}\right)^2-(\gamma^2
	+2)\frac{K_0(\gamma)}{K_1(\gamma)}- \gamma<0.
	\end{equation}
\end{proposition}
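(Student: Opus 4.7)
\medskip

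\textbf{Proof plan.} I view the inequality as a cubic-in-$R$ condition where $R := K_0(\gamma)/K_1(\gamma)$. Since $0 < K_0 < K_1$ by Lemma~\ref{def-pro}, one has $R \in (0,1)$ for every $\gamma > 0$, and (\ref{imp-inep}) is equivalent to
\[
P(R) \;:=\; \gamma^{2} R^{3} + 2\gamma R^{2} - (\gamma^{2} + 2)\, R - \gamma \;<\; 0.
\]
The cubic $P$ has positive leading coefficient, endpoint values $P(0) = -\gamma$ and $P(1) = \gamma - 2$, and satisfies $P''(R) = 6\gamma^{2} R + 4\gamma > 0$ for $R \geq 0$; thus $P$ is strictly convex on $[0,\infty)$ with unique critical point $R^{\ast}_{+} = (-2 + \sqrt{3\gamma^{2} + 10})/(3\gamma) > 0$. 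Following the case split used in Proposition~\ref{eep}, I treat $\gamma \in (0, \sqrt{2}]$ and $\gamma \in (\sqrt{2}, \infty)$ separately.

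\medskip

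\emph{The range $\gamma \in (0, \sqrt{2}]$.} Here $\gamma < 2$ forces both $P(0) < 0$ and $P(1) < 0$. Strict convexity of $P$ on $[0,1]$ gives $P(R) \leq \max(P(0), P(1)) = \max(-\gamma,\gamma-2) < 0$ for every $R \in [0,1]$, in particular for $R \in (0,1)$. This disposes of the first range.

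\medskip

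\emph{The range $\gamma \in (\sqrt{2}, \infty)$.} Now $P(1) = \gamma - 2 \geq 0$ and the endpoint argument fails, so I exploit that $R$ stays strictly below $1$ by using the two-sided estimate (\ref{rough}) of Proposition~\ref{K01p}:
\[
R^{L} := 1 - \frac{1}{2\gamma} \;\leq\; R \;\leq\; 1 - \frac{1}{2\gamma} + \frac{3}{8\gamma^{2}} + \frac{3}{16\gamma^{3}} \;=:\; R^{U} .
\]
A short algebraic check (squaring $3\gamma + \tfrac{1}{2} \geq \sqrt{3\gamma^{2} + 10}$ reduces to $24\gamma^{2} + 12\gamma - 39 \geq 0$) shows $R^{L} > R^{\ast}_{+}$ for every $\gamma > \sqrt{2}$. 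Hence $P$ is strictly increasing on $[R^{L}, R^{U}]$ and $P(R) \leq P(R^{U})$, so it suffices to establish $P(R^{U}) < 0$ for all $\gamma > \sqrt{2}$. Writing $R^{U} = (16\gamma^{3} - 8\gamma^{2} + 6\gamma + 3)/(16\gamma^{3})$ and clearing denominators reduces this to a single polynomial-in-$\gamma$ inequality; a Taylor expansion of $P$ around $R = 1$ yields $P(R^{U}) = -\tfrac{5}{2} + O(1/\gamma)$ as $\gamma \to \infty$, and the resulting polynomial is shown negative by the same monotonicity-plus-leading-coefficient strategy already used in the proofs of Propositions~\ref{K01p} and~\ref{mono-velo}.

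\medskip

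The delicate step is the second range, where the $\gamma^{2}$-leading contributions to $P(R^{U})$ cancel exactly and only the $O(1)$ correction determines the sign; the argument therefore depends crucially on retaining the $3/(8\gamma^{2})$ term in the upper bound $R^{U}$, which is precisely what produces the limiting constant $-\tfrac{5}{2}$. The ensuing polynomial verification is elementary but lengthy, and is handled in the same computational spirit as the estimates in Appendix~2 and Proposition~\ref{mono-velo}.
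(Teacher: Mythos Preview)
Your approach is sound and close in spirit to the paper's, but you are making the second step harder than necessary and leaving the actual verification unfinished.

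Two concrete points. First, your convexity argument for the small-$\gamma$ range works whenever $P(1)=\gamma-2<0$, so the natural split is at $\gamma=2$, not $\sqrt{2}$; the paper splits there. Second, for $\gamma>2$ the paper avoids your full substitution of $R^{U}$ into the cubic (which, after clearing denominators, yields a degree-nine polynomial you then defer). Instead it bounds only two of the three factors of $R$ in $\gamma^{2}R^{3}$ and one of the two in $2\gamma R^{2}$, using the simpler upper bound $R\le 1-\tfrac{1}{2\gamma}+\tfrac{1}{2\gamma^{2}}$ (which follows from (\ref{rough}) for $\gamma\ge 2$). This keeps one explicit factor of $R$ in the estimate, giving
\[
P(R)\;\le\; R\Big(\gamma-\tfrac{3}{4}+\tfrac{1}{2\gamma}+\tfrac{1}{4\gamma^{2}}\Big)-\gamma,
\]
and then the single remaining use of $R<1$ finishes the argument in one line. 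Your monotonicity observation ($P$ increasing past $R^{\ast}_{+}$) is correct, and your asymptotic $P(R^{U})=-\tfrac{5}{2}+O(1/\gamma)$ checks out, but ``the ensuing polynomial verification is elementary but lengthy'' is not a proof; as written the step is incomplete. The paper's partial-substitution trick eliminates that computation entirely.
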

\begin{proof}
	For $\gamma\leq2,$ it is straightforward to get (\ref{imp-inep}) by the fact $\frac{K_0(\gamma)}{K_1(\gamma)}<1$. For the case $\gamma>2$, we use (\ref{rough}) to have
	$$\frac{K_0(\gamma)}{K_1(\gamma)}\leq 1-\frac{1}{2\gamma}+\frac{1}{2\gamma^2},$$
	and
	\begin{align*}
	\begin{aligned}
	&\gamma^2\left(\frac{K_0(\gamma)}{K_1(\gamma)}\right)^2+2\gamma\left(\frac{K_0(\gamma)}{K_1(\gamma)}\right)^2-(\gamma^2
	+2)\frac{K_0(\gamma)}{K_1(\gamma)}- \gamma\leq\\
	& \gamma\frac{K_0(\gamma)}{K_1(\gamma)}\Big[\gamma\Big(1-\frac{1}{2\gamma}
	+\frac{1}{2\gamma^2}\Big)^2+2\Big(1-\frac{1}{2\gamma}+\frac{1}{2\gamma^2}\Big)\Big]-\\
&(\gamma^2
	+2)\frac{K_1(\gamma)}{K_2(\gamma)}- \gamma\leq\\
	&\gamma\frac{K_0(\gamma)}{K_1(\gamma)}\Big[\gamma\Big(1-\frac{1}{\gamma}
	+\frac{5}{4\gamma^2}-\frac{1}{2\gamma^3}+\frac{1}{4\gamma^4}\Big)+2-\frac{1}{\gamma}+\frac{1}{\gamma^2}\Big)\Big]-\\
&(\gamma^2
	+2)\frac{K_0(\gamma)}{K_1(\gamma)}- \gamma\leq\\
	&\frac{K_0(\gamma)}{K_1(\gamma)}\Big(\gamma-\frac{3}{4}+\frac{1}{2\gamma}+\frac{1}{4\gamma^2}\Big)-\gamma<0
	.
	\end{aligned}
	\end{align*}
\end{proof}
%
%

%

%
%
\section*{Appendix 4: Proof of Proposition \ref{genuine} for the genuine nonlinearity}
$\newline$
{\it{Proof of Proposition \ref{genuine}:}}   We only need to prove (\ref{negativity}). If this is done, (\ref{genuine0}) follows immediately from (\ref{genui}) and (\ref{negativity}).
	From (\ref{pz}), we can further obtain
	\begin{align}\label{pp}
		e_{pp}=&\frac{e_{p}}{p}-\frac{e}{p^2}+\frac{1}{\partial_\gamma p}\partial_\gamma\left(\frac{p}{\partial_\gamma p}\frac{d}{d\gamma}\left(\gamma\frac{K_1(\gamma)}{K_2(\gamma)}\right)\right)=\nonumber\\
	 &\frac{1}{p}\frac{\gamma\left(\frac{K_1(\gamma)}{K_2(\gamma)}\right)^2
+4\frac{K_1(\gamma)}{K_2(\gamma)}-\gamma}{\gamma\left(\frac{K_1(\gamma)}{K_2(\gamma)}\right)^2+3\frac{K_1(\gamma)}{K_2(\gamma)}
-\gamma-\frac{4}{\gamma}}	+\nonumber\\
	&\frac{1}{p}\frac{1}{\gamma\left(\frac{K_1(\gamma)}{K_2(\gamma)}\right)^2+3\frac{K_1(\gamma)}{K_2(\gamma)}-\gamma-\frac{4}{\gamma}}\times\\
&	\frac{d}{d\gamma}\Big(\frac{\gamma\Big(\frac{K_1(\gamma)}{K_2(\gamma)}\Big)^2
		 +4\frac{K_1(\gamma)}{K_2(\gamma)}-\gamma}{\gamma\Big(\frac{K_1(\gamma)}{K_2(\gamma)}\Big)^2+3\frac{K_1(\gamma)}{K_2(\gamma)}
-\gamma-\frac{4}{\gamma}}\Big).\nonumber
		\end{align}
	Then we use (\ref{pz}) and (\ref{pp}) to have
	\begin{align}\label{pp1}
	\begin{aligned}
	&(e+p)e_{pp}-2e_{p}(e_{p}-1)=\\
	&e_{p}(-2e_{p}+3)+\frac{e}{p}\Big(e_{p}-\frac{e}{p}-1\Big)
	+\frac{e+p}{\partial_\gamma p}\partial_\gamma\Big(\frac{p}{\partial_\gamma p}\frac{d}{d\gamma}\Big(\gamma\frac{K_1(\gamma)}{K_2(\gamma)}\Big)\Big)<
	\\
	 &-9+\Big(\gamma\frac{K_1(\gamma)}{K_2(\gamma)}+3\Big)\frac{\frac{K_1(\gamma)}{K_2(\gamma)}+\frac{4}{\gamma}}
{\gamma\Big(\frac{K_1(\gamma)}{K_2(\gamma)}\Big)^2+3\frac{K_1(\gamma)}{K_2(\gamma)}-\gamma-\frac{4}{\gamma}}+\\
	 &\Big(\gamma\frac{K_1(\gamma)}{K_2(\gamma)}+4\Big)\Bigg[\frac{\frac{1}{\gamma}}{\gamma\Big(\frac{K_1(\gamma)}{K_2(\gamma)}\Big)^2
+3\frac{K_1(\gamma)}{K_2(\gamma)}-\gamma-\frac{4}{\gamma}}-\Big(\frac{K_1(\gamma)}{K_2(\gamma)}+\frac{4}{\gamma}\Big)\times\\
	 &\frac{\Big[2\gamma\Big(\frac{K_1(\gamma)}{K_2(\gamma)}\Big)^3+10\Big(\frac{K_1(\gamma)}{K_2(\gamma)}\Big)^2
		+\Big(\frac{9}{\gamma}-2\gamma\Big)\frac{K_1(\gamma)}{K_2(\gamma)}-4+\frac{4}{\gamma^2}\Big]}
	{\Big(\gamma\Big(\frac{K_1(\gamma)}{K_2(\gamma)}\Big)^2+3\frac{K_1(\gamma)}{K_2(\gamma)}-\gamma-\frac{4}{\gamma}\Big)^3}\Bigg]=\\
	&-9+\frac{\left(\gamma\frac{K_1(\gamma)}{K_2(\gamma)}+4\right)\left(\frac{K_1(\gamma)}{K_2(\gamma)}+\frac{4}{\gamma}\right)}
	{\left(\gamma\left(\frac{K_1(\gamma)}{K_2(\gamma)}\right)^2+3\frac{K_1(\gamma)}{K_2(\gamma)}-\gamma-\frac{4}{\gamma}\right)^3}\times\mathcal {I}_1(\gamma)
	\Big],
	\end{aligned}
	\end{align}
	where
	\begin{align}
	\begin{aligned}
	\mathcal {I}_1(\gamma)=&\gamma^2\Big(\frac{K_1(\gamma)}{K_2(\gamma)}\Big)^4+4\gamma\Big(\frac{K_1(\gamma)}{K_2(\gamma)}\Big)^3
-(2\gamma^2+9)\left(\frac{K_1(\gamma)}{K_2(\gamma)}\right)^2-\nonumber\\
	&\Big(4\gamma+\frac{33}{\gamma}\Big)\frac{K_1(\gamma)}{K_2(\gamma)}+\gamma^2+12+\frac{12}{\gamma^2}.
	\end{aligned}
	\end{align}
	
	Noting
	$$\gamma\left(\frac{K_1(\gamma)}{K_2(\gamma)}\right)^2+3\frac{K_1(\gamma)}{K_2(\gamma)}-\gamma-\frac{4}{\gamma}<0,$$
	in order to show (\ref{negativity}), one suffices to prove
	\begin{equation}\label{p10}
	\begin{split}
	\mathcal {I}_1(\gamma)>0,
	\end{split}
	\end{equation}
	in (\ref{pp1}). By using $K_2(\gamma)=\frac{2}{\gamma}K_1(\gamma)+K_0(\gamma)$, we can rewrite (\ref{p10}) as
	\begin{align}\label{p01}
	\begin{aligned}
	\mathcal {I}_2(\gamma)=&\Big(\gamma^2+12+\frac{12}{\gamma^2}\Big)\Big(\frac{K_0(\gamma)}{K_1(\gamma)}\Big)^4
+\Big(4\gamma+\frac{63}{\gamma}+\frac{96}{\gamma^3}\Big)\Big(\frac{K_0(\gamma)}{K_1(\gamma)}\Big)^3+
	\\
	&\Big(-2\gamma^2-9+\frac{90}{\gamma^2}+\frac{288}{\gamma^4}\Big)\Big(\frac{K_0(\gamma)}{K_1(\gamma)}\Big)^2+\\
	 &+\Big(-4\gamma-\frac{52}{\gamma}-\frac{12}{\gamma^3}+\frac{384}{\gamma^5}\Big)\frac{K_0(\gamma)}{K_1(\gamma)}+\\
&\gamma^2-\frac{52}{\gamma^2}-\frac{72}{\gamma^4}+\frac{192}{\gamma^6}>0.
	\end{aligned}
	\end{align}
	Now we come to prove (\ref{p01}). It is easy to find that (\ref{p01}) holds for $\gamma\in (0, r_0]$.
	
	Now  we turn to show that (\ref{p01}) holds for $\gamma\in (r_0, \infty)$. Rewrite $\mathcal {I}_2(\gamma)$ as
	\begin{align*}
	\begin{aligned}
	\mathcal {I}_2(\gamma)=&\Big(\frac{K_0(\gamma)}{K_1(\gamma)}-1+\frac{1}{2\gamma}\Big)\Big[\Big(\gamma^2+12+\frac{12}{\gamma^2}\Big)\Big(\frac{K_0(\gamma)}{K_1(\gamma)}\Big)^3\\
	&+\Big(\gamma^2+\frac{7\gamma}{2}+
	12+\frac{57}{\gamma}+\frac{12}{\gamma^2}+\frac{90}{\gamma^3}\Big)\Big(\frac{K_0(\gamma)}{K_1(\gamma)}\Big)^2\\
	 &+\Big(-\gamma^2+3\gamma+\frac{5}{4}+\frac{51}{\gamma}+\frac{147}{2\gamma^2}+{84}{\gamma^3}+\frac{243}{\gamma^4}\Big)\frac{K_0(\gamma)}{K_1(\gamma)}\\
	&
	 -\gamma^2-\frac{\gamma}{2}-\frac{1}{4}-\frac{13}{8\gamma}+\frac{48}{\gamma^2}+\frac{141}{4\gamma^3}+\frac{201}{\gamma^4}+\frac{525}{2\gamma^5}\Big]\\
	&-\frac{3}{2\gamma}-\frac{51}{16\gamma^2}+\frac{45}{4\gamma^3}+\frac{891}{8\gamma^4}+\frac{162}{\gamma^5}+\frac{243}{4\gamma^6}=\\
	 &\Big(\frac{K_0(\gamma)}{K_1(\gamma)}-1+\frac{1}{2\gamma}\Big)\Big\{\Big(\frac{K_0(\gamma)}{K_1(\gamma)}-1+\frac{1}{2\gamma}\Big)\times\\
&\Big[\Big(\gamma^2+12+\frac{12}{\gamma^2}\Big)
	\Big(\frac{K_0(\gamma)}{K_1(\gamma)}\Big)^2+\\
	&\Big(2\gamma^2+3\gamma+24+\frac{51}{\gamma}+\frac{24}{\gamma^2}+\frac{84}{\gamma^3}\Big)\frac{K_0(\gamma)}{K_1(\gamma)}+\\
	&\gamma^2+5\gamma+\frac{95}{4}+\frac{90}{\gamma}+\frac{72}{\gamma^2}+\frac{156}{\gamma^3}+\frac{201}{\gamma^4}\Big]+\\
	&4\gamma+21+\frac{153}{2\gamma}+\frac{75}{\gamma^2}+\frac{621}{4\gamma^3}+\frac{324}{\gamma^4}+\frac{162}{\gamma^5}\Big\}-\\
	&\frac{3}{2\gamma}-\frac{51}{16\gamma^2}+\frac{45}{4\gamma^3}+\frac{891}{8\gamma^4}+\frac{162}{\gamma^5}+\frac{243}{4\gamma^6}
	\end{aligned}
	\end{align*}
	Note that
	$$-\frac{3}{2\gamma}-\frac{51}{16\gamma^2}+\frac{45}{4\gamma^3}+\frac{881}{8\gamma^4}+\frac{162}{\gamma^5}+\frac{243}{4\gamma^6}>0, \quad\mbox{for}~~\gamma\leq4.$$
	Then (\ref{p01}) holds for $\gamma\in (0, 4]$.
	
 Finally we show (\ref{p01}) for $\gamma>4$. For the case $\gamma>4$, we use (\ref{acurate}) to have
	$$\frac{K_0(\gamma)}{K_1(\gamma)}\geq 1-\frac{1}{2\gamma}+\frac{3}{8\gamma^2}-\frac{3}{8\gamma^3}.$$
	Moreover, we have
	\begin{align*}
	\begin{aligned}
	&\Big(4\gamma+21+\frac{153}{2\gamma}\Big)\Big(\frac{3}{8\gamma^2}-\frac{3}{8\gamma^3}\Big)
	-\frac{3}{2\gamma}-\frac{51}{16\gamma^2}+\\
&\frac{45}{4\gamma^3}+\frac{891}{8\gamma^4}+\frac{162}{\gamma^5}+\frac{243}{4\gamma^6}=\\
	&\Big(\frac{513}{16}+\frac{45}{4}\Big)\frac{1}{\gamma^3}
	+\frac{1323}{16\gamma^4}+\frac{162}{\gamma^5}+\frac{243}{4\gamma^6}>0
	\end{aligned}
	\end{align*}
Then we prove (\ref{p01}) for $\gamma>4$.
\bigskip

%
%

{\small
\textbf{Acknowledgments:} The work of T. Ruggeri was supported by GNFM (INdAM),
 the work of Q. H. Xiao was supported by grants from Youth Innovation Promotion Association and the National Natural Science Foundation of China under contract 11871469  and the work of H. J. Zhao was supported in part by the grants from National Natural Science Foundation of China under contracts 11671309 and 11731008.}

\end{document}